\documentclass{article} 
\usepackage{iclr2027_conference,times}

\usepackage{amsmath,amsfonts,bm}

\def\eqref#1{equation~\ref{#1}}

\def\1{\bm{1}}

\DeclareMathAlphabet{\mathsfit}{\encodingdefault}{\sfdefault}{m}{sl}
\SetMathAlphabet{\mathsfit}{bold}{\encodingdefault}{\sfdefault}{bx}{n}

\newcommand{\R}{\mathbb{R}}

\usepackage{hyperref}
\usepackage{url}
\usepackage{amsthm}
\usepackage{algorithm}
\usepackage{nicefrac}
\usepackage[noend]{algpseudocode}
\usepackage{mathtools}
\usepackage{amssymb}

\theoremstyle{plain}
\newtheorem{theorem}{Theorem}[section]
\theoremstyle{definition}
\newtheorem{definition}[theorem]{Definition}

\newtheorem{lemma}[theorem]{Lemma}
\newtheorem{corollary}[theorem]{Corollary}

\newtheorem{fact}[theorem]{Fact}

\newtheorem*{remark}{Remark}

\newcommand{\bmnote}[1]{\textcolor{red}{\bf{#1}}}

\newcommand{\remove}[1]{}

\title{Differentially Private Approximation of the John Ellipsoid}

\author{Bar Mahpud \\
Faculty of Engineering \\
Bar-Ilan University \\
Israel \\
\texttt{mahpudb@biu.ac.il} \\
\And
Or Sheffet \\
Faculty of Engineering \\
Bar-Ilan University \\
Israel \\
\texttt{or.sheffet@biu.ac.il} \\
\And
Daniel Omer \\
Einstein Institute of Mathematics \\
The Hebrew University of Jerusalem \\
Israel \\
\texttt{daniel.omer@mail.huji.ac.il}
}

\iclrpreprint
\begin{document}

\maketitle

\begin{abstract}
We study the problem of approximating the John ellipsoid (JE) of a given (centrally symmetric) polytope of $n$ constraints in a Euclidean space under differential privacy (DP). We give the first differentially private algorithm for this problem under the standard model, where neighboring datasets may differ arbitrarily in one a single constraint. Our work also extends to the complimentary problem of Minimum Enclosing Ellipsoid of $n$ points in the Euclidean space.

Our approach is based on the recent non-private multiplicative-weights algorithm of~\cite{pmlr-v99-cohen19a}. First we introduce a non-private generalization of the Cohen et al algorithm, yielding a $(1+\gamma)$-approximation of the JE problem while violating at most $\kappa n$ constraints in $O(\log(1/\kappa)/\gamma)$ iterations. This variant works by projecting the intermediate weights assigned to the constraints onto the set of $\kappa$-dense distributions, similarly to~\cite{bun2020efficientnoisetolerantprivatelearning}.

We then design a $\rho$-zCDP variant of this algorithm by adding Gaussian noise to the weighted covariance matrix aggregated in each step of the algorithm. 
Under a mild goodness assumption on the data we can assert that the resulting noisy matrix is close to the true matrix, thereby achieving essentially the same guarantee as the non-private algorithm provided sufficiently many input points. Thus our method achieves an efficient DP poly-time algorithm under concrete sample complexity bounds.
\end{abstract}

\section{Introduction}
\label{sec:intro}
The John ellipsoid is one of the most fundamental geometric objects associated with a convex body. Given a convex polytope $K \subseteq \mathbb{R}^d$, the John Ellipsoid (JE) is defined as the maximum-volume ellipsoid contained in $K$. A classical theorem of~\cite{john1948extremum} shows that this ellipsoid provides a canonical geometric approximation of the body: if $E \subseteq K$ is the John ellipsoid, then $K \subseteq d \cdot E$, where $d \cdot E$ denotes a dilation of $E$ by a factor of $d$ about its center. Moreover, when $K$ is centrally symmetric (i.e. $\forall x \in K$ its antipode $-x$ is also in $K$) and the JE must be centered at the origin, this guarantee improves to $K \subseteq \sqrt{d}\, E$.

These structural properties make the John ellipsoid a central tool in high-dimensional geometry and its algorithmic applications. These include sampling and numerical integration~\citep{Vempala_Mathematical_2005, cdwy18}, linear bandit problems~\citep{bcbk12, JMLR:v17:hazan16a}, linear programming~\citep{ls15}, and differential privacy bounds~\citep{ntz13}.
Beyond these areas, the John ellipsoid also plays an important role in experimental design, a classical topic in statistics~\citep{atw69}. In particular, the D-optimal design problem seeks to maximize the determinant of the Fisher information matrix~\citep{Kiefer_Wolfowitz_1960, atw69}, which is known to be equivalent to computing the John ellipsoid of a symmetric polytope (see, e.g.,~\cite{tod16}). D-optimal design has attracted renewed interest in the machine learning community~\citep{pmlr-v70-allen-zhu17e, JMLR:v18:17-175, lfn18}.

As is common, we consider in this work symmetric polytopes of the form $K = \{x \in \mathbb{R}^d : -\mathbf{1} \le Px \le \mathbf{1}\}$, where the $n\times d$ matrix $P$ is composed of $n$ constraints each denote by $x_i$.
In many applications, these $n$ constraints may contain sensitive information—such as locations, personal attributes, or high-dimensional feature vectors—motivating the need for private algorithms. Differential privacy~\citep{dwork2006calibrating} requires that the output of an algorithm is stable under the modification of a single $x_i$ in $P$, thereby protecting individual-level information. However, this requirement is particularly challenging in geometric settings: core quantities such as covariance matrices and enclosing or inscribed ellipsoids can be highly sensitive, and in some instances a single point may significantly alter the optimal solution. This is particularly true for the JE problem, where by adding to $P$ a new constraint $x_{n+1}$ we may obtain a new polytope $K'$ of significantly smaller volume than $K$, and therefore a significantly smaller JE.
%
%
This inherent instability makes the direct computation of the John ellipsoid incompatible with privacy constraints. Consequently, we adopt a \emph{bi-criteria formulation}: we allow a multiplicative approximation of $1+\gamma$ to the JE volume objective and we also allow a relaxation of the containment requirement~-- it may violate up to a $\kappa$-fraction of the constraints.

\paragraph{Our contribution.}
In this work, we present a new framework for approximating the John ellipsoid under differential privacy, based on multiplicative-weights algorithm and KL-projection. Our starting point is the non-private algorithm of~\cite{pmlr-v99-cohen19a}, which uses a multiplicative update over weights and returns, after \(O(\log(n/d)/\gamma)\) iterations, a $(1+\gamma)$-approximation of the input's JE. Our design of a DP-equivalent of this algorithm is composed of two stages.

\medskip

\emph{The first step} is to introduce a (non-private) generalization of the algorithm of Cohen et al, in which the weights assigned to constraints in each iteration must be $\kappa$-dense~--- namely, each weight is upper bounded by $\frac 1 \kappa \cdot \frac d n$. We do so by projecting the true weights on the set of $\kappa$-dense weights, a projection that is simple to compute. In Section~\ref{sec:non-private-alg} we analyze the resulting algorithm and present the following guarantee. 

\medskip

\begin{theorem}
Let $P=\{x_1,\dots,x_n\}\subseteq\mathbb{R}^d$.
For any $\gamma,\kappa\in(0,1)$, after $T = O\!\left(\frac{\log (1/\kappa)}{\gamma}\right)$ iterations, Algorithm~\ref{alg:non-private} outputs a positive definite matrix \(M\), defining the ellipsoid
\[
E
:=
\{x\in\mathbb R^d:x^\top M^{-1}x\le1\},
\]
such that: 
    there exists a subset $S \subseteq [n]$ with $|S|\ge (1-\kappa)n$ such that
    \[
    e^{-\gamma/2}E \subseteq K_S,
    \]
    where $K_S := \{x\in \R^d : |x_i^\top x|\le 1 \ \forall i\in S\}$; and, for the original polytope we have
    \[
    K \subseteq \sqrt d\, E,
    \]
    where $K:=\{x\in\mathbb R^d: |x_i^\top x|\le 1 \text{ for all } i\in[n]\}$;
    and moreover, we have
    \[
    \operatorname{vol}(e^{-\gamma/2}E)
    \;\ge\;
    e^{-d\gamma/2}\operatorname{vol}(E^*),
    \]
    where $E^*$ is the John ellipsoid of $K$.
\end{theorem}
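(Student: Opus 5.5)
The plan is to recast the iteration as multiplicative weights with KL-projection and argue via a regret bound, following the template of \cite{bun2020efficientnoisetolerantprivatelearning}. The output is $E$ defined by $M=Q^{-1}$, and the proof rests on three facts, checked in order: weight sum, volume, then containment.

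I assume Algorithm~\ref{alg:non-private} works as follows; these details are not visible at this point in the paper, so the plan depends on them. It maintains weights $w^{(t)}\in\R^n_{\ge 0}$ with $\sum_i w^{(t)}_i=d$, starting from the uniform vector $w^{(1)}_i=d/n$. At step $t$ it forms $Q_t=\sum_i w^{(t)}_i x_i x_i^\top$ and computes the leverage-type scores $\sigma_i(w^{(t)})=x_i^\top Q_t^{-1}x_i$. It then updates $w_i\leftarrow w_i\,\sigma_i(w^{(t)})$, i.e.\ a multiplicative-weights step with loss vector $\ell^{(t)}_i=\log\sigma_i(w^{(t)})$ and step size $1$, in the style of \cite{pmlr-v99-cohen19a}. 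Finally it KL-projects onto the set of $\kappa$-dense vectors $\{w:\sum_i w_i=d,\ 0\le w_i\le \tfrac{d}{\kappa n}\}$. The output is the average $\bar w=\frac1T\sum_t w^{(t)}$ and $Q=\sum_i\bar w_i x_ix_i^\top$, $M=Q^{-1}$.

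\emph{Weight sum and $K\subseteq\sqrt d E$.} The update itself preserves $\sum_i w_i=\operatorname{tr}(Q_t^{-1}Q_t)=d$, the projection preserves the sum by definition of the feasible set, and so the average $\bar w$ also sums to $d$. Then for $x\in K$,
\[ x^\top Q x=\sum_i\bar w_i (x_i^\top x)^2\le \sum_i\bar w_i=d, \]
so $x\in\sqrt d E$.

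\emph{Volume.} Let $E^*=\{x:x^\top Q^*x\le1\}$ be the John ellipsoid. Containment $E^*\subseteq K$ is equivalent to $x_i^\top (Q^*)^{-1}x_i\le 1$ for all $i$. Weak duality (AM--GM on eigenvalues) gives
\[ \log\det Q-\log\det Q^* = \log\det\big((Q^*)^{-1/2}Q(Q^*)^{-1/2}\big)\le d\log\tfrac{\operatorname{tr}((Q^*)^{-1}Q)}{d}= d\log\tfrac{\sum_i\bar w_i\, x_i^\top (Q^*)^{-1}x_i}{d}\le 0. \]
Hence $\operatorname{vol}(E)\ge\operatorname{vol}(E^*)$, and scaling by $e^{-\gamma/2}$ costs exactly a factor $e^{-d\gamma/2}$.

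\emph{Containment on $S$ (main step).} We have $e^{-\gamma/2}E\subseteq\{|x_i^\top x|\le1\}$ iff $\sigma_i(\bar w)=x_i^\top Q^{-1}x_i\le e^{\gamma}$. Let $B=\{i:\sigma_i(\bar w)>e^{\gamma}\}$ and suppose for contradiction $|B|>\kappa n$. Take the comparator $u$ that is uniform on $B$ with total mass $d$; it is $\kappa$-dense, and $\mathrm{KL}(u\|w^{(1)})\le d\log(1/\kappa)$.

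The standard regret bound for MW with Bregman projections onto a convex set containing $u$ gives
\[ \sum_t\langle u,\ell^{(t)}\rangle-\sum_t\langle w^{(t)},\ell^{(t)}\rangle\le \mathrm{KL}(u\|w^{(1)})+\text{(second-order term)}. \]
I must check the second-order term is absent here: the unprojected update is exactly $w\cdot e^{\ell}$ with the sum preserved, so a log-partition argument should give the clean bound.

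The learner's side is nonpositive: by Jensen and $\sum_i w_i\sigma_i=\operatorname{tr}(I)=d$, we get $\langle w^{(t)},\ell^{(t)}\rangle\le d\log 1=0$.

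For the comparator's side, use the convexity of $Q\mapsto\log(x^\top Q^{-1}x)$ (the lemma used by Cohen et al.):
\[ \frac1T\sum_t\log\sigma_i(w^{(t)})\ge\log\sigma_i(\bar w)>\gamma\quad\text{for } i\in B, \]
so $\sum_t\langle u,\ell^{(t)}\rangle> Td\gamma$. Combining the two sides yields $Td\gamma< d\log(1/\kappa)$, which contradicts $T\ge\log(1/\kappa)/\gamma$. Therefore $|B|\le\kappa n$, and $S=[n]\setminus B$ works.

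The main obstacle is establishing this clean regret inequality, with no quadratic loss term, for the multiplicative step followed by KL-projection when the losses $\log\sigma_i$ may be large. If the clean form fails, I would fall back to bounding the potential $\mathrm{KL}(u\|w^{(t)})$ directly: use the generalized Pythagorean inequality for the projection, and compute the potential drop of the exact update via $\log\sum_i w_ie^{\ell_i}=\log d$. Constants in $T$ may need adjusting.
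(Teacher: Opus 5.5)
You have proved the theorem for a different update rule than the one in Algorithm~\ref{alg:non-private}. You assumed \emph{eager} projection, $w^{(t+1)}=\Pi_{\Gamma_\kappa}\bigl(w^{(t)}\odot\sigma(w^{(t)})\bigr)$, in which the vector being multiplied has already been projected. The paper's algorithm uses \emph{lazy} projection. It keeps unprojected weights $w^t$, forms $\mu^t=\Pi_{\Gamma_\kappa}(w^t)$ only to compute $M(\mu^t)$, and sets $w^{t+1}_i=w^t_i\,x_i^\top M(\mu^t)x_i$. Since $\Pi_{\Gamma_\kappa}(a)_i=\min\{\tfrac{d}{\kappa n},c\,a_i\}$ forgets how far a capped coordinate was above the cap, the two rules produce different iterates once caps are active. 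For the lazy rule your potential argument fails. On the unprojected weights,
\[
D_{\mathrm{KL}}(u\Vert w^{t+1})=D_{\mathrm{KL}}(u\Vert w^{t})-\langle u,\phi(\mu^t)\rangle+\sum_i w_i^t\bigl(\psi_i(\mu^t)-1\bigr),
\]
and the last term does not vanish; it can be large and positive. The reason is that the identity $\sum_i\mu_i\psi_i(\mu)=d$ holds for $\mu=\mu^t$, but not with $w^t$ in its place. On the projected weights there is no one-step recursion at all, because $w^{t+1}\neq\mu^t\odot\psi(\mu^t)$.

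The missing idea is the lazy-projection potential $F(\theta):=\max_{\mu\in\Gamma_\kappa}\{\langle\mu,\theta\rangle-D_{\mathrm{KL}}(\mu\Vert\mu^0)\}$, evaluated at $\theta^t=\log(w^t/\mu^0)$; its maximizer is exactly $\mu^t$. The Pythagorean inequality gives $\langle\nu,\theta^t\rangle-D_{\mathrm{KL}}(\nu\Vert\mu^0)\le F(\theta^t)-D_{\mathrm{KL}}(\nu\Vert\mu^t)$ for every $\nu\in\Gamma_\kappa$. Hence, by the log-sum inequality, $F(\theta^{t+1})\le F(\theta^t)+\max_{\mu\in\Gamma_\kappa}\{\langle\mu,\phi(\mu^t)\rangle-D_{\mathrm{KL}}(\mu\Vert\mu^t)\}\le F(\theta^t)+d\log\bigl(\tfrac1d\sum_i\mu_i^t\psi_i(\mu^t)\bigr)=F(\theta^t)$, so the leverage identity is only ever applied to $\mu^t$. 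Since $F(\theta^0)=0$, you get $\sum_t\langle q,\phi(\mu^t)\rangle=\langle q,\theta^T\rangle\le D_{\mathrm{KL}}(q\Vert\mu^0)$, which is exactly what your containment step consumes.

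With this replacement, the rest of your proposal is sound and follows the paper:
the average has mass $d$, so $K\subseteq\sqrt d\,E$; and the uniform comparator on the violating set, combined with Jensen via convexity of $\phi_i$, gives $|B|\le\kappa n$. Your AM--GM weak-duality argument for the volume is a valid and more self-contained alternative to the paper's Lagrangian-duality route. For the eager variant you assumed, your fallback is already a complete proof: the Pythagorean inequality plus $\sum_i w_i^{(t)}\sigma_i(w^{(t)})=d$, telescoping $D_{\mathrm{KL}}(u\Vert w^{(t)})$.

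One smaller correction: your ``clean'' regret form, with $\langle w^{(t)},\ell^{(t)}\rangle$ on the learner side, is false in general. Take $T=1$ and uniform $w^{(1)}$. The comparator $u_i=w_i^{(1)}\sigma_i$ lies in $\Gamma_\kappa$ when all $\sigma_i\le1/\kappa$ and satisfies $\langle u,\ell\rangle=D_{\mathrm{KL}}(u\Vert w^{(1)})$, yet $\langle w^{(1)},\ell\rangle<0$ unless the $\sigma_i$ are all equal. The correct learner term is the log-partition $d\log(\frac1d\sum_i w_i^{(t)}e^{\ell_i^{(t)}})=0$. So skip the Jensen step and go directly to the potential argument.
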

Our ellipsoid is $\gamma$-close to the John Ellipsoid in any direction, 
(just like as the approximation of Cohen et al). If we
wish to get $\gamma^*$-close approximation in volume, we simply set $\gamma = \gamma^*/d$, resulting in an algorithm that runs in $O\!\left(\frac{d\log (1/\kappa)}{\gamma^*}\right)$ many iterations.


\medskip

\emph{The second step} is to present a zCDP analogue of the above-mentioned algorithm. We do simply by applying the Gaussian mechanism to the weighted covariance matrix aggregated in each step of the algorithm. Leveraging on the fact that the weight of each $x_i$ is upper bounded due to $\kappa$-density, it is easy to upper bound the sensitivity of the weighted covariance matrix and scale the Gaussian noise accordingly. However, in order to assert that the added noise does not interfere with the weights assigned to the constraints we need to rely on a goodness assumption.

\begin{definition}[$(\kappa,\tau)$-good input]\label{def:good-input}
	We say that $P=\{x_1,\ldots,x_n\}\subseteq\mathbb R^d$ is
	$(\kappa,\tau)$-good if for every unit vector $v\in\mathbb S^{d-1}$, defining
	\[
	\mathrm{BAD}_v
	:=
	\left\{
	i\in[n] : \langle x_i,v\rangle^2 < \frac{2\tau}{d}
	\right\},
	\]
	we have $|\mathrm{BAD}_v| < \frac{\kappa n}{2}$.
\end{definition}
For example, when the $n$ constraints are drawn from the uniform distribution over the unit-sphere, then the goodness assumption holds for constant values of $\kappa$ and $\tau$.
\ificlrfinal
\else
  \ificlrpreprint
  \else
    \pagebreak
  \fi
\fi
For our purposes, the \((\kappa,\tau)\)-goodness assumption 
ensures that every $\kappa$-dense measure induces a well-conditioned covariance matrix, which is needed to control the effect of Gaussian noise in the private algorithm (See Lemma~\ref{lem:good-assumption}). 
Altogether, in Section~\ref{sec:private-alg} we introduce and analyze our $\rho$-zCDP algorithm, obtaining the following guarantee.


\begin{theorem}
Let $P=\{x_1,\dots,x_n\}\subseteq\mathbb{R}^d$ be $(\kappa, \tau)$-good input with range $R$.
For any $\rho$ and $\gamma, \beta\in(0,1)$, Algorithm~\ref{alg:private-alg} is $\rho$-zCDP, and, after $T = O\!\left(\frac{\log (1/\kappa)}{\gamma}\right)$ iterations, supposing that
\begin{align*}
n \ge \frac{32dR^2(\sqrt{d}+\sqrt{2\log(\frac{T+1}{\beta})})}{\kappa\gamma\tau\sqrt{2\rho/(T+1)}} = O\left(\frac{dR^2(\sqrt{d}+\sqrt{\log(\frac{\log (1/\kappa)}{\gamma\beta})})\sqrt{\log (1/\kappa)}}{\kappa\gamma^{1.5}\tau\sqrt{\rho}}\right)
\end{align*}
it outputs with probability $\ge 1-\beta$ a positive definite matrix \(M\), defining the ellipsoid
\[
E
:=
\{x\in\mathbb R^d:x^\top M^{-1}x\le1\},
\]
such that:
\begin{itemize}   
    \item there exists a subset $S \subseteq [n]$ with $|S|\ge (1-\kappa)n$ such that
    \[
    \sqrt{(1-\gamma/8)}e^{-\gamma/2}E \subseteq K_S,
    \]
    where $K_S := \{x\in \R^d : |x_i^\top x|\le 1 \ \forall i\in S\}$; and, for the original polytope we have
    \[
    K \subseteq \sqrt{d(1+\gamma/8)}\, E,
    \]
    where $K:=\{x\in\mathbb R^d: |x_i^\top x|\le 1 \text{ for all } i\in[n]\}$.
    
    \item moreover, we have
    \[
    \operatorname{vol}(\sqrt{(1-\gamma/8)}e^{-\gamma/2}E)
    \;\ge\;
    e^{-9d\gamma/14}\operatorname{vol}(E^*),
    \]
    where $E^*$ is the John ellipsoid of $K$.
\end{itemize}
\end{theorem}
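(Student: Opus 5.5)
We split the proof into a privacy part and a utility part. For privacy, we view Algorithm~\ref{alg:private-alg} as an adaptive composition of $T+1$ Gaussian mechanisms, each releasing a noisy version of the weighted covariance $\sum_i w_i x_i x_i^\top$. The weights are $\kappa$-dense, so $w_i\le \frac{d}{\kappa n}$, and the range bound gives $\|x_i\|\le R$. Changing one constraint therefore moves this matrix by at most $\frac{2dR^2}{\kappa n}$ in Frobenius norm. One point needs care: the weights are computed from the previous \emph{noisy} matrices, and the KL-projection onto $\kappa$-dense distributions is deterministic given them. Hence, conditioned on earlier outputs, the weights of the other $n-1$ points change only through post-processing, and the sensitivity bound applies to each step. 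Giving each step a budget of $\rho/(T+1)$ and applying zCDP composition yields $\rho$-zCDP. The final output $M$ is post-processing.

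For utility, we first bound the noise. Each noise matrix $Z_t$ is a symmetric Gaussian matrix with entry scale $\sigma=\frac{2dR^2}{\kappa n\sqrt{2\rho/(T+1)}}$. By a standard spectral-norm bound, $\|Z_t\|_2\le \sigma(\sqrt d+\sqrt{2\log((T+1)/\beta)})$ with probability $1-\beta/(T+1)$. A union bound makes this hold for all $t$ simultaneously.

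Next we bring in goodness through Lemma~\ref{lem:good-assumption}. For any $\kappa$-dense $w$ and any unit $v$, the set $\mathrm{BAD}_v$ can absorb at most about half of the mass. So the true weighted covariance $A_t$ satisfies $A_t\succeq \frac{c\,\tau}{d} I$ for a constant $c$. Under the stated lower bound on $n$, the noise bound becomes $\|Z_t\|\le \frac{\gamma}{8}\lambda_{\min}(A_t)$. This gives the two-sided spectral sandwich
\[(1-\gamma/8)A_t\preceq \tilde A_t\preceq (1+\gamma/8)A_t.\]
The constant 32 in the hypothesis should be exactly what is needed here.

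Now we rerun the analysis of the non-private theorem (Section~\ref{sec:non-private-alg}) with $\tilde A_t$ in place of $A_t$. The multiplicative-weights/KL-projection argument uses the leverage scores $x_i^\top \tilde A_t^{-1}x_i$ as the update signal. By the sandwich, these are within factors $(1\pm\gamma/8)^{-1}$ of the true scores. The regret bound therefore survives with the perturbation costing only an additive $O(\gamma)$ in the exponent. Concretely:
\begin{itemize}
\item The containment $e^{-\gamma/2}E'\subseteq K_S$ for the ``ideal'' ellipsoid transfers to the output $E$ after a $\sqrt{1-\gamma/8}$ shrink.
\item The outer containment $K\subseteq\sqrt d E$ becomes $\sqrt{d(1+\gamma/8)}E$.
\item The volume loss accumulates to $e^{-d\gamma/2}(1-\gamma/8)^{d/2}\ge e^{-9d\gamma/14}$, using $\log(1-x)\ge -\frac{8}{7}x$ for $x\le1/8$.
\end{itemize}

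The main obstacle is this last step. It requires checking that the potential/regret argument of the non-private proof tolerates multiplicatively perturbed matrices at every iteration, not just the final one. I would first try to show that the perturbed updates are still exact multiplicative-weights steps against a slightly rescaled loss, so the earlier analysis applies as a black box.
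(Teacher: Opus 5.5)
Your privacy argument fails at the step ``conditioned on earlier outputs, the weights of the other $n-1$ points change only through post-processing.'' Conditioned on $\widehat M^0,\dots,\widehat M^{t-1}$, the \emph{unprojected} weights $w^t_j$ with $j\neq i$ do coincide for $P$ and $P'$. But $w^t_i$ does not, and the projection $\mu^t_j=\min\{\frac{d}{\kappa n},\,c\,w^t_j\}$ uses a normalizer $c$ that depends on all of $w^t$, hence on $x_i$. So every uncapped coordinate of $\mu^t$ moves, and $2dR^2/(\kappa n)$ is only the sensitivity for a \emph{fixed} measure.

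The missing ingredient is $\ell_1$-stability of the KL projection: $\|\mu^t(P)-\mu^t(P')\|_1\le 2d/(\kappa n)$. The paper invokes Lemma~25 of Bun et al. Directly: if $c'\le c$, then every $\mu_j$ with $j\neq i$ weakly decreases, and the total decrease equals the increase of $\mu_i$, which is at most $d/(\kappa n)$. This gives Frobenius sensitivity $R^2\|\mu(P)-\mu(P')\|_1+2dR^2/(\kappa n)\le 4dR^2/(\kappa n)$. That is why the oracle actually uses $\sigma=4dR^2/(\kappa n\sqrt{2\rho_0})$; with your $\sigma$ this bound only certifies $4\rho$-zCDP. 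This factor $2$ is also where the constant $32$ comes from: your calibration would need only $16$. The final call on $\overline\mu$ is covered, since averaging preserves both the cap and the $\ell_1$ bound. Relatedly, Lemma~\ref{lem:good-assumption} gives $\lambda_{\min}(\Sigma(\mu))\ge\tau$ for mass-$d$ measures, not $c\tau/d$. With your weaker bound, the stated $n$ would not yield $\|N^t\|_2\le\frac{\gamma}{8}\lambda_{\min}(\Sigma(\mu^t))$.

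On utility, the step you flag as the main obstacle is the real content of the proof, and the black-box route you suggest does not work. Write $\widehat\phi^t_i:=\log(x_i^\top\widehat M^tx_i)$, $\Sigma^t:=\Sigma(\mu^t)$ and $\eta=\gamma/8$. The ratio $x_i^\top\widehat M^tx_i/x_i^\top M(\mu^t)x_i$ depends on the direction of $x_i$, so the noisy update is not an exact step against any rescaling of the true loss. A uniform shift would be harmless, since $\Pi_{\Gamma_\kappa}$ is scale invariant, but that is not the situation here. Two ingredients of the non-private proof break:
\begin{itemize}
\item the identity $\sum_i\mu^t_ie^{\phi_i(\mu^t)}=\operatorname{tr}(M(\mu^t)\Sigma^t)=d$, used after the log-sum step;
\item Jensen's inequality, which applies to the convex maps $\phi_i$ but not to the noisy scores.
\end{itemize}

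The paper reopens the argument. Set $F(\theta)=\max_{\mu\in\Gamma_\kappa}\{\langle\mu,\theta\rangle-D_{\mathrm{KL}}(\mu\Vert\mu^0)\}$ and $\theta^{t+1}=\theta^t+\widehat\phi^t$. Then $F(\theta^{t+1})\le F(\theta^t)+d\log\bigl(\operatorname{tr}(\widehat M^t\Sigma^t)/d\bigr)\le F(\theta^t)+2d\eta$, hence $\sum_t\sum_iq_i\widehat\phi^t_i\le D_{\mathrm{KL}}(q\Vert\mu^0)+2dT\eta$. Only after converting back to the true scores, at a further cost of $2dT\eta$, is Jensen applied. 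The total additive loss $4d\eta=d\gamma/2$ is exactly why $T$ is doubled to $\lceil 2\log(1/\kappa)/\gamma\rceil$. An unquantified ``additive $O(\gamma)$'' would not give $\sup_{q\in\Gamma_\kappa}\sum_iq_i\phi_i(\overline\mu)\le d\gamma$, and hence not the exact $e^{-\gamma/2}$ scaling claimed.

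Two smaller points. First, your volume bound omits the factor $(1+\gamma/8)^{-d/2}$ coming from $\widehat M\succeq M(\overline\mu)/(1+\gamma/8)$. The correct loss is $e^{-d\gamma/2}\bigl(\frac{1-\gamma/8}{1+\gamma/8}\bigr)^{d/2}\ge e^{-d\gamma/2-d\gamma/7}=e^{-9d\gamma/14}$, which is where $9/14$ comes from. Second, you should state that the sandwich gives $\lambda_{\min}(\Sigma^t+N^t)\ge(1-\gamma/8)\tau\ge\tau/2$, so no oracle call returns $\perp$.
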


We comment that our algorithm 
does not require
the goodness condition to be checked before it is executed. 
The goodness assumption is used only in our utility analysis, guaranteeing that each weighted second-moment matrix the algorithm iterates through has sufficiently large least eigenvalue. 
Accordingly, our sample-complexity bound should be interpreted as a
sufficient guarantee for inputs satisfying this condition, rather than as a
characterization of all inputs on which the algorithm succeeds ~--- as our algorithm may succeed even on inputs that are not $(\kappa,\tau)$-good.

\paragraph{Extensions.}
Our problem also admits an equivalent interpretation through polar duality.
Let $C_P:=\operatorname{conv}\{\pm x_1,\ldots,\pm x_n\}$. Then the centrally symmetric polytope considered throughout this work satisfies
\(K_P=C_P^\circ\), where \((\cdot)^\circ\) denotes the polar body. Since
polarity reverses containment and maps origin-centered ellipsoids to
origin-centered ellipsoids, the John ellipsoid of \(K_P\) is precisely the
polar of the minimum-volume enclosing ellipsoid (MVEE) of \(C_P\).
Consequently, our private algorithm also yields guarantees for a robust
version of the centered MVEE problem in which a \(\kappa\)-fraction of the
input points may be discarded as outliers.

This MVEE interpretation further allows us to extend our results beyond
centrally symmetric, origin-centered ellipsoids. In particular, the standard
homogeneous-coordinate lifting for the MVEE problem~\citep{tod16} embeds each
point \(y_i\in\mathbb{R}^d\) as \(x_i=(y_i,1)\in\mathbb{R}^{d+1}\), thereby reducing the uncentered MVEE problem in \(\mathbb{R}^d\) to a centered
problem in one higher dimension. We show that this reduction extends to our
trimmed setting and preserves both the approximation factor and the trimming
guarantee. Hence, through the MVEE interpretation above, our results extend
also to general uncentered ellipsoids. For the private guarantee in the
uncentered setting, we additionally require that the lifted points satisfy
the corresponding goodness assumption. We discuss this in Section~\ref{sec:extensions}.

\paragraph{Related Work.}
Computing or approximating the John Ellipsoid (JE) has been extensively studied in optimization, computational geometry, and statistics. Classical approaches rely on reductions to the minimum volume enclosing ellipsoid (MVEE) and interior-point methods~\citep{KT93, NN94}, with subsequent improvements achieving a $(1+\gamma)$-approximation in time $O(dn^3/\gamma)$~\citep{KY05, TY07}. The problem is also equivalent to $D$-optimal design~\citep{atw69, tod16}, which has motivated first-order and proximal methods~\citep{lfn18, GP18}. More recently, faster algorithms based on iterative updates and sketching techniques have been developed~\citep{pmlr-v99-cohen19a, NEURIPS2024_7e1854dc, cao2026faster, li2024quantumspeedupsapproximatingjohn}, with the method of~\cite{pmlr-v99-cohen19a} achieving near-optimal iteration complexity via a simple multiplicative update.

In the context of differential privacy, much less is known for geometric optimization problems of this type. A closely related line of work studies the private \(1\)-cluster problem, introduced by~\cite{NissimStemmerVadhan16}, where the goal is to find a small-radius ball containing almost all of a prescribed number of input points. Subsequent work obtained improved guarantees for this problem in both the central and local models~\citep{NissimStemmer18}, and~\cite{GhaziKumarManurangsi20} gave efficient private algorithms achieving essentially the corresponding non-private approximation ratios, up to a small additive loss in the number of covered points. For the closely related Minimum Enclosing Ball (MEB) problem,~\cite{MahpudS22} gave the first linear-time differentially private fPTAS, achieving a \((1+\gamma)\)-approximation to the optimal radius while allowing a small fraction of points to remain uncovered. Taken together, these works illustrate a recurring feature of private geometric optimization: under the standard neighboring-dataset model, useful utility guarantees naturally take a bi-criteria form, simultaneously approximating the geometric objective and permitting a small number of constraints or points to be violated.


\cite{Gu2024FastJE} proposed a differentially private algorithm for approximating the John ellipsoid under a weaker notion of neighboring inputs, where a single datum can be perturbed only by a small $\epsilon_0$ (i.e., the two neighboring instances differ on a single entry where $\|x_i-x'_i\|\leq \epsilon_0$). Their claim-to-fame is to upper-bound the $\ell_2$-sensitivity of the weighted leverage-score vector by $L\epsilon_0$ where $L$ is some factor left unspecified\footnote{They bound $L$ as some (unspecified) polynomial in the numerous parameters introduced by their various assumptions.}; thus allowing them to revise the algorithm of Cohen et al. so that \emph{all} $n$ constraints are approximately satisfied (a guarantee clearly infeasible under standard DP). We, in contrast, use the the standard definition of DP while introducing concrete privacy-utility trade-off in the form of an explicit sample complexity bound (and also achieving smaller iteration complexity then their algorithm).

Our use of dense measures is motivated by the
smooth boosting literature, particularly the lazy Bregman-projection framework
of~\cite{bun2020efficientnoisetolerantprivatelearning}, building on earlier work of
\cite{barak2009uniform}. In that line of work, multiplicative weights are
combined with projection onto a dense class of measures to keep the influence
of every datum controlled in a similar fashion to \cite{Roth2014}. We adapt this mechanism to the John ellipsoid
problem, where the maintained measure determines a weighted covariance matrix
and the multiplicative update is driven by logarithmic leverage-type scores.

\section{Preliminaries}
\label{sec:preliminaries}

\paragraph{John ellipsoid and geometric formulation.}
Let $P=\{x_1,\ldots,x_n\}\subseteq\mathbb{R}^d$,
where $\operatorname{span}(P)=\mathbb{R}^d$.
We consider the centrally symmetric polytope
\[
K_P :=
\left\{
x\in\mathbb{R}^d:
|x_i^\top x|\le 1,\ \forall i\in[n]
\right\}
\]
The John ellipsoid $E^*$ of $K_P$ is the
maximum-volume ellipsoid contained in $K_P$.
By symmetry, $E^*$ is centered at the origin.

For a nonnegative measure
$\mu\in\mathbb{R}_{\ge 0}^n$
with $\sum_{i=1}^n\mu_i=d$, define
\[
\Sigma(\mu):=\sum_{i=1}^n\mu_i x_ix_i^\top,
\qquad
M(\mu):=\Sigma(\mu)^{-1},
\]
whenever $\Sigma(\mu)$ is invertible.
The corresponding ellipsoid is
\[
E(\mu):=
\left\{
x\in\mathbb{R}^d:
x^\top M(\mu)^{-1}x\le 1
\right\}
\]
and its volume is proportional to
$\det(M(\mu))^{1/2}$.

The primal-dual formulation of the John ellipsoid
provides a geometric interpretation of these
measures and the corresponding volume guarantees.
We recall this formulation and its consequences
in Appendix~\ref{apx:additional-background}. 

\begin{definition}
[$(1+\gamma)$-approximate $\kappa$-trimmed John solution]
Let $\gamma>0$ and $\kappa\in(0,1)$.
A nonnegative measure $\mu$ of total mass $d$
is a $(1+\gamma)$-approximate
$\kappa$-trimmed John solution if
\[
\left|
\left\{
i\in[n]:
x_i^\top M(\mu)x_i\le e^\gamma
\right\}
\right|
\ge (1-\kappa)n
\]
\end{definition}

In particular, a scaling $E(\mu)$ by $e^{-\gamma}$
is contained inside the polytope induced by at least
$(1-\kappa)n$ constraints, while its volume
approximates that of $E^*$.
The corresponding geometric guarantees are
established in
Appendix~\ref{apx:additional-background}.

\paragraph{Dense measures and KL projection.}
For $\kappa\in(0,1)$, define the set of
$\kappa$-dense measures
\[
\Gamma_\kappa:=
\left\{
\mu\in\mathbb{R}_{\ge0}^n:
\sum_{i=1}^n\mu_i=d,\quad
\mu_i\le\frac{d}{\kappa n}
\ \forall i\in[n]
\right\}
\]
For two measures $\mu,\nu$,
their KL divergence is $D_{\mathrm{KL}}(\mu\Vert\nu)
:=
\sum_{i=1}^n
\left[
\mu_i\log\left(\frac{\mu_i}{\nu_i}\right)
+\nu_i-\mu_i
\right]$.
The KL projection onto $\Gamma_\kappa$ is $\Pi_{\Gamma_\kappa}(\nu)
:=
\arg\min_{\mu\in\Gamma_\kappa}
D_{\mathrm{KL}}(\mu\Vert\nu)$.

By the Pythagorean property of Bregman
projections~\citep{bregman1967relaxation},
for every $q\in\Gamma_\kappa$ and measure $\nu$ that sums to $d$
\[
D_{\mathrm{KL}}
(q\Vert\Pi_{\Gamma_\kappa}(\nu))
\le
D_{\mathrm{KL}}(q\Vert\nu)
\]
The explicit projection and its computation
are given in Appendix~\ref{apx:kl-projection}.

\paragraph{Leverage scores.}
For every $i\in[n]$, define $\psi_i(\mu):=x_i^\top M(\mu)x_i,~
\phi_i(\mu):=\log\psi_i(\mu)$.
The function $\phi_i$ is convex in $\mu$
whenever $\Sigma(\mu)$ is positive definite
~\citep{pmlr-v99-cohen19a}.
Moreover, the weighted leverage scores
$\ell_i(\mu):=\mu_i\psi_i(\mu)$ satisfy $0\le\ell_i(\mu)\le1$ and $
\sum_{i=1}^n\ell_i(\mu)=d$.

\paragraph{Differential privacy (DP).}
Two datasets are neighboring if they differ
in at most one constraint.
We use zero-concentrated differential privacy
(zCDP)~\citep{bun2016concentrated}.
A randomized mechanism $\mathcal M$ is
$\rho$-zCDP if, for all neighboring datasets
$P,P'$ and all $\alpha>1$,
\[
D_\alpha(\mathcal M(P)\Vert\mathcal M(P'))
\le \alpha\rho
\]
where $D_\alpha$ denotes the R\'enyi divergence
of order $\alpha$.

We use the Gaussian mechanism and the adaptive
composition property of zCDP. In particular,
the composition of mechanisms with privacy
parameters $\rho_1,\ldots,\rho_T$ satisfies
$(\sum_{t=1}^T\rho_t)$-zCDP.
Furthermore, $\rho$-zCDP implies
$(\varepsilon,\delta)$-DP for $\varepsilon
=
\rho+2\sqrt{\rho\log(1/\delta)}$.

Additional properties of DP and Gaussian concentration
facts are recalled in
Appendix~\ref{apx:additional-background}.

\remove{
\section{Preliminaries -- \bmnote{OLD}}
\label{sec:preliminaries}
\paragraph{Notation and Geometric Setup.}
For $v \in \mathbb{R}^d$, we denote $\|v\| := \|v\|_2$, and for $u,v \in \mathbb{R}^d$ we write $\langle u,v \rangle$ for their inner product.
An ellipsoid is a set of the form $E \;=\; \{ x \in \mathbb{R}^d \;:\; (x - c)^\top A (x - c) \le 1 \}$
where $A \succ 0$ and $c\in \R^d$ is an arbitrary point.
For a matrix $A \in \mathbb{R}^{d \times d}$, we denote its spectral norm by $\|A\|_2 \;=\; \sup_{\|x\|=1} \|Ax\|$
which equals the largest eigenvalue in absolute value when $A$ is symmetric.

\subsection{John Ellipsoid: Primal and Dual Formulations}

In this section, we formally define the problem of computing the John ellipsoid.
Let $P = \{x_1,\dots,x_n\} \subseteq \mathbb{R}^d$, and assume that $\operatorname{span}(P)=\mathbb{R}^d$. We consider the centrally symmetric polytope
\[
K_P := \left\{x \in \mathbb{R}^d : |x_i^\top x| \le 1 \;\; \forall i \in [n]\right\}
\]

By symmetry, the maximum-volume ellipsoid contained in $K_P$ is centered at the origin. Any such ellipsoid can be written as
\[
E = \left\{x \in \mathbb{R}^d : x^\top G^{-2} x \le 1\right\}
\]
where $G \succ 0$. The volume of $E$ is proportional to $\det(G)$.

The condition $E \subseteq K_P$ is equivalent to
\[
\max_{x \in E} |x_i^\top x| \le 1
\qquad \forall i \in [n]
\]
Writing $x = Gy$ with $\|y\|_2 \le 1$, we obtain
\[
\max_{\|y\|_2 \le 1} |x_i^\top Gy|
= \|G x_i\|_2
\]
Thus, the containment condition becomes
\[
\|G x_i\|_2 \le 1
\qquad \forall i \in [n]
\]

Therefore, the John ellipsoid can be computed by solving the primal problem
\[
\begin{aligned}
\text{maximize} \qquad & \log \det(G^2) \\
\text{subject to} \qquad
& G \succ 0, \\
& \|G x_i\|_2 \le 1
\qquad \forall i \in [n]
\end{aligned}
\tag{P}
\]

\medskip

It is convenient to define
\[
M := G^{-2} \succ 0,
\qquad
E = \{x : x^\top M x \le 1\}
\]

The optimality conditions imply that the inverse shape matrix admits a decomposition of the form $M
=
\sum_{i=1}^n w_i x_i x_i^\top$
for some weights $w \in \mathbb{R}_{\ge 0}^n$, known as the John decomposition.

The dual formulation is given in terms of weights $\mu \in \mathbb{R}_{\ge 0}^n$. Define $\Sigma(\mu) := \sum_{i=1}^n \mu_i x_i x_i^\top$
The dual problem is
\[
\begin{aligned}
\text{minimize} \qquad
& \sum_{i=1}^n \mu_i
- \log \det\!\big(\Sigma(\mu)\big)
- d \\
\text{subject to} \qquad
& \mu_i \ge 0
\qquad \forall i \in [n]
\end{aligned}
\tag{D}
\]

We denote the dual objective by
\[
D(\mu)
:=
\sum_{i=1}^n \mu_i
-
\log \det\!\big(\Sigma(\mu)\big)
-
d
\]

\medskip

At optimality, the primal and dual solutions are linked via the John decomposition:
\[
M^* = (G^*)^{-2} = \sum_{i=1}^n \mu_i^* x_i x_i^\top
\]

Moreover, the optimal weights satisfy $\sum_{i=1}^n \mu_i^* = d$
in which case
\[
D(\mu^*) = -\log \det\!\big(\Sigma(\mu^*)\big)
\]

Finally, strong duality holds:~$P(G^*) = D(\mu^*)$.

\begin{definition}[$(1+\gamma)$-approximate $\kappa$-trimmed John solution]
Let $\gamma>0$ and $\kappa\in(0,1)$. We say that a measure
$\mu\in\mathbb{R}_{\ge 0}^n$ is a $(1+\gamma)$-approximate
$\kappa$-trimmed John solution if $\sum_{i=1}^n \mu_i=d$
and, defining $M(\mu)
:=
\left(\sum_{i=1}^n \mu_i x_i x_i^\top\right)^{-1}$, we have
\[
\left|
\left\{
i\in[n]:
x_i^\top M(\mu)x_i \le e^\gamma
\right\}
\right|
\ge (1-\kappa)n
\]
\end{definition}

\begin{lemma}[$(1+\gamma)$-approximate trimmed solution is good trimmed rounding]
Let \(\mu\) be a \((1+\gamma)\)-approximate \(\kappa\)-trimmed John solution, and define
\[
E
:=
\{x\in\mathbb R^d : x^\top M(\mu)^{-1}x\le 1\}
\]
Let
\[
S
:=
\left\{
i\in[n]:
x_i^\top M(\mu)x_i\le e^\gamma
\right\} \quad \text{and} \quad K_S
:=
\{x\in\mathbb R^d: |x_i^\top x|\le 1 \text{ for all } i\in S\}
\]
Then
\[
|S|\ge (1-\kappa)n \quad \text{and} \quad e^{-\gamma/2}E \subseteq K_S
\]
Moreover, for the original polytope
\[
K:=\{x\in\mathbb R^d: |x_i^\top x|\le 1 \text{ for all } i\in[n]\}
\]
we have $K \subseteq \sqrt d\, E$
\end{lemma}

\begin{proof}
Let $G:=M(\mu)^{1/2}$ so that $E=\{x:x^\top M(\mu)^{-1}x\le 1\}
=
\{Gy:\|y\|_2\le 1\}$.

First we prove the trimmed inner containment. Let $x\in e^{-\gamma/2}E$. Then
\[
x^\top M(\mu)^{-1}x\le e^{-\gamma}
\]
Equivalently,
\[
\|M(\mu)^{-1/2}x\|_2\le e^{-\gamma/2}
\]
For every \(i\in S\), we have $x_i^\top M(\mu)x_i\le e^\gamma$. Therefore, by Cauchy-Schwarz,
\[
|x_i^\top x|
=
|\langle M(\mu)^{1/2}x_i,M(\mu)^{-1/2}x\rangle|
\le
\|M(\mu)^{-1/2}x_i\|_2
\cdot
\|M(\mu)^{1/2}x\|_2
\]
Using the two bounds above,
\[
|x_i^\top x|
\le
e^{\gamma/2}\cdot e^{-\gamma/2}
=
1
\]
Thus \(x\in K_S\). Since this holds for every \(x\in e^{-\gamma/2}E\), we conclude that $e^{-\gamma/2}E\subseteq K_S$.

Next, we prove the outer containment for the original polytope \(K\). Let \(x\in K\). Then $|x_i^\top x|\le 1 \text{ for all } i\in[n]$, Hence
\[
x^\top M(\mu)^{-1}x
=
x^\top
\left(
\sum_{i=1}^n \mu_i x_i x_i^\top
\right)
x
=
\sum_{i=1}^n \mu_i |x_i^\top x|^2
\le
\sum_{i=1}^n \mu_i
=
d
\]
Therefore, $x\in \sqrt d\, E$. Since this holds for every \(x\in K\), we conclude that $K\subseteq \sqrt d\,E$.
\end{proof}

\begin{lemma}[Volume interpretation via dual feasibility]
\label{lem:volume-interpretation}
Let
\[
M(\mu) := \left(\sum_{i=1}^n \mu_i x_i x_i^\top\right)^{-1},
\qquad
E := \{x \in \mathbb{R}^d : x^\top M(\mu)^{-1}x \le 1\}
\]
Assume that \(\mu\) is feasible for the dual problem, and let \(\mu^*\) be an optimal dual solution corresponding to the John ellipsoid \(E^*\).

Define the scaled ellipsoid $E_\gamma := e^{-\gamma/2}\,E$. Then
\[
P(E_\gamma) \ge D(\mu^*) - d\gamma,
\]
and consequently
\[
\operatorname{vol}(E_\gamma)
\;\ge\;
e^{-d\gamma/2}\operatorname{vol}(E^*)
\]
\end{lemma}

\begin{proof}
Since \(\mu\) is feasible for the dual problem, by weak duality we have $D(\mu) \ge D(\mu^*)$.

We now compare the dual value at \(\mu\) with the primal value of the scaled ellipsoid \(E_\gamma\).
By definition of the dual objective and using \(\sum_i \mu_i = d\),
\[
D(\mu)
=
d - \log\det\!\left(\sum_{i=1}^n \mu_i x_i x_i^\top\right) - d
=
-\log\det\!\left(\Sigma(\mu)\right)
\]

On the other hand, the primal value of \(E_\gamma\) is
\[
P(E_\gamma)
=
\log\det\!\left(e^{\gamma}\Sigma(\mu)\right)^{-1}
\]

Therefore,
\begin{align*}
D(\mu) - P(E_\gamma)
&=
-\log\det(\Sigma(\mu))
-
\left(-\log\det\big(e^{\gamma}\Sigma(\mu)\big)\right) \\
&=
\log\det\big(e^{\gamma}\Sigma(\mu)\big)
-
\log\det(\Sigma(\mu)) \\
&=
d\log(e^{\gamma})
\;\le\;
d\gamma
\end{align*}

Rearranging gives
\[
P(E_\gamma)
\ge
D(\mu) - d\gamma
\ge
D(\mu^*) - d\gamma
\]

Finally, since the volume of an ellipsoid \(\{x : x^\top G^{-2} x \le 1\}\) is proportional to \(\det(G)\), we obtain
\[
\log \operatorname{vol}(E_\gamma)
\ge
\log \operatorname{vol}(E^*) - \frac{d\gamma}{2}
\]
which implies $\operatorname{vol}(E_\gamma)
\ge
e^{-d\gamma/2}\operatorname{vol}(E^*)$.
\end{proof}

\subsubsection{Polar Interpretation: MVEE and MVIE}
\label{subsubsec:polar-interpretation}

We recall the basic relation between polar bodies, minimum-volume
enclosing ellipsoids, and maximum-volume inscribed ellipsoids.

\begin{definition}[Polar body~\citep{Schneider_2013}]
Let $K\subseteq\mathbb R^d$ be a convex body containing the origin.
The \emph{polar body} of $K$ is
\[
K^\circ
:=
\left\{
y\in\mathbb R^d:
\langle x,y\rangle\le1
\text{ for every }x\in K
\right\}
\]
\end{definition}

When $K$ is centrally symmetric, the definition can equivalently be
written as
\[
K^\circ
=
\left\{
y\in\mathbb R^d:
|\langle x,y\rangle|\le1
\text{ for every }x\in K
\right\}
\]

\begin{fact}[Basic properties of polarity~\citep{Schneider_2013}]
Let $K_1,K_2\subseteq\mathbb R^d$ be convex bodies containing the
origin. Then:
\begin{enumerate}
    \item \emph{Containment is reversed under polarity:}~ $K_1\subseteq K_2
    \quad\Longrightarrow\quad
    K_2^\circ\subseteq K_1^\circ$

    \item If $K$ is closed, convex, and contains the origin in its
    interior, then~ $(K^\circ)^\circ=K$
    
    \item For every $\alpha>0$,~ $(\alpha K)^\circ
    =
    \alpha^{-1}K^\circ$
    
\end{enumerate}
\end{fact}

\begin{fact}[Polarity of centered ellipsoids~\cite{tod16}]
\label{fact:polar-ellipsoid}
Let $M\succ0$, and define
\[
E(M)
:=
\left\{
x\in\mathbb R^d:
x^\top M^{-1}x\le1
\right\}
\]
Then
\[
E(M)^\circ
=
\left\{
y\in\mathbb R^d:
y^\top My\le1
\right\}
\]
Equivalently, defining
\[
C(M)
:=
\left\{
y\in\mathbb R^d:
y^\top My\le1
\right\}
\]
we have
\[
E(M)^\circ=C(M)
\qquad\text{and}\qquad
C(M)^\circ=E(M)
\]
\end{fact}

We now apply these facts to the geometric formulation considered in this
paper. Given constraint vectors $P=\{x_1,\ldots,x_n\}\subseteq\mathbb R^d$,
define $P_{\pm}
:=
\operatorname{conv}\{\pm x_1,\ldots,\pm x_n\}$.
The corresponding centrally symmetric polytope is
\[
K_P
:=
\left\{
x\in\mathbb R^d:
|x_i^\top x|\le1
\text{ for all }i\in[n]
\right\}
\]
By the definition of the polar body,
\[
K_P=P_{\pm}^\circ,
\qquad\text{and hence}\qquad
K_P^\circ=P_{\pm}
\]

Consequently, if $E^\star$ is the John ellipsoid of $K_P$, then
\[
E^\star\subseteq K_P
\]
Taking polars reverses containment, and therefore
\[
P_{\pm}
=
K_P^\circ
\subseteq
(E^\star)^\circ
\]
Moreover, maximizing the volume of $E^\star$ is equivalent, under
polarity, to minimizing the volume of $(E^\star)^\circ$. Hence
$(E^\star)^\circ$ is precisely the minimum-volume enclosing ellipsoid
of $P_{\pm}$. Since every centered ellipsoid is centrally symmetric,
this is also the centered MVEE of the original points
$x_1,\ldots,x_n$.

Throughout this paper, for a nonnegative measure
$\mu\in\mathbb R_{\ge0}^n$ with total mass $d$, we define
\[
\Sigma(\mu)
:=
\sum_{i=1}^n\mu_i x_i x_i^\top,
\qquad
M(\mu)
:=
\Sigma(\mu)^{-1}
\]
The ellipsoid used in the John-ellipsoid formulation is
\[
E(\mu)
:=
\left\{
x\in\mathbb R^d:
x^\top M(\mu)^{-1}x\le1
\right\}
\]
whereas its polar is
\[
E(\mu)^\circ
=
C(\mu)
:=
\left\{
y\in\mathbb R^d:
y^\top M(\mu)y\le1
\right\}
\]

\subsection{Differential Privacy}
\begin{definition}[Differential Privacy \citep{dwork2006calibrating}]
    A randomized algorithm $\mathcal{M} : \mathcal{X}^n \to \mathcal{Y}$ is $(\varepsilon,\delta)$-differentially private if for all datasets $P,P' \in \mathcal{X}^n$ differing in at most one element and all measurable sets $S \subseteq \mathcal{Y}$,
    \[
    \Pr[\mathcal{M}(P) \in S] \;\le\; e^{\varepsilon} \Pr[\mathcal{M}(P') \in S] + \delta
    \]
    When $\delta = 0$, we say that $\mathcal{M}$ is $\varepsilon$-differentially private (pure DP).
    \end{definition}
    
    \begin{definition}[Zero-Concentrated Differential Privacy (zCDP)~\citep{bun2016concentrated}]
    A randomized algorithm $\mathcal{M} : \mathcal{X}^n \to \mathcal{Y}$ is $\rho$-zCDP if for all neighboring datasets $P,P'$ and all $\alpha > 1$,
    \[
    D_{\alpha}\bigl(\mathcal{M}(P)\,\|\,\mathcal{M}(P')\bigr) \;\le\; \alpha \rho
    \]
    where $D_{\alpha}(\cdot\|\cdot)$ denotes the Rényi divergence of order $\alpha$.
\end{definition}
We recall several standard facts about $\rho$-zCDP.

\begin{itemize}
    \item \textbf{Composition.} 
    If $\mathcal{M}_1$ and $\mathcal{M}_2$ are each $\rho$-zCDP mechanisms, then their composition is $(2\rho)$-zCDP.

    \item \textbf{Gaussian Mechanism.}
    Let $f : \mathcal{X}^n \to \mathbb{R}^d$ be a function with $\ell_2$-global sensitivity
    \[
    \Delta_2(f) \;=\; \max_{P \sim P'} \|f(P) - f(P')\|_2 \;\le\; G
    \]
    Then the Gaussian mechanism defined by
    \[
    \mathcal{M}(P) \;=\; f(P) + X,
    \quad \text{where } X \sim \mathcal{N}\!\left(0, \frac{G^2}{2\rho}\, I_d\right)
    \]
    satisfies $\rho$-zCDP.

    \item \textbf{Conversion to $(\varepsilon,\delta)$-DP.}
    Any $\rho$-zCDP mechanism is also $(\varepsilon,\delta)$-differentially private for every $\delta \in (0,1)$, with
    $\varepsilon \;=\; \rho + \sqrt{4\rho \ln(1/\delta)}.$
\end{itemize}
In particular, for $\varepsilon \le 1$ and $\delta \le e^{-2}$, it suffices to use a $\rho$-zCDP mechanism with$\rho \;\le\; \frac{\varepsilon^2}{5 \ln(1/\delta)}.$

\subsection{Measures and Distributions}
\paragraph{Gaussian and $\chi^2$ Distributions.}
For $v \in \mathbb{R}^d$, we denote by $\mathcal{N}(v,\sigma^2 I_d)$ the $d$-dimensional Gaussian distribution whose coordinates are independent with $X_j \sim \mathcal{N}(v_j,\sigma^2)$.
If $X \sim \mathcal{N}(0,\sigma^2 I_d)$, then $\|X\|^2 / \sigma^2$ follows a $\chi^2_d$ distribution. In particular, for any $x > 1$,
\[
\Pr_{X \sim \mathcal{N}(0,\sigma^2 I_d)}\!\left[\|X\| \ge \sigma(\sqrt{d} + x)\right]
\;\le\; \exp\!\left(-\frac{x^2}{2}\right)
\]

\begin{definition}[Bounded measures]
Let $X = \{x_1,\dots,x_n\}$ be a finite domain.
A (bounded) measure over $X$ is a vector $\mu \in \mathbb{R}_{\ge 0}^n$ with finite total mass
\[
\|\mu\|_1 = \sum_{i=1}^n \mu_i < \infty
\]
In this work, we restrict attention to measures of total mass $d$, i.e.,~ $\sum_{i=1}^n \mu_i = d$.
\end{definition}

To measure similarity between measures, we use the Kullback--Leibler divergence.

\begin{definition}[Kullback--Leibler Divergence]
Let $\mu_1, \mu_2$ be bounded measures over the same domain $X$. The KL-divergence between them is
\[
\mathrm{KL}(\mu_1 \,\|\, \mu_2)
\;=\;
\sum_{x \in X}
\left[
\mu_1(x)\log\!\left(\frac{\mu_1(x)}{\mu_2(x)}\right)
+ \mu_2(x) - \mu_1(x)
\right]
\]
\end{definition}
In our work, for two measures $\mu_1, \mu_2$ which sum to $d$, the above formula reduces to the standard formula for KL-divergence $\mathrm{KL}(\mu_1 \,\|\, \mu_2) = \mu_1(x)\log\!\left(\frac{\mu_1(x)}{\mu_2(x)}\right)$.
\medskip

We are particularly interested in measures of sufficiently large density.

\begin{definition}[$\kappa$-dense measures]
Let $\kappa \in (0,1)$. We define the set of $\kappa$-dense measures as
\[
\Gamma_\kappa
\;=\;
\left\{
\mu \in \mathbb{R}_{\ge 0}^n :
\sum_{i=1}^n \mu_i = d
\;\;\text{and}\;\;
\mu_i \le \frac{1}{\kappa n} \sum_{j}\mu_j = \frac{d}{\kappa n}
\;\;\forall i \in [n]
\right\}
\]
\end{definition}
Measures in $\Gamma_\kappa$ cannot concentrate too much mass on a small number of points; in particular, at least a $\kappa$-fraction of the domain must carry non-negligible mass. This property will be crucial for obtaining privacy guarantees.

To enforce this constraint, we use Bregman projections.

\begin{definition}[Bregman Projection]
Let $\Gamma \subseteq \mathbb{R}^{|X|}$ be a non-empty closed convex set of measures. The Bregman projection of a measure $\tilde{\mu}$ onto $\Gamma$ is defined as
\[
\Pi_{\Gamma}(\tilde{\mu})
\;=\;
\arg\min_{\mu \in \Gamma}
\mathrm{KL}(\mu \,\|\, \tilde{\mu})
\]
\end{definition}

\begin{theorem}[\cite{bregman1967relaxation}]
Let $\tilde{\mu}$ be any measure and let $\mu \in \Gamma$. Then
\[
\mathrm{KL}(\mu \,\|\, \Pi_{\Gamma}(\tilde{\mu}))
+
\mathrm{KL}(\Pi_{\Gamma}(\tilde{\mu}) \,\|\, \tilde{\mu})
\;\le\;
\mathrm{KL}(\mu \,\|\, \tilde{\mu})
\]
In particular,
\[
\mathrm{KL}(\mu \,\|\, \Pi_{\Gamma}(\tilde{\mu}))
\;\le\;
\mathrm{KL}(\mu \,\|\, \tilde{\mu})
\]
\end{theorem}

\subsection{Leverage Scores}
\paragraph{Weighted covariance and leverage scores.}
For \(\mu \in \mathbb{R}_{\ge 0}^n\), define $\Sigma(\mu):=\sum_{j=1}^n \mu_j x_j x_j^\top,\: M(\mu) := \Sigma(\mu)^{-1},$
whenever $\Sigma(\mu)$ is invertible.
For $i \in [n]$, define the leverage-type score
$\psi_i(\mu):= x_i^\top M(\mu)x_i,\:\phi_i(\mu) := \log \psi_i(\mu).$

\begin{lemma}[Convexity,~\citep{pmlr-v99-cohen19a}]\label{lem:convexity}
For every \(i \in [n]\), the function
\[
\phi_i(\mu)
=
\log
\left(
x_i^\top
\left(
\sum_{j=1}^n \mu_j x_j x_j^\top
\right)^{-1}
x_i
\right)
\]
is convex.
\end{lemma}

\begin{lemma}[Basic properties of weighted leverage scores]
\label{lem:leverage-score-properties}
For $\mu\in\mathbb R_{\ge0}^n$ such that $\Sigma(\mu)
=
\sum_{j=1}^n \mu_jx_jx_j^\top$
is invertible, define the weighted leverage score
\[
\ell_i(\mu)
:=
\mu_i\psi_i(\mu)
=
\mu_i x_i^\top\Sigma(\mu)^{-1}x_i.
\]
Then, for every $i\in[n]$,
\[
0\le \ell_i(\mu)\le1,
\qquad\text{and}\qquad
\sum_{i=1}^n\ell_i(\mu)=d.
\]
These are the standard properties of leverage scores; see
Section~3.3 of~\cite{CohenLee15}.
\end{lemma}

}

\section{A Non-Private Bi-Criteria Approximating Algorithm for JE}
\label{sec:non-private-alg}

In this section, we present a deterministic algorithm
for approximating the John ellipsoid while allowing
a $\kappa$-fraction of the constraints to be violated.
Our algorithm extends the multiplicative-weights
approach of~\cite{pmlr-v99-cohen19a} by projecting
the intermediate weights onto the set of
$\kappa$-dense measures.
This yields the following guarantee.

\begin{theorem}
\label{thm:main-non-private}
Let $P=\{x_1,\ldots,x_n\}\subseteq\mathbb R^d$
span $\mathbb R^d$.
For any $\gamma,\kappa\in(0,1)$,
Algorithm~\ref{alg:non-private} outputs, after $T=O\left(\frac{\log(1/\kappa)}{\gamma}\right)$
iterations, a positive definite matrix $M$
defining the ellipsoid
\[
E:=\{x\in\mathbb R^d:x^\top M^{-1}x\le1\},
\]
such that (i) there exists $S\subseteq[n]$ with
$|S|\ge(1-\kappa)n$ for which
\[
e^{-\gamma/2}E\subseteq K_S,
\qquad \text{ and }\qquad
K\subseteq\sqrt d\,E,
\]
where
$K_S:=\{x:|x_i^\top x|\le1,\ \forall i\in S\}$; and moreover (ii)
$
\operatorname{vol}(e^{-\gamma/2}E)
\ge
e^{-d\gamma/2}\operatorname{vol}(E^*),
$.
where $E^*$ is the John ellipsoid of $K_P$.
\end{theorem}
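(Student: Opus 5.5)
The proof has two parts. First, the geometric conclusions (i) and (ii) follow from the final (or averaged) measure being a $(1+\gamma)$-approximate $\kappa$-trimmed John solution, via the lemmas in Appendix~\ref{apx:additional-background}. Second, and this is the real work, a mirror-descent / multiplicative-weights regret argument with KL projection onto $\Gamma_\kappa$ produces such a measure after $T=O(\log(1/\kappa)/\gamma)$ iterations.

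\textbf{The algorithm.} We instantiate Algorithm~\ref{alg:non-private} as follows. Start from the uniform measure $\mu^{(1)}_i=d/n$, which lies in $\Gamma_\kappa$. At step $t$ compute $\psi_i(\mu^{(t)})$. Apply the Cohen et al.\ update $\tilde\mu^{(t+1)}_i=\mu^{(t)}_i\psi_i(\mu^{(t)})$, i.e.\ a multiplicative step with loss $-\phi_i$ and step size $1$. Its total mass is $\sum_i\ell_i(\mu^{(t)})=d$. Set $\mu^{(t+1)}=\Pi_{\Gamma_\kappa}(\tilde\mu^{(t+1)})$. Output $M=M(\bar\mu)$ with $\bar\mu=\frac1T\sum_t\mu^{(t)}\in\Gamma_\kappa$ (convexity of $\Gamma_\kappa$). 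Since $P$ spans $\mathbb R^d$ and every coordinate of $\mu^{(t)}$ stays positive, each $\Sigma(\mu^{(t)})$ is positive definite.

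\textbf{Regret step.} Fix an arbitrary comparator $q\in\Gamma_\kappa$. The multiplicative form gives
\[
D_{\mathrm{KL}}(q\Vert\tilde\mu^{(t+1)})=D_{\mathrm{KL}}(q\Vert\mu^{(t)})-\sum_i q_i\phi_i(\mu^{(t)}),
\]
because both measures sum to $d$. The Pythagorean property then gives $D_{\mathrm{KL}}(q\Vert\mu^{(t+1)})\le D_{\mathrm{KL}}(q\Vert\mu^{(t)})-\sum_iq_i\phi_i(\mu^{(t)})$. Telescoping yields
\[
\frac1T\sum_{t}\sum_i q_i\phi_i(\mu^{(t)})\le \frac{D_{\mathrm{KL}}(q\Vert\mu^{(1)})}{T}\le \frac{d\log(1/\kappa)}{T},
\]
where the last bound holds since $q_i/\mu^{(1)}_i\le 1/\kappa$. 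By Lemma (convexity of $\phi_i$) and Jensen, $\phi_i(\bar\mu)\le\frac1T\sum_t\phi_i(\mu^{(t)})$, hence
\[
\sum_i q_i\phi_i(\bar\mu)\le d\log(1/\kappa)/T.
\]

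\textbf{Choice of comparator (the main obstacle).} The difficulty is that $q$ must be $\kappa$-dense, so it cannot sit on a single violator. Let $B=\{i:\psi_i(\bar\mu)>e^\gamma\}$ and suppose $|B|>\kappa n$. Take $q$ uniform on $B$ with mass $d$. Then $q_i=d/|B|<d/(\kappa n)$, so $q\in\Gamma_\kappa$, and $\sum_iq_i\phi_i(\bar\mu)>d\gamma$. This contradicts the regret bound once $T\ge \log(1/\kappa)/\gamma$. Hence $|B|\le\kappa n$, and $\bar\mu$ is a $(1+\gamma)$-approximate $\kappa$-trimmed John solution.

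The technical point I expect to need care with is that $\phi_i$ can be negative. The regret bound involves the $q$-weighted sum, and a comparator uniform on $B$ only sees coordinates in $B$, where $\phi_i>\gamma>0$, so this is fine. Negative $\phi_i$ for $i\notin B$ never enters the argument. Also, if the paper's algorithm outputs the last iterate rather than the average, I would instead bound the average regret and use convexity as above, adjusting the output accordingly.

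\textbf{Conclusion.} Setting $S=[n]\setminus B$, the trimmed-rounding lemma in Appendix~\ref{apx:additional-background} gives $e^{-\gamma/2}E\subseteq K_S$ and $K\subseteq\sqrt d\,E$. The latter uses only $\sum_i\bar\mu_i=d$ and $|x_i^\top x|\le1$. Finally, dual feasibility of $\bar\mu$ with total mass $d$, together with weak duality (the volume-interpretation lemma), gives $\operatorname{vol}(e^{-\gamma/2}E)\ge e^{-d\gamma/2}\operatorname{vol}(E^*)$.
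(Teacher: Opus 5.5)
Your downstream steps are all correct and match the paper: the bound $D_{\mathrm{KL}}(q\Vert\mu^0)\le d\log(1/\kappa)$, Jensen via convexity of $\phi_i$, the uniform comparator on $B$, trimmed rounding, and the weak-duality volume bound. The gap is in the regret step, which analyzes an eager-projection update rather than Algorithm~\ref{alg:non-private}.

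\textbf{What the algorithm actually does.} The paper's algorithm projects lazily. It keeps unprojected weights $w^t$, sets $\mu^t=\Pi_{\Gamma_\kappa}(w^t)$, and multiplies the \emph{unprojected} weights, $w^{t+1}_i=w^t_i\psi_i(\mu^t)$. Hence $\mu^{t+1}=\Pi_{\Gamma_\kappa}(w^t\psi(\mu^t))$, not $\Pi_{\Gamma_\kappa}(\mu^t\psi(\mu^t))$. These genuinely differ. The projection is $\min\{d/(\kappa n),c\,w_i\}$, so once a coordinate is capped, the lazy weights remember its excess mass while your iterate discards it.

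\textbf{Where your step breaks.} Your identity for $D_{\mathrm{KL}}(q\Vert\tilde\mu^{t+1})$ needs the update to multiply $\mu^t$, and it uses $\sum_i\mu^t_i\psi_i(\mu^t)=d$. For the actual update, $\sum_i w^t_i\psi_i(\mu^t)\neq d$ in general. Moreover, the per-step inequality $D_{\mathrm{KL}}(q\Vert\mu^{t+1})\le D_{\mathrm{KL}}(q\Vert\mu^t)-\sum_i q_i\phi_i(\mu^t)$ need not hold for the lazy iterates. Take $q:=\mu^t\psi(\mu^t)$ whenever it lies in $\Gamma_\kappa$. Then the right-hand side is exactly $0$, while the left-hand side is positive whenever $\mu^{t+1}\neq q$, which can occur once capping is active. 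So your telescoping is a valid proof for the eager variant, with the same rate, but not for the algorithm named in the theorem.

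\textbf{The missing idea: a dual-averaging potential.} Let $\theta^t:=\log(w^t/\mu^0)$ and $F(\theta):=\max_{\mu\in\Gamma_\kappa}\{\langle\mu,\theta\rangle-D_{\mathrm{KL}}(\mu\Vert\mu^0)\}$.
\begin{itemize}
\item Since $\langle\mu,\theta^t\rangle-D_{\mathrm{KL}}(\mu\Vert\mu^0)=\sum_i w^t_i-d-D_{\mathrm{KL}}(\mu\Vert w^t)$, the maximizer is $\mu^t=\Pi_{\Gamma_\kappa}(w^t)$.
\item The Pythagorean inequality then gives $\langle\nu,\theta^t\rangle-D_{\mathrm{KL}}(\nu\Vert\mu^0)\le F(\theta^t)-D_{\mathrm{KL}}(\nu\Vert\mu^t)$ for every $\nu\in\Gamma_\kappa$.
\item Because $\theta^{t+1}=\theta^t+\phi(\mu^t)$, this yields $F(\theta^{t+1})\le F(\theta^t)+\max_{\mu\in\Gamma_\kappa}\{\langle\mu,\phi(\mu^t)\rangle-D_{\mathrm{KL}}(\mu\Vert\mu^t)\}$.
\item The log-sum inequality bounds the last term by $d\log\bigl(\tfrac1d\sum_i\mu^t_i\psi_i(\mu^t)\bigr)=0$.
\end{itemize}
Hence $F(\theta^T)\le F(\theta^0)=0$. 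Therefore $\sum_{t}\sum_i q_i\phi_i(\mu^t)=\langle q,\theta^T\rangle\le D_{\mathrm{KL}}(q\Vert\mu^0)$ for every $q\in\Gamma_\kappa$. This is exactly the bound you feed into Jensen, after which the rest of your argument goes through unchanged.
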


\paragraph{The algorithm.}
We maintain an unprojected positive measure $w^t$ together with its
$\kappa$-dense KL projection $\mu^t:=\Pi_{\Gamma_\kappa}(w^t)\in\Gamma_\kappa$.
At each iteration, the projected measure $\mu^t$ determines the
weighted covariance matrix $\Sigma(\mu^t)
=
\sum_{i=1}^n\mu_i^t x_ix_i^\top$
and its inverse $M^t=M(\mu^t)$.
The unprojected weights are then updated multiplicatively according
to the corresponding quadratic scores, $w_i^{t+1}
=
w_i^t x_i^\top M^t x_i.$
Thus, the projection is used to determine the measure supplied to the
covariance oracle, while the multiplicative update is applied to the
unprojected weights.


\begin{algorithm}[htb]
\caption{Non-Private John Ellipsoid Approximation}
\label{alg:non-private}
\begin{algorithmic}[1]
\State \textbf{Input:}
$P=\{x_i\}_{i=1}^n$,
$\gamma,\kappa\in(0,1)$.
\State
$T\gets
\left\lceil\log(1/\kappa)/\gamma\right\rceil$
\State
$w_i^0\gets d/n$
for all $i\in[n]$
\For{$t=0,\ldots,T-1$}
    \State
    $\mu^t\gets
    \Pi_{\Gamma_\kappa}(w^t)$
    \State
    $M^t\gets
    \left(
    \sum_{i=1}^n
    \mu_i^t x_ix_i^\top
    \right)^{-1}$
    \For{$i=1,\ldots,n$}
        \State
        $w_i^{t+1}
        \gets
        w_i^t x_i^\top M^t x_i$
    \EndFor
\EndFor
\State
$\overline\mu\gets
\frac1T\sum_{t=0}^{T-1}\mu^t$
\State
\Return
$\left(
\sum_{i=1}^n
\overline\mu_i x_ix_i^\top
\right)^{-1}$
\end{algorithmic}
\end{algorithm}


\paragraph{Analysis.}
The main ingredient of our analysis is a
KL-potential inequality that compares the
algorithm's iterates with any fixed
$\kappa$-dense measure. 
Recall that
$\phi_i(\mu):=\log(x_i^\top M(\mu)x_i)$.

\begin{lemma}
\label{lem:sum-bound}
Let
$\overline\mu=
\frac1T\sum_{t=0}^{T-1}\mu^t$.
Then, for every $q\in\Gamma_\kappa$,
\[
\sum_{i=1}^n q_i\phi_i(\overline\mu)
\le
\frac{D_{\mathrm{KL}}(q\Vert\mu^0)}{T}
\]
\end{lemma}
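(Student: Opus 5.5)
The plan is to combine the convexity of $\phi_i$ (from~\cite{pmlr-v99-cohen19a}, recalled in the preliminaries) with a telescoping argument on the multiplicative updates of the unprojected weights. First, Jensen's inequality gives $\phi_i(\overline\mu)\le\frac1T\sum_{t=0}^{T-1}\phi_i(\mu^t)$ for each $i$, since $\overline\mu$ is an average of the $\mu^t$ and each $\Sigma(\mu^t)$ is positive definite. As $q\ge0$, it suffices to show
\[
\sum_{t=0}^{T-1}\sum_{i=1}^n q_i\phi_i(\mu^t)\le D_{\mathrm{KL}}(q\Vert\mu^0).
\]

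Next I use the update rule $w_i^{t+1}=w_i^t\psi_i(\mu^t)$. It gives $\phi_i(\mu^t)=\log(w_i^{t+1}/w_i^t)$, so the double sum telescopes to $\sum_i q_i\log(w_i^T/w_i^0)$. Two observations then simplify matters. The uniform measure $w^0\equiv d/n$ already lies in $\Gamma_\kappa$, because $d/n\le d/(\kappa n)$; hence $\mu^0=w^0$. Moreover $\sum_i q_i=\sum_i w^0_i=d$. Therefore
\[
\sum_i q_i\log\frac{w_i^T}{w_i^0}=D_{\mathrm{KL}}(q\Vert\mu^0)-\sum_i q_i\log\frac{q_i}{w_i^T},
\]
and the lemma reduces to showing $\sum_i q_i\log(q_i/w_i^T)\ge0$.

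To establish this, set $W_T:=\sum_i w_i^T$ and normalize $\tilde w^T:=\frac{d}{W_T}w^T$, which has total mass $d$. Then
\[
\sum_i q_i\log\frac{q_i}{w_i^T}=D_{\mathrm{KL}}(q\Vert\tilde w^T)-d\log\frac{W_T}{d}.
\]
Since $D_{\mathrm{KL}}(q\Vert\tilde w^T)\ge0$ (Gibbs), it is enough to prove the mass bound $W_T\le d$; inductively, $\sum_i w_i^{t}\psi_i(\mu^t)\le\sum_i w_i^t$. Here I would use the explicit form of the KL projection from Appendix~\ref{apx:kl-projection}, namely $\mu^t_i=\min(\lambda_t w_i^t,\,d/(\kappa n))$ with $\lambda_t\ge1$ whenever $\sum_i w^t_i\le d$. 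On the uncapped coordinates this gives $w_i^t\le\mu_i^t$, so $\sum_{i\text{ uncapped}} w_i^t\psi_i(\mu^t)\le\sum_i\ell_i(\mu^t)=d$ by the leverage-score identity.

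The main obstacle is the capped coordinates, where $w^t_i$ may exceed $\mu^t_i$. If the crude mass bound fails there, my fallback is a per-step (rather than fully telescoped) potential argument, again controlling $\sum_i q_i\log\psi_i(\mu^t)$ through $\log\bigl(\sum_i w_i^t\psi_i(\mu^t)/\sum_i w_i^t\bigr)$:
\begin{enumerate}
\item bound $\sum_i q_i\phi_i(\mu^t)$ by a difference of KL divergences to the normalized iterates;
\item invoke the Pythagorean inequality $D_{\mathrm{KL}}(q\Vert\Pi_{\Gamma_\kappa}(\nu))\le D_{\mathrm{KL}}(q\Vert\nu)$ to move between $w^{t}$ and $\mu^{t}$;
\item use the scale invariance of $\Pi_{\Gamma_\kappa}$ (projecting $cw$ and $w$ yields the same measure), so that normalizing the $w^t$ does not change the $\mu^t$.
\end{enumerate}
I expect this bookkeeping of the total mass of the unprojected weights against the cap to be the delicate step.
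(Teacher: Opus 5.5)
Your reduction is correct, and it targets exactly what the paper proves. The inequality $\sum_i q_i\log(q_i/w_i^T)\ge 0$ for all $q\in\Gamma_\kappa$ is literally the paper's claim $F(\theta^T)\le 0$, and your Jensen step is the paper's last step. The gap is in how you propose to prove that inequality.

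\textbf{The mass bound $W_T\le d$ is false.} Take $d=1$, $n=4$, $\kappa=1/2$ (so the cap is $1/2$) and $x=(10,1,1,1)$.
\begin{itemize}
\item Starting from the uniform measure, $W_1=1$ and $w_1^1\approx 0.971$.
\item The projection caps the first coordinate, giving $\mu^1=(1/2,1/6,1/6,1/6)$ and $\psi_1(\mu^1)=100/50.5\approx 1.98$.
\item Since $\mu^t=\mu^1$ for all $t\ge 1$, $W_t$ grows geometrically; already $W_2\approx 1.92>d$.
\end{itemize}
Conceptually, passing through $\tilde w^T$ and Gibbs' inequality replaces the minimum over $\Gamma_\kappa$ by the minimum over all measures of mass $d$, which equals $-d\log(W_T/d)$. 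This discards the cap $q_i\le d/(\kappa n)$. The cap is the only reason the inequality survives once capped coordinates have accumulated excess weight.

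\textbf{Your fallback does not repair this as stated.}
\begin{itemize}
\item Telescoping KL divergences to the normalized iterates $\tilde w^t$ leaves the same residual $\sum_t d\log(W_{t+1}/W_t)=d\log(W_T/d)$.
\item Scale invariance of $\Pi_{\Gamma_\kappa}$ does not affect that residual.
\item The weak Pythagorean inequality $D_{\mathrm{KL}}(q\Vert\mu^T)\le D_{\mathrm{KL}}(q\Vert\tilde w^T)$ only gives $D_{\mathrm{KL}}(q\Vert\tilde w^T)\ge 0$, not the needed $\ge d\log(W_T/d)$.
\item The greedy recursion $D_{\mathrm{KL}}(q\Vert\mu^{t+1})\le D_{\mathrm{KL}}(q\Vert\nu)$, with $\nu$ having entries $\mu_i^t\psi_i(\mu^t)$, is not available either. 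The algorithm projects $w^t\psi(\mu^t)$, which is not proportional to $\nu$ once some coordinate is capped.
\end{itemize}

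\textbf{The missing ingredient} is a potential that retains the constraint, shown to be monotone via the \emph{three-point} Pythagorean inequality.
\begin{itemize}
\item Let $G_t:=\min_{q\in\Gamma_\kappa}\sum_i q_i\log(q_i/w_i^t)$. Its minimizer is $\mu^t$, and $G_0=0$ because $w^0=\mu^0\in\Gamma_\kappa$.
\item Since $\sum_i q_i\log(q_i/w_i^t)=D_{\mathrm{KL}}(q\Vert w^t)-\sum_i w_i^t+d$, the inequality $D_{\mathrm{KL}}(q\Vert w^t)\ge D_{\mathrm{KL}}(q\Vert\mu^t)+D_{\mathrm{KL}}(\mu^t\Vert w^t)$ gives $\sum_i q_i\log(q_i/w_i^t)\ge G_t+D_{\mathrm{KL}}(q\Vert\mu^t)$ for every $q\in\Gamma_\kappa$.
\item Subtracting $\sum_i q_i\phi_i(\mu^t)$ from both sides yields
\[
\sum_i q_i\log\frac{q_i}{w_i^{t+1}}\ \ge\ G_t+\sum_i q_i\log\frac{q_i}{\mu_i^t\psi_i(\mu^t)}\ \ge\ G_t .
\]
\item The last step is the log-sum inequality together with the leverage identity $\sum_i\mu_i^t\psi_i(\mu^t)=d$. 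That identity is applied at the projected measure $\mu^t$, where it holds exactly, not at $w^t$.
\end{itemize}
Hence $G_{t+1}\ge G_t$, so $G_T\ge 0$, which is exactly what your reduction requires. This is the paper's proof, written there as the statement that $F(\theta^t)=-G_t$ is non-increasing.
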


\begin{proof}
Define $\theta_i^t:=\log\left(\frac{w_i^t}{\mu_i^0}\right),~
F(\theta):=
\max_{\mu\in\Gamma_\kappa}
\left\{
\langle\mu,\theta\rangle
-D_{\mathrm{KL}}(\mu\Vert\mu^0)
\right\}$.
Since $w_i^t=\mu_i^0e^{\theta_i^t}$ and
$\mu^t=\Pi_{\Gamma_\kappa}(w^t)$, the measure $\mu^t$ attains the
maximum in \[F(\theta^t) = \max_{\mu\in\Gamma_\kappa}
\left\{
\langle\mu, \log\left(\frac{w^t}{\mu^0}\right) \rangle
+\sum_i w_i^t-d-\langle \mu, \log\left(\frac{\mu}{\mu^0}\right)\rangle 
\right\} = \max_{\mu\in\Gamma_\kappa}\left\{ \sum_i w_i^t-d-D_{\mathrm{KL}}(\mu || w^t) \right\} \] 
By the Pythagorean property of KL projection (Theorem~\ref{thm:bregman-pythagorean})
for every
$\nu\in\Gamma_\kappa$,
\[
F(\theta^t) - \left(
\langle \nu,\theta^t\rangle
-D_{\mathrm{KL}}(\nu\Vert\mu^0)\right)
= -D_{\mathrm{KL}}(\mu^t||w^t) + D_{\mathrm{KL}}(\nu||w^t) \geq D_{\mathrm{KL}}(\nu||\mu^t)
\]
Now note that the update rule gives $\theta_i^{t+1}
=
\theta_i^t+\phi_i(\mu^t)$. Hence, 
\begin{align*}
F(\theta^{t+1})
&= \max_{\mu\in \Gamma_{\kappa}}\left\{ \langle\mu, \theta^t \rangle  + \sum_{i}\mu_i \phi_i(\mu^t) - D_{\mathrm{KL}}(\mu\Vert\mu^0)   \right\}
\leq \max_{\mu\in \Gamma_{\kappa}} \left\{ F(\theta^t)-D_{\mathrm{KL}}(\mu\Vert \mu^t) + \sum_{i}\mu_i \phi_i(\mu^t) \right\}
\cr&=
F(\theta^t)
+
\max_{\mu\in\Gamma_\kappa}
\left\{
\sum_i \mu_i \left(\log(e^{\phi_i(\mu^t)})
-\log(\frac{\mu_i}{\mu^t_i})\right)
\right\} = F(\theta^t)
+
\max_{\mu\in\Gamma_\kappa}
\left\{
\sum_i \mu_i \left(-\log(\mu_i/\mu_i^t e^{\phi_i(\mu^t)}\right)
\right\}
\cr&
\stackrel{(\ast)}\le F(\theta^t) + 
d \left(\log(d/\sum_i\mu_i^t x_i^\top M(\mu^t)x_i\right)
\stackrel{(\ast\ast)}\le
F(\theta^t) + d\log(\frac{d}{d}) = F(\theta^t)
\end{align*}
where inequality $(\ast)$ follows from the log-sum inequality and the fact that $\sum_i \mu_i = d$, and inequality $(\ast\ast)$ is due to  $\sum_i \mu_i e^{\phi_i(\mu^t)} = \sum_i\mu_i^t x_i^\top M(\mu^t)x_i=d$. As
$\theta^0=0$, we have $F(\theta^T)\le F(\theta^0)=0$.
Fix any $q\in\Gamma_\kappa$ and note that $\theta_i^T = \sum_{t}\phi_i(\mu^t)$ so
\[
\sum_{t=0}^{T-1}\sum_i q_i\phi_i(\mu^t)
=
\langle q,\theta^T\rangle
\le
F(\theta^T) + D_{\mathrm{KL}}(q\Vert\mu^0) \leq D_{\mathrm{KL}}(q\Vert\mu^0)
\]
Dividing by $T$ and applying Jensen's inequality to the convex
functions $\phi_i$ gives
\[
\sum_i q_i\phi_i(\overline\mu)
\le
\frac1T
\sum_{t=0}^{T-1}
\sum_iq_i\phi_i(\mu^t)
\le
\frac{D_{\mathrm{KL}}(q\Vert\mu^0)}{T}\qedhere
\]
\end{proof}


We now give the summary of the proof of Theorem~\ref{thm:main-non-private}.
The uniform initialization and the density
constraint imply
\[
D_{\mathrm{KL}}(q\Vert\mu^0)
\le d\log(1/\kappa),
\qquad
\forall q\in\Gamma_\kappa
\]
Combining this bound with
Lemma~\ref{lem:sum-bound} and our choice of
$T$ gives
\begin{equation}
\label{eq:bounded_dense_measure}
\sup_{q\in\Gamma_\kappa}
\sum_{i=1}^n \frac{q_i} d
\log\!\left(x_i^\top M(\overline\mu)x_i\right)
\le \frac 1 d \cdot \frac{d\log(1/\kappa)} T \leq \gamma
\end{equation}
In particular, 
sort the $n$ constraints $x_{(1)}, x_{(2)},..., x_{(n)}$ in descending order according to $\phi_i(\bar \mu)$, and 
set $q$ as the uniform (dense) measure over the top $\kappa n$ constraints. Plugging-in this $q$ into Equation~(\ref{eq:bounded_dense_measure}) above we obtain $\phi_{(\kappa n)}(\bar\mu) \leq \sum_{i=1}^{\kappa n} \frac{1} {\kappa n} \phi_{(i)}(\bar\mu) \leq \gamma$,
asserting that 
at least $(1-\kappa)n$
constraints satisfy $x_i^\top M(\overline\mu)x_i\le e^\gamma$.
The corresponding ellipsoid therefore
satisfies the trimmed-containment guarantee
after scaling by $e^{-\gamma/2}$.

Finally, the dual formulation of the John
ellipsoid yields the volume guarantee of Theorem~\ref{thm:main-non-private},
while the normalization
$\sum_i\overline\mu_i=d$ implies
$K\subseteq\sqrt d\,E$ as shown in~\cite{pmlr-v99-cohen19a}.
The complete arguments are provided in
Appendix~\ref{app:non-private-proofs}.

\remove{
\section{A Non-Private Algorithm for Approximating the John Ellipsoid -- \bmnote{OLD}}
\label{sec:non-private-alg}

In this section we present the deterministic version of our algorithm that obtains a $(1+\gamma)$-approximate $\kappa$-trimmed John solution, namely an ellipsoid that is feasible for a $(1-\kappa)$-fraction of the input constraints after a small multiplicative scaling, while achieving near-optimal volume. The algorithm maintains a measure
\(\mu\in\mathbb{R}_{\ge 0}^n\) over the input points with total mass \(d\). Given such a measure, we form the weighted second-moment matrix $\Sigma(\mu) := \sum_{i=1}^n \mu_i x_i x_i^\top$ and use its inverse $M(\mu) := \Sigma(\mu)^{-1}$ as the current ellipsoid matrix. The multiplicative update is driven by the logarithmic quadratic scores $\phi_i(\mu) := \log\!\left(x_i^\top M(\mu)x_i\right)$.

The key structural constraint is that the measure is projected after every update onto the set
\(\Gamma_\kappa\) of \(\kappa\)-dense measures with total mass \(d\). Projecting onto \(\Gamma_\kappa\) ensures that no single point receives too much mass.

The analysis proceeds in three steps. First, we show that the multiplicative update preserves total mass before projection. Second, we prove a KL-potential inequality comparing the algorithm to any fixed \(q\in\Gamma_\kappa\). Finally, we convert this averaged logarithmic guarantee into a trimmed containment statement and a log-determinant approximation guarantee.

\begin{algorithm}[H]
\caption{Non-Private John Ellipsoid Approximation}
\label{alg:non-private}
\begin{algorithmic}
\State \hspace*{-\algorithmicindent} \textbf{Input:} $P = \{x_i\}_{i=1}^n \subseteq \mathbb{R}^d$, parameters $\gamma, \kappa \in (0,1)$.
\State \hspace*{-\algorithmicindent} \textbf{Output:} a $(1+\gamma)$-approximation of the John Ellipsoid of $P$.
    \State $T \gets \frac{\log(\nicefrac{1}{\kappa})}{\gamma}$
    \For{$i = 1, \dots, n$}
        \State $\mu^0_i \gets \frac{d}{n}$
    \EndFor
    \State Let $\Gamma_\kappa$ be the set of $\kappa$-dense measures with total mass $d$.
    \For{$t = 0, \dots, T-1$}
        \State $\mu^t \gets \Pi_{\Gamma_\kappa} (\hat{\mu}^{t})$
        \State $M^t \gets $ \Call{Oracle}{$P, \mu^t$}
        \For{$i = 1, \dots, n$}
            \State $\ell_t(x_i) \gets -\log(x_i^\top M^t x_i)$
            \State $\hat{\mu}^{t+1}_i \gets \mu^t_i \cdot e^{-\ell_t(x_i)}$ 
        \EndFor
    \EndFor
    \For{$i = 1, \dots, n$}
        \State $\overline\mu_i \gets \frac{1}{T}\sum\limits_{t=0}^{T-1} \mu_i^t$
    \EndFor
    \State $M \gets $ \Call{Oracle}{$P, \overline\mu$}
    \State \Return $M$
\end{algorithmic}
\end{algorithm}

\begin{algorithm}[H]
\caption{Oracle for John Ellipsoid Approximation}
\begin{algorithmic}
\State \hspace*{-\algorithmicindent} \textbf{Input:} $P = \{x_i\}_{i=1}^n \subseteq \mathbb{R}^d$, measure $\mu = (\mu_1, \dots, \mu_n)$.
\Procedure{Oracle}{$P, \mu$}
    \State Decompose $\sum\limits_{x_i \in P} \mu_ix_ix_i^\top = U \Lambda U^\top$
    \State \Return $U \Lambda^{-1} U^\top$
\EndProcedure
\end{algorithmic}
\end{algorithm}

We now prove the main potential inequality. The proof follows the standard multiplicative-weights/KL-projection template, but the special choice of logarithmic scores makes the update compatible with the inverse covariance oracle.
\begin{lemma}\label{lem:sum-bound}
Let $\psi_i(\mu) := x_i^\top M(\mu) x_i$, where $M(\mu) := \left(\sum_{i=1}^n \mu_i x_i x_i^\top \right)^{-1}$,
and define $\phi_i(\mu) := \log\big(\psi_i(\mu)\big)$. Let $\overline{\mu} := \frac{1}{T} \sum_{t=0}^{T-1} \mu^t$.

Then, for every $q \in \Gamma_\kappa$,
\[
\sum_{i=1}^n q_i \, \phi_i(\overline{\mu})
\;\le\;
\frac{D_{\mathrm{KL}}(q \Vert \mu^0)}{T}
\]
\end{lemma}

\begin{proof}
By the update rule of the algorithm, we have $\hat{\mu}_i^{t+1} = \mu_i^t \, \psi_i(\mu^t)$.
Using Lemma~\ref{lem:leverage-score-properties}, it follows that
\[
\sum_{i=1}^n \hat{\mu}_i^{t+1}
= \sum_{i=1}^n \mu_i^t \psi_i(\mu^t)
= d
\]
The next iterate is obtained via projection onto $\Gamma_\kappa$:~$\mu^{t+1} = \Pi_{\Gamma_\kappa}(\hat{\mu}^{t+1})$.
By the Pythagorean property of the KL projection, for every $q \in \Gamma_\kappa$,
\[
D_{\mathrm{KL}}(q \Vert \mu^{t+1})
\;\le\;
D_{\mathrm{KL}}(q \Vert \hat{\mu}^{t+1})
\]
We now expand the right-hand side:
\begin{align*}
D_{\mathrm{KL}}(q \Vert \hat{\mu}^{t+1})
&= \sum_{i=1}^n q_i \log\!\left(\frac{q_i}{\hat{\mu}_i^{t+1}}\right)
+ \sum_{i=1}^n \hat{\mu}_i^{t+1} - \sum_{i=1}^n q_i \\
&= \sum_{i=1}^n q_i \log\!\left(\frac{q_i}{\mu_i^t \psi_i(\mu^t)}\right)
+ \sum_{i=1}^n \hat{\mu}_i^{t+1} - \sum_{i=1}^n q_i \\
&= \sum_{i=1}^n q_i \log\!\left(\frac{q_i}{\mu_i^t}\right)
- \sum_{i=1}^n q_i \log\big(\psi_i(\mu^t)\big)
+ \sum_{i=1}^n \hat{\mu}_i^{t+1} - \sum_{i=1}^n q_i \\
&= D_{\mathrm{KL}}(q \Vert \mu^t)
- \sum_{i=1}^n q_i \phi_i(\mu^t)
+ \Big( \sum_{i=1}^n \hat{\mu}_i^{t+1} - \sum_{i=1}^n \mu_i^t \Big)
\end{align*}
Since both sums equal $d$, the last term vanishes, and we obtain
\[
D_{\mathrm{KL}}(q \Vert \hat{\mu}^{t+1})
= D_{\mathrm{KL}}(q \Vert \mu^t)
- \sum_{i=1}^n q_i \phi_i(\mu^t)
\]
Combining with the projection inequality yields
\[
\sum_{i=1}^n q_i \phi_i(\mu^t)
\;\le\;
D_{\mathrm{KL}}(q \Vert \mu^t)
- D_{\mathrm{KL}}(q \Vert \mu^{t+1})
\]
Summing over $t=0,\dots,T-1$ and using telescoping gives
\[
\sum_{t=0}^{T-1} \sum_{i=1}^n q_i \phi_i(\mu^t)
\;\le\;
D_{\mathrm{KL}}(q \Vert \mu^0)
- D_{\mathrm{KL}}(q \Vert \mu^T)
\;\le\;
D_{\mathrm{KL}}(q \Vert \mu^0)
\]
Dividing by $T$, we obtain
\[
\frac{1}{T} \sum_{t=0}^{T-1} \sum_{i=1}^n q_i \phi_i(\mu^t)
\;\le\;
\frac{D_{\mathrm{KL}}(q \Vert \mu^0)}{T}
\]
Finally, Lemma~\ref{lem:convexity} shows that $\phi_i(\mu)=\log(\psi_i(\mu))$ is convex in $\mu$, thus Jensen's inequality implies
\[
\sum_{i=1}^n q_i \phi_i(\overline{\mu})
\;\le\;
\frac{1}{T} \sum_{t=0}^{T-1} \sum_{i=1}^n q_i \phi_i(\mu^t)
\]
which completes the proof.
\end{proof}

To turn the preceding inequality into an explicit convergence rate, we need a uniform upper bound on the KL divergence from any comparator \(q\in\Gamma_\kappa\) to the uniform initialization.
\begin{lemma}\label{lem:max-kl}
Suppose $\mu^0$ is initialized as the uniform measure $\mu_i^0 = \frac{d}{n}$ for all $i \in [n]$. Then
\[
\max_{q \in \Gamma_{\kappa}} D_{\mathrm{KL}}(q \Vert \mu^0)
\;\le\;
d \log\!\left(\frac{1}{\kappa}\right)
\]
\end{lemma}

\begin{proof}
For any $q \in \Gamma_{\kappa}$, we compute
\begin{align*}
D_{\mathrm{KL}}(q \Vert \mu^0)
&= \sum_{i=1}^n q_i \log\!\left(\frac{q_i}{\mu_i^0}\right)
+ \sum_{i=1}^n \mu_i^0 - \sum_{i=1}^n q_i \\
&= \sum_{i=1}^n q_i \log\!\left(\frac{q_i}{d/n}\right)
\end{align*}
where we used $\sum_i \mu_i^0 = \sum_i q_i = d$.

Expanding the logarithm gives
\begin{align*}
D_{\mathrm{KL}}(q \Vert \mu^0)
&= \sum_{i=1}^n q_i \log(q_i)
+ \sum_{i=1}^n q_i \log\!\left(\frac{n}{d}\right) \\
&= \sum_{i=1}^n q_i \log(q_i)
+ d \log\!\left(\frac{n}{d}\right)
\end{align*}
Defining the entropy $H(q) := -\sum_{i=1}^n q_i \log(q_i)$, we obtain
\[
D_{\mathrm{KL}}(q \Vert \mu^0)
= d \log\!\left(\frac{n}{d}\right) - H(q)
\]

Thus, maximizing $D_{\mathrm{KL}}(q \Vert \mu^0)$ over $q \in \Gamma_{\kappa}$ is equivalent to minimizing $H(q)$ subject to $q \in \Gamma_{\kappa}$. By the definition of $\Gamma_{\kappa}$, each coordinate satisfies $q_i \le \frac{d}{\kappa n}$ and $\sum_i q_i = d$. The entropy is minimized by concentrating as much mass as possible on as few coordinates as allowed by these constraints. Hence, the minimizer $q^*$ assigns mass $\frac{d}{\kappa n}$ to exactly $\kappa n$ coordinates and zero elsewhere.

For this choice,
\begin{align*}
D_{\mathrm{KL}}(q^* \Vert \mu^0)
&= \sum_{i=1}^{\kappa n} \frac{d}{\kappa n}
\log\!\left(\frac{\frac{d}{\kappa n}}{\frac{d}{n}}\right) \\
&= \sum_{i=1}^{\kappa n} \frac{d}{\kappa n}
\log\!\left(\frac{1}{\kappa}\right)
= d \log\!\left(\frac{1}{\kappa}\right)
\end{align*}
This establishes the claimed bound.
\end{proof}

\begin{corollary}\label{cor:iter-bound}
Let $\overline{\mu} := \frac{1}{T}\sum_{t=0}^{T-1}\mu^t$,
and suppose that $\mu^0_i = \frac{d}{n}$ for all $i \in [n]$.

If $T \ge \frac{\log(1/\kappa)}{\gamma}$, then for every $q \in \Gamma_\kappa$,
\[
\sum_{i=1}^n q_i \phi_i(\overline{\mu}) \le d\gamma
\]
\end{corollary}

\begin{proof}
Combining Lemma~\ref{lem:sum-bound} and Lemma~\ref{lem:max-kl} yields that for every $q \in \Gamma_\kappa$,
\[
\sum_{i=1}^n q_i \phi_i(\overline{\mu})
\le
\frac{D_{\mathrm{KL}}(q \Vert \mu^0)}{T}
\quad\text{and}\quad
D_{\mathrm{KL}}(q \Vert \mu^0)
\le
d \log\!\left(\frac{1}{\kappa}\right)
\]

Therefore,
\[
\sum_{i=1}^n q_i \phi_i(\overline{\mu})
\le
\frac{d\log(1/\kappa)}{T}
\le
d\gamma
\]
when $T \ge \frac{\log(1/\kappa)}{\gamma}$ as claimed.
\end{proof}

The next step converts the logarithmic guarantee, which holds against every dense comparator \(q\), into a geometric containment guarantee. Intuitively, if more than a \(\kappa\)-fraction of the points violated the ellipsoid constraint by more than \(e^\gamma\), then placing a dense comparator on those violating points would contradict the previous bound.
\begin{lemma}[Trimmed containment guarantee]
\label{lem:trimmed-containment}\normalfont
Let $\phi_i(\overline{\mu}) := \log\!\big(x_i^{\top}M(\overline{\mu})x_i\big)$, and suppose that $\forall q \in \Gamma_{\kappa}, \quad
\sum_{i=1}^n q_i \phi_i(\overline{\mu}) \le d\gamma$.

Define
\[
E_\gamma := \left\{ i \in [n] : x_i^{\top}M(\overline{\mu})^{-1}x_i \le e^{-\gamma} \right\}
\]
Then,
\[
|E_\gamma| \ge (1-\kappa)n
\]
\end{lemma}

\begin{proof}
Let $S \subseteq [n]$ with $|S| \ge \kappa n$, and define $q_i := \frac{d}{|S|}\mathbf{1}_S(i)$.

Then $q \in \Gamma_\kappa$, since $\sum_{i=1}^n q_i = d$ and $q_i = \frac{d}{|S|} \le \frac{d}{\kappa n}$.

Applying the assumption yields
\[
\frac{d}{|S|} \sum_{i \in S} \phi_i(\overline{\mu}) \le d\gamma
\quad\Longrightarrow\quad
\frac{1}{|S|} \sum_{i \in S} \phi_i(\overline{\mu}) \le \gamma
\]

Now let $\{\phi_{(j)}(\overline{\mu})\}_{j=1}^n$ denote the values sorted in non-increasing order:
\[
\phi_{(1)}(\overline{\mu}) \ge \phi_{(2)}(\overline{\mu}) \ge \cdots \ge \phi_{(n)}(\overline{\mu})
\]
Applying the previous bound to the set $S$ corresponding to the largest $\kappa n$ values, we obtain
\[
\frac{1}{\kappa n} \sum_{j=1}^{\kappa n} \phi_{(j)}(\overline{\mu}) \le \gamma
\]

We claim that at most $\kappa n$ indices $i$ satisfy $\phi_i(\overline{\mu}) > \gamma$. Indeed, otherwise more than $\kappa n$ terms would exceed $\gamma$, implying that the average of the top $\kappa n$ values is strictly larger than $\gamma$, and we have contradiction.

Therefore,
\[
\left| \left\{ i \in [n] : \phi_i(\overline{\mu}) > \gamma \right\} \right| \le \kappa n
\]
which is equivalent to
\[
\left| \left\{ i \in [n] : x_i^{\top}M(\overline{\mu})x_i \le e^{\gamma} \right\} \right| \ge (1-\kappa)n
\]
This completes the proof (See~\ref{subsubsec:polar-interpretation}).
\end{proof}

The containment theorem implies that a small scaling of the output matrix yields a feasible ellipsoid for a \((1-\kappa)\)-fraction of the data points.

\begin{corollary}[Approximate $\kappa$-trimmed John ellipsoid]
\label{cor:approx-trimmed-john}
Let
\[
E_\gamma \coloneqq \left\{x \in \mathbb{R}^d : x^\top e^{\gamma}M(\overline{\mu})^{-1}x \le 1\right\}
\]
and suppose that
\[
\forall q \in \Gamma_\kappa, \qquad
\sum_{i=1}^n q_i \log\!\bigl(x_i^\top M(\overline{\mu})x_i\bigr) \le d\gamma
\]


Then:
\begin{enumerate}
    \item \textbf{Trimmed containment}. $E_\gamma$ is feasible for at least $(1-\kappa)n$ constraints from $P$;

    \item \textbf{Volume approximation}. $\operatorname{vol}(E_\gamma) \ge e^{-d\gamma/2}\operatorname{vol}(E^*)$, where $E^*$ is the John ellipsoid of $P$;
\end{enumerate}

In particular, $\overline{\mu}$ is a $(1+\gamma)$-approximate $\kappa$-trimmed John solution: after a multiplicative scaling of $M(\overline{\mu})$ by $e^{-\gamma}$, it becomes feasible for a $(1-\kappa)$-fraction of the points while preserving near-optimal volume.
\end{corollary}

\begin{proof}
By Theorem~\ref{lem:trimmed-containment}, we have
\[
\left|\left\{i \in [n] : x_i^\top M(\overline{\mu})x_i \le e^\gamma\right\}\right|
\ge (1-\kappa)n
\]

Additionally we have
\[
x_i^\top M(\overline{\mu})x_i \le e^\gamma
\;\Longleftrightarrow\;
x_i^\top e^{-\gamma}M(\overline{\mu}) x_i \le 1
\]
which proves the trimmed containment.

Next, since $\overline{\mu}$ is a $(1+\gamma)$-approximate $\kappa$-trimmed John solution, Lemma~\ref{lem:volume-interpretation} implies
\[
\operatorname{vol}(E_\gamma)
\;\ge\;
e^{-d\gamma/2}\operatorname{vol}(E^*)
\]
\end{proof}

We conclude this section by stating the following theorem.
\begin{theorem}\label{thm:main-non-private}
Let $P=\{x_1,\dots,x_n\}\subseteq\mathbb{R}^d$.
For any $\gamma,\kappa\in(0,1)$, after $T = O\!\left(\frac{\log (1/\kappa)}{\gamma}\right)$ iterations, our algorithm  outputs a positive definite matrix \(M\), defining the ellipsoid
\[
E
:=
\{x\in\mathbb R^d:x^\top M^{-1}x\le1\},
\]
such that:
\begin{itemize}   
    \item there exists a subset $S \subseteq [n]$ with $|S|\ge (1-\kappa)n$ such that
    \[
    e^{-\gamma/2}E \subseteq K_S,
    \]
    where $K_S := \{x : |x_i^\top x|\le 1 \ \forall i\in S\}$; and, for the original polytope we have
    \[
    K \subseteq \sqrt d\, E,
    \]
    where $K:=\{x\in\mathbb R^d: |x_i^\top x|\le 1 \text{ for all } i\in[n]\}$.
    
    \item moreover, we have
    \[
    \operatorname{vol}(e^{-\gamma/2}E)
    \;\ge\;
    e^{-d\gamma/2}\operatorname{vol}(E^*),
    \]
    where $E^*$ is the John ellipsoid of $P$.
\end{itemize}
\end{theorem}

\begin{proof}
    The result is an immediate consequence of Corollaries \ref{cor:approx-trimmed-john} and \ref{cor:iter-bound}.
\end{proof}
}

\section{A Differentially Private Algorithm for Approximating the John Ellipsoid}
\label{sec:private-alg}

We now extend the projected multiplicative-weights
algorithm of Section~\ref{sec:non-private-alg}
to the private setting. The algorithm has the same
structure as its non-private counterpart, but replaces
each exact weighted covariance matrix with a Gaussian-
perturbed one before inversion. The projection onto
$\Gamma_\kappa$ bounds the weight assigned to every
constraint, while the $(\kappa,\tau)$-goodness
assumption guarantees that the weighted covariance
matrices remain sufficiently well-conditioned for the
perturbation to have only a small effect on the update.

We obtain the following utility guarantee.

\begin{theorem}
\label{thm:main-private}
Let
$P=\{x_1,\ldots,x_n\}\subseteq\mathbb R^d$
be $(\kappa,\tau)$-good and suppose that
$\|x_i\|_2\le R$ for every $i\in[n]$.
Let $\gamma,\kappa,\beta\in(0,1)$ and $\rho>0$, and set $T
=
\left\lceil
\frac{2\log(1/\kappa)}{\gamma}
\right\rceil,~
\eta=\frac{\gamma}{8}$.
Suppose
\[
n
\ge
\frac{
32dR^2
\left(
\sqrt d+
\sqrt{2\log((T+1)/\beta)}
\right)
}{
\kappa\gamma\tau
\sqrt{2\rho/(T+1)}
}
\]
Then, with probability at least $1-\beta$,
Algorithm~\ref{alg:private-alg} does not return
$\perp$ and outputs a positive definite matrix
$\widehat M$, such that by defining the ellipsoid
\[
\widehat E_0
:=
\left\{
x\in\mathbb R^d:
x^\top\widehat M^{-1}x\le1
\right\}
\]
and its scaling $\widehat E
:=
\sqrt{1-\eta}\,e^{-\gamma/2}\widehat E_0$ we get that there exists $S\subseteq[n]$ with
$|S|\ge(1-\kappa)n$ such that
\[
\widehat E\subseteq K_S,
\qquad
K\subseteq
\sqrt{d(1+\eta)}\,\widehat E_0,
\]
and moreover
\[
\operatorname{vol}(\widehat E)
\ge
e^{-d\gamma/2}
\left(
\frac{1-\eta}{1+\eta}
\right)^{d/2}
\operatorname{vol}(E^*),
\]
where $E^*$ is the John ellipsoid of $K$.
\end{theorem}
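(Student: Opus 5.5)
The plan is to handle privacy and utility separately. For privacy I will bound the sensitivity of each released noisy matrix and use composition. For utility I will condition on a single good event for the Gaussian noise and rerun the potential argument of Lemma~\ref{lem:sum-bound} with multiplicatively perturbed scores.

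\textbf{Privacy.} I read Algorithm~\ref{alg:private-alg} as releasing $T+1$ noisy matrices $\widehat\Sigma^t=\Sigma(\mu^t)+Z_t$ for $t=0,\ldots,T-1$, together with $\widehat\Sigma^T=\Sigma(\overline\mu)+Z_T$. Here each $Z_t$ is a symmetric Gaussian matrix scaled to a sensitivity bound $\Delta$, and each release gets budget $\rho/(T+1)$. Conditioned on the previously released matrices, the weights $w^t$ of all unchanged constraints coincide on neighbouring inputs. Only the replaced index differs, since $w_i^{t+1}=w_i^t x_i^\top\widehat M^t x_i$. The capped-simplex KL projection then rescales the uncapped coordinates, and this moves their total mass by at most one cap $d/(\kappa n)$.

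I would therefore aim for the bound $\|\Sigma(\mu)-\Sigma'(\mu')\|_F\le \Delta=O(dR^2/(\kappa n))$, with constant $4$ to be safe. One part is the direct contribution of the changed point. The other is the total-variation shift of the other weights, each multiplied by $\|x_jx_j^\top\|\le R^2$. The bound for $\overline\mu$ follows by averaging. Adaptive zCDP composition over the $T+1$ releases then gives $\rho$-zCDP.

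\textbf{Good event and conditioning.} By the Gaussian norm tail and a union bound over $T+1$ matrices, with probability $1-\beta$ every $Z_t$ satisfies $\|Z_t\|_2\le \sigma(\sqrt d+\sqrt{2\log((T+1)/\beta)})$, where $\sigma=\Delta/\sqrt{2\rho/(T+1)}$. The hypothesis on $n$ makes this at most $\eta\tau$.

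Next I prove the conditioning lemma from $(\kappa,\tau)$-goodness. Fix a unit vector $v$ and any $\mu\in\Gamma_\kappa$. The mass on $\mathrm{BAD}_v$ is at most $\frac{\kappa n}{2}\cdot\frac{d}{\kappa n}=\frac d2$, so mass at least $d/2$ sits on points with $\langle x_i,v\rangle^2\ge 2\tau/d$. Hence $v^\top\Sigma(\mu)v\ge\tau$. The same holds for $\overline\mu\in\Gamma_\kappa$ by convexity.

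Therefore $(1-\eta)\Sigma\preceq\widehat\Sigma\preceq(1+\eta)\Sigma$ at every step. The noisy matrix is positive definite, so $\perp$ is never returned. The noisy scores then satisfy $\widehat\psi_i^t\in[\psi_i(\mu^t)/(1+\eta),\,\psi_i(\mu^t)/(1-\eta)]$.

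\textbf{Potential argument and conclusions.} I rerun the proof of Lemma~\ref{lem:sum-bound} with $\theta^{t+1}=\theta^t+\log\widehat\psi^t$. The only changed step is $(\ast\ast)$. There $\sum_i\mu_i^t\widehat\psi_i^t\le d/(1-\eta)$, so each round increases $F$ by at most $d\log\frac1{1-\eta}$. Converting $\log\widehat\psi$ back to $\phi_i(\mu^t)$ costs $\log(1+\eta)$ per unit of mass, and Jensen is applied as before. This gives
\[
\sum_i \tfrac{q_i}{d}\phi_i(\overline\mu)\le \frac{\log(1/\kappa)}{T}+\log\frac{1+\eta}{1-\eta}\le \frac\gamma2+\frac\gamma2=\gamma,
\]
using $T=\lceil 2\log(1/\kappa)/\gamma\rceil$ and $\eta=\gamma/8$.

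The top-$\kappa n$ trimming argument then yields $S$ with $|S|\ge(1-\kappa)n$ and $x_i^\top M(\overline\mu)x_i\le e^\gamma$ for $i\in S$. Since $\widehat M\preceq M(\overline\mu)/(1-\eta)$, we get $x_i^\top\widehat Mx_i\le e^\gamma/(1-\eta)$. The Cauchy–Schwarz argument of Theorem~\ref{thm:main-non-private} then gives $\sqrt{1-\eta}\,e^{-\gamma/2}\widehat E_0\subseteq K_S$.

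For $x\in K$ we have $x^\top\widehat M^{-1}x\le(1+\eta)\sum_i\overline\mu_i\langle x_i,x\rangle^2\le(1+\eta)d$, which is the outer containment. For the volume, compare $\det\widehat M^{1/2}\ge(1+\eta)^{-d/2}\det M(\overline\mu)^{1/2}$ with the dual bound $\det M(\overline\mu)\ge\det M^*$ from the appendix. Together with the scaling factor $((1-\eta)e^{-\gamma})^{d/2}$ this gives the claimed factor.

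\textbf{Main obstacle.} I expect the sensitivity bound to be the hardest step. The KL projection couples all coordinates, so one must show that changing a single $w_j$ moves the projected measure by $O(d/(\kappa n))$ in $\ell_1$. I would try this first via the explicit form of the projection (cap the largest coordinates, rescale the rest), treating separately the cases where the changed coordinate is capped and where it is not.
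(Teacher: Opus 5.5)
Your proposal is correct and follows essentially the same route as the paper. Goodness gives $\lambda_{\min}(\Sigma(\mu))\ge\tau$ on $\Gamma_\kappa$ (including $\overline\mu$), and the union-bounded Gaussian tail together with the sample-size condition gives $\|N^t\|_2\le\eta\tau$ for all $T+1$ calls. The resulting $(1\pm\eta)$ sandwich then drives a rerun of the $F(\theta)$ potential argument; your per-round loss $\log\frac{1+\eta}{1-\eta}$ is simply a slightly sharper accounting than the paper's $4\eta$. Finally, the same sandwich at the final call transfers trimmed containment, outer containment and volume to $\widehat M$.

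One small fix is needed. The oracle returns $\perp$ whenever $\lambda_{\min}(\widehat\Sigma)<\tau/2$, not merely when $\widehat\Sigma$ fails to be positive definite, so you should conclude from $(1-\eta)\Sigma\preceq\widehat\Sigma$ that $\lambda_{\min}(\widehat\Sigma)\ge(1-\eta)\tau\ge\tau/2$. Your monotonicity argument for how the capped KL projection moves under one changed weight is a self-contained substitute for the paper's citation of Bun et al.\ (Lemma 25), though it is not needed for the utility statement itself.
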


\paragraph{The algorithm.}
As in the non-private algorithm, we maintain an unprojected measure
$w^t$ and use its $\kappa$-dense KL projection
$\mu^t=\Pi_{\Gamma_\kappa}(w^t)$ to define the covariance matrix.
The difference is that, at each iteration, we perturb the weighted
covariance matrix $\Sigma(\mu^t)
=
\sum_{i=1}^n\mu_i^t x_ix_i^\top$
with symmetric Gaussian noise and use the inverse of the perturbed
matrix in the multiplicative update. A final independent invocation
of the private oracle is used to release the ellipsoid.


\begin{algorithm}[htb]
\caption{Private John Ellipsoid Approximation}
\label{alg:private-alg}
\begin{algorithmic}[1]
\State \textbf{Input:}
$P=\{x_i\}_{i=1}^n$,
$\gamma,\kappa\in(0,1)$,
$\rho>0$, $\tau>0$, $R>0$.
\State
$T\gets
\left\lceil2\log(1/\kappa)/\gamma\right\rceil$
\State
$\rho_0\gets\rho/(T+1)$
\State
$w_i^0\gets d/n$ for all $i\in[n]$
\For{$t=0,\ldots,T-1$}
    \State
    $\mu^t\gets
    \Pi_{\Gamma_\kappa}(w^t)$ \Comment{Project onto the set of $\kappa$-dense measures}
    \State
    $\widehat M^t\gets
    \Call{Private-Oracle}{P,\mu^t,\rho_0,\kappa,R}$
    \If{$\widehat M^t=\perp$}
        \State \Return $\perp$
    \EndIf
    \For{$i=1,\ldots,n$}
        \State
        $w_i^{t+1}
        \gets
        w_i^t x_i^\top\widehat M^t x_i$
    \EndFor
\EndFor
\State
$\overline\mu\gets
\frac1T\sum_{t=0}^{T-1}\mu^t$
\State
$\widehat M\gets
\Call{Private-Oracle}{P,\overline\mu,\rho_0,\kappa,R}$
\State \Return $\widehat M$
\end{algorithmic}
\end{algorithm}


\begin{algorithm}[tb]
\caption{Private Covariance Oracle}
\label{alg:private-oracle}
\begin{algorithmic}[1]
\Procedure{Private-Oracle}{$P,\mu,\rho_0, \kappa, R$} 
    \State
    $\Sigma\gets
    \sum_{i=1}^n\mu_i x_ix_i^\top$
    \State
    $\sigma\gets
    \frac{4dR^2}{\kappa n\sqrt{2\rho_0}}$
    \State Sample a symmetric Gaussian matrix
    $N$ with scale $\sigma$
    \State $\widehat\Sigma\gets\Sigma+N$
    \If{$\lambda_{\min}(\widehat\Sigma)<\tau/2$} \Comment{$\tau$ is a goodness assumption parameter}
        \State \Return $\perp$
    \EndIf
    \State \Return $\widehat\Sigma^{-1}$
\EndProcedure
\end{algorithmic}
\end{algorithm}

\paragraph{Privacy.}
For any $\mu\in\Gamma_\kappa$, changing
one constraint changes the weighted covariance
matrix by Frobenius norm at most $\frac{4dR^2}{\kappa n}$.
Consequently,
Algorithm~\ref{alg:private-oracle}, when invoked
with a $\kappa$-dense measure $\mu$, satisfies
$\rho_0$-zCDP under the noise calibration above.
The proof is given in
Appendix~\ref{app:private-proofs}.

\paragraph{Utility analysis.}
The first ingredient is that goodness prevents
any $\kappa$-dense weighted covariance matrix
from becoming nearly singular.

\begin{lemma}
[$(\kappa,\tau)$-goodness implies uniform non-degeneracy]
\label{lem:good-assumption}
If $P$ is $(\kappa,\tau)$-good, then
\[
\lambda_{\min}(\Sigma(\mu))
\ge\tau
\qquad
\forall\mu\in\Gamma_\kappa
\]
\end{lemma}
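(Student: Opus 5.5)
The plan is to use the variational characterization of the least eigenvalue: since $\Sigma(\mu)$ is symmetric, $\lambda_{\min}(\Sigma(\mu))=\min_{v\in\mathbb S^{d-1}} v^\top\Sigma(\mu)v$. So it suffices to fix an arbitrary $\mu\in\Gamma_\kappa$ and an arbitrary unit vector $v$, and show
\[
v^\top\Sigma(\mu)v=\sum_{i=1}^n\mu_i\langle x_i,v\rangle^2\ge\tau .
\]
Each summand is nonnegative, so the whole argument is about how much of the mass of $\mu$ can sit on indices where $\langle x_i,v\rangle^2$ is small.

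First I will split $[n]$ into $\mathrm{BAD}_v$ (from Definition~\ref{def:good-input}) and its complement $G_v$. On $\mathrm{BAD}_v$ I will use only the density constraint $\mu_i\le\frac{d}{\kappa n}$ together with $|\mathrm{BAD}_v|<\frac{\kappa n}{2}$. This gives
\[
\sum_{i\in\mathrm{BAD}_v}\mu_i<\frac{\kappa n}{2}\cdot\frac{d}{\kappa n}=\frac d2 .
\]
Since $\sum_i\mu_i=d$, the complementary set $G_v$ then carries mass strictly greater than $d/2$.

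Next, every $i\in G_v$ satisfies $\langle x_i,v\rangle^2\ge\frac{2\tau}{d}$ by definition. Dropping the nonnegative terms indexed by $\mathrm{BAD}_v$, I get
\[
v^\top\Sigma(\mu)v\;\ge\;\sum_{i\in G_v}\mu_i\langle x_i,v\rangle^2\;\ge\;\frac{2\tau}{d}\sum_{i\in G_v}\mu_i\;>\;\frac{2\tau}{d}\cdot\frac d2=\tau .
\]
Because $v$ was arbitrary, taking the minimum over the sphere gives $\lambda_{\min}(\Sigma(\mu))\ge\tau$; in fact the inequality is strict, since the minimum is attained on the compact sphere. In particular $\Sigma(\mu)$ is invertible for every $\mu\in\Gamma_\kappa$.

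There is no real obstacle here. The only point needing care is that the bound on the bad mass has to use the per-coordinate cap from $\kappa$-density rather than any property of the particular $\mu$. That cap is what makes the $\kappa n/2$ threshold in the goodness definition exactly match the factor $1/2$ of the total mass $d$.
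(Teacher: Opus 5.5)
Your proposal is correct and is essentially the paper's proof: fix a unit vector $v$, use the density cap $\mu_i\le\frac{d}{\kappa n}$ together with $|\mathrm{BAD}_v|<\frac{\kappa n}{2}$ to show the bad set carries mass less than $d/2$, then bound the quadratic form on the complement below by $\frac{2\tau}{d}\cdot\frac d2=\tau$. The paper phrases the same argument with normalized weights $p_i=\mu_i/d$. Your strictness remark is also fine, given $\tau>0$.
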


Therefore, whenever the Gaussian perturbation
satisfies
$\|N^t\|_2\le\eta\tau$, we have
\[
(1-\eta)\Sigma^t
\preceq
\Sigma^t+N^t
\preceq
(1+\eta)\Sigma^t
\]
and hence
\[
\frac{1}{1+\eta}M^t
\preceq
\widehat M^t
\preceq
\frac{1}{1-\eta}M^t
\]
In particular, each noisy logarithmic score differs
from its exact counterpart by at most $2\eta$.

The KL-potential argument of
Section~\ref{sec:non-private-alg} is stable under
these perturbations.

\begin{lemma}[Perturbed potential inequality]
\label{lem:private-sum-bound}
Suppose that
$\|N^t\|_2\le\eta\tau$
for every $t=0,\ldots,T-1$, where
$\eta\le1/2$.
Then, for every $q\in\Gamma_\kappa$,
\[
\sum_{i=1}^n
q_i
\log\!\left(
x_i^\top M(\overline\mu)x_i
\right)
\le
\frac{
D_{\mathrm{KL}}(q\Vert\mu^0)
}{T}
+
4d\eta
\]
\end{lemma}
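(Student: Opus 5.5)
We would rerun the potential argument of Lemma~\ref{lem:sum-bound}, now with the noisy scores $\widehat\phi_i^t:=\log(x_i^\top\widehat M^t x_i)$ in place of $\phi_i(\mu^t)$. Set $\theta_i^t:=\log(w_i^t/\mu_i^0)$ and keep the same $F(\theta):=\max_{\mu\in\Gamma_\kappa}\{\langle\mu,\theta\rangle-D_{\mathrm{KL}}(\mu\Vert\mu^0)\}$. The update of Algorithm~\ref{alg:private-alg} gives $\theta^{t+1}=\theta^t+\widehat\phi^t$. The first step of the earlier proof uses only that $\mu^t=\Pi_{\Gamma_\kappa}(w^t)$ together with the Pythagorean property, so it carries over verbatim:
\[
F(\theta^{t+1})\le F(\theta^t)+\max_{\mu\in\Gamma_\kappa}\Bigl\{\textstyle\sum_i\mu_i\widehat\phi_i^t-D_{\mathrm{KL}}(\mu\Vert\mu^t)\Bigr\}.
\]

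Next we bound this one-step increment. By Lemma~\ref{lem:good-assumption}, $\lambda_{\min}(\Sigma(\mu^t))\ge\tau$, so $\|N^t\|_2\le\eta\tau$ gives $\widehat M^t\preceq\frac{1}{1-\eta}M^t$. Hence $\widehat\phi_i^t\le\phi_i(\mu^t)+\log\frac{1}{1-\eta}\le\phi_i(\mu^t)+2\eta$, using $\eta\le 1/2$. Because $\sum_i\mu_i=d$, the maximum is at most $2d\eta$ plus the noiseless maximum. The log-sum inequality argument $(\ast),(\ast\ast)$ of Lemma~\ref{lem:sum-bound} shows the noiseless maximum is at most $d\log\bigl(\sum_i\mu_i^t\psi_i(\mu^t)/d\bigr)=0$. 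So $F(\theta^{t+1})\le F(\theta^t)+2d\eta$, and since $F(\theta^0)=0$, we get $F(\theta^T)\le 2d\eta T$.

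For a fixed $q\in\Gamma_\kappa$ we then have $\langle q,\theta^T\rangle\le F(\theta^T)+D_{\mathrm{KL}}(q\Vert\mu^0)$. We convert back to exact scores with the other side of the sandwich: $\widehat M^t\succeq\frac{1}{1+\eta}M^t$ gives $\phi_i(\mu^t)\le\widehat\phi_i^t+\log(1+\eta)\le\widehat\phi_i^t+\eta$. This yields
\[
\sum_{t}\sum_i q_i\phi_i(\mu^t)\le\langle q,\theta^T\rangle+d\eta T\le D_{\mathrm{KL}}(q\Vert\mu^0)+3d\eta T.
\]
Dividing by $T$ and applying Jensen's inequality to the convex $\phi_i$ (as at the end of Lemma~\ref{lem:sum-bound}) gives the claim with $3d\eta\le4d\eta$.

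The main obstacle is the increment bound. The noisy multiplicative update no longer preserves total mass: $\sum_i\mu_i^t\widehat\psi_i^t$ need not equal $d$, so the clean identity $(\ast\ast)$ fails. The plan avoids the issue by comparing each $\widehat\phi_i^t$ to the exact $\phi_i(\mu^t)$ coordinatewise, rather than applying the log-sum step to the noisy scores. We must also check two things: that the $F$-based step never needs $w^t$ to have mass $d$, and that Lemma~\ref{lem:good-assumption} applies at every iterate because each $\mu^t$ lies in $\Gamma_\kappa$. If the mass bookkeeping in $F(\theta^t)$ does cause trouble, the fallback is to track the extra term $\sum_i w_i^t-d$, which is bounded by the same $e^{2\eta}$ factor per round.
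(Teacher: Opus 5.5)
Your proposal is correct and follows the paper's proof essentially step for step. It uses the same potential $F$, the same Pythagorean step, and the same per-round increment of $2d\eta$; the paper applies the log-sum inequality to the noisy scores first and then bounds $\sum_i\mu_i^t e^{\widehat\phi_i^t}\le e^{2\eta}d$, which is equivalent to your coordinatewise comparison. Like the paper, you convert back to exact scores before applying Jensen.

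The only substantive difference is in that conversion step. You use the sharper side $\log(1+\eta)\le\eta$ of the Loewner sandwich and obtain $3d\eta$, whereas the paper uses the symmetric bound $2\eta$ in both directions and gets $4d\eta$.
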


Combining Lemma~\ref{lem:private-sum-bound}
with
$D_{\mathrm{KL}}(q\Vert\mu^0)
\le d\log(1/\kappa)$,
our choice
$T\ge2\log(1/\kappa)/\gamma$,
and $\eta=\gamma/8$ gives
\[
\sup_{q\in\Gamma_\kappa}
\sum_{i=1}^n
\frac{q_i} d
\log\!\left(
x_i^\top M(\overline\mu)x_i
\right)
\le \gamma
\]
Exactly as in the non-private analysis, this
implies that at least $(1-\kappa)n$ constraints
satisfy $x_i^\top M(\overline\mu)x_i\le e^\gamma$,
and therefore
$e^{-\gamma/2}E(\overline\mu)\subseteq K_S$
for some $|S|\ge(1-\kappa)n$.

Finally, Gaussian matrix concentration and a union
bound imply that the sample-size assumption of
Theorem~\ref{thm:main-private} guarantees
$\|N^t\|_2\le\eta\tau$ simultaneously for all
$T+1$ oracle calls with probability at least
$1-\beta$. Applying the same matrix-perturbation
bound to the final oracle call transfers the
containment and volume guarantees from
$M(\overline\mu)$ to the released matrix
$\widehat M$.

The complete proofs, including the matrix
perturbation, perturbed KL argument, Gaussian
concentration bound, and final-release analysis,
are given in Appendix~\ref{app:private-proofs}.

\remove{
\section{A Differentially Private Algorithm for Approximating the John Ellipsoid -- \bmnote{OLD}}
\label{sec:private-alg}

In this section we extend the multiplicative-weights framework of
Section~\ref{sec:non-private-alg} to the differentially private setting.
The private algorithm follows the same high-level structure as the
non-private algorithm, but replaces the exact covariance oracle with a
private oracle. In each round, the oracle adds symmetric Gaussian matrix
noise to the weighted covariance matrix $\Sigma(\mu)=\sum_{i=1}^n \mu_i x_i x_i^\top
$ and returns the inverse of the perturbed matrix whenever it is sufficiently
well-conditioned.

The analysis has two parts. First, we prove that the algorithm satisfies
\(\rho\)-zCDP by bounding the sensitivity of the weighted covariance matrix
over \(\kappa\)-dense measures and applying zCDP composition. Second, we prove
utility by showing that, on a suitable good event, the noisy inverse covariance
matrix induces only a small additive perturbation of the logarithmic scores used
by the multiplicative-weights update. This allows us to transfer the
non-private KL-potential argument to the private setting with only a controlled
additive loss.

\paragraph{On the goodness assumption.}
Throughout the private analysis, we assume that the input dataset satisfies the $(\kappa, \tau)$-goodness condition (Definition~\ref{def:good-input}). This condition requires that, in every direction v, fewer than a $\kappa/2$-fraction of the points have squared projection smaller than $2\tau/d$. Geometrically, it ensures that every direction is represented by a sufficiently large fraction of the data, preventing the mass from concentrating near any lower-dimensional subspace.

As shown below in Lemma~\ref{lem:good-assumption}, this assumption implies that for every $\mu \in \Gamma_\kappa$, the weighted covariance matrix satisfies
\[
\lambda_{\min}\!\left(\Sigma(\mu)\right) \ge \tau.
\]
In particular, since the algorithm maintains $\mu^t \in \Gamma_\kappa$ at every iteration, all covariance matrices encountered during the execution remain uniformly well-conditioned.

The parameter $\tau$ therefore captures the minimal directional variance present in any sufficiently large subset of the data, and serves to quantify the signal-to-noise ratio required for the private perturbation to remain stable. In particular, our guarantees degrade gracefully with $\tau$, and no additional structural assumptions beyond this robust spanning condition are required.

\begin{lemma}[$(\kappa,\tau)$-goodness implies uniform non-degeneracy]\label{lem:good-assumption}
Assume that $P=\{x_1,\ldots,x_n\}\subseteq\mathbb R^d$ is
$(\kappa,\tau)$-good.
Then, for every $\mu\in\Gamma_\kappa$,
\[
\lambda_{\min}\!\left(\Sigma(\mu)\right)
=
\lambda_{\min}\!\left(
\sum_{i=1}^n \mu_i x_i x_i^\top
\right)
\ge \tau
\]
\end{lemma}

\begin{proof}
Fix $\mu\in\Gamma_\kappa$ and define $p_i:=\mu_i/d$. Then
\[
p_i\ge 0,
\qquad
\sum_{i=1}^n p_i=1,
\qquad
p_i\le \frac{1}{\kappa n}
\]
Let $v\in\mathbb S^{d-1}$ be any unit vector. By $(\kappa,\tau)$-goodness,
\[
\left|
\left\{
i\in[n] : \langle x_i,v\rangle^2 < \frac{2\tau}{d}
\right\}
\right|
<
\frac{\kappa n}{2}
\]
Since each coordinate of $p$ is at most $1/(\kappa n)$, the total $p$-mass
on this bad set is strictly less than $\frac{\kappa n}{2}\cdot \frac{1}{\kappa n} = \frac12$.
Therefore, the complement has $p$-mass greater than $1/2$. Hence,
\[
\sum_{i=1}^n p_i \langle x_i,v\rangle^2
\ge
\sum_{i\notin \mathrm{BAD}_v} p_i \langle x_i,v\rangle^2
\ge
\frac{2\tau}{d}
\sum_{i\notin \mathrm{BAD}_v} p_i
>
\frac{\tau}{d}
\]
Multiplying by $d$, we obtain
\[
v^\top \Sigma(\mu)v
=
\sum_{i=1}^n \mu_i \langle x_i,v\rangle^2
=
d\sum_{i=1}^n p_i \langle x_i,v\rangle^2
>
\tau
\]
Since this holds for every unit vector $v$, we conclude that
\[
\lambda_{\min}(\Sigma(\mu))\ge \tau
\]
\end{proof}

\begin{algorithm}[H]
\caption{Private John Ellipsoid Approximation}
\label{alg:private-alg}
\begin{algorithmic}
\State \hspace*{-\algorithmicindent} \textbf{Input:} $P = \{x_i\}_{i=1}^n \subseteq \mathbb{R}^d$, parameters $\gamma, \kappa \in (0,1)$, privacy parameter $\rho$.
\State \hspace*{-\algorithmicindent} \textbf{Output:} a $(1+\gamma)$-approximation of the John Ellipsoid of $P$.
    \State $T \gets \frac{2\log(\nicefrac{1}{\kappa})}{\gamma}$
    \For{$i = 1, \dots, n$}
        \State $\hat\mu^0_i \gets \frac{d}{n}$
    \EndFor
    \State Let $\Gamma_\kappa$ be the set of $\kappa$-dense measures with total mass $d$.
    \For{$t = 0, \dots, T-1$}
        \State $\mu^t \gets \Pi_{\Gamma_\kappa} \hat{\mu}^{t}$
        \State $\hat M^t \gets $ \Call{Private-Oracle}{$P, \mu^t, \nicefrac{\rho}{(T+1)}$}
        \For{$i = 1, \dots, n$}
            \State $\hat\ell_t(x_i) \gets -\log(x_i^\top \hat M^t x_i)$
            \State $\hat\mu^{t+1}_i \gets \mu^t_i \cdot e^{-\hat\ell_t(x_i)}$
        \EndFor
    \EndFor
    \For{$i = 1, \dots, n$}
        \State $\overline\mu_i \gets \frac{1}{T}\sum\limits_{t=0}^{T-1} \mu_i^t$
    \EndFor
    \State $\hat M \gets $ \Call{Private-Oracle}{$P, \overline\mu, \nicefrac{\rho}{(T+1)}$}
    \State \Return $\hat M$
\end{algorithmic}
\end{algorithm}

\begin{algorithm}[H]
\caption{Private Oracle for John Ellipsoid Approximation}
\label{alg:private-oracle}
\begin{algorithmic}
\State \hspace*{-\algorithmicindent} \textbf{Input:} $P = \{x_i\}_{i=1}^n \subseteq \mathbb{R}^d$, measure $\mu = (\mu_1, \dots, \mu_n)$, privacy parameter $\rho$.
\Procedure{Private-Oracle}{$P, \mu, \rho$}
    \State $\Sigma \gets \sum_{x_i \in P} \mu_i x_i x_i^\top$
    \State $\sigma \gets \frac{2dR^2}{\kappa n \sqrt{2\rho}}$
    \State Sample symmetric noise matrix $N \sim \text{GUE}(\sigma^2)$.
    \State $\hat{\Sigma} \gets \Sigma + N$
    \State Let $\lambda_{\min}(\hat{\Sigma})$ be the smallest eigenvalue of $\hat{\Sigma}$.
    \If{$\lambda_{\min}(\hat{\Sigma}) \geq \frac{\tau}{2}$} \Comment{$\tau$ is an assumption parameter: $\lambda_{\min}(\Sigma)\ge \tau$}
        \State Decompose $\hat{\Sigma} = U \Lambda U^\top$
        \State \Return $U \Lambda^{-1} U^\top$
    \Else
        \State \Return $\perp$
    \EndIf
\EndProcedure
\end{algorithmic}
\end{algorithm}

We first establish privacy. The key point is that the projection onto
\(\Gamma_\kappa\) ensures that every coordinate of the measure is bounded by
\(d/(\kappa n)\), which directly controls the sensitivity of the weighted
covariance matrix.
\begin{lemma}[Privacy]
    Algorithm~\ref{alg:private-alg} is $\rho$-zCDP.
\end{lemma}
\begin{proof}
We first show that Algorithm~\ref{alg:private-oracle} is $\rho$-zCDP. Let $P$ and $P'$ be two neighboring datasets of size $n$ that differ in a single point, with $x_i\in P$ replaced by $x_i'\in P'$. Assuming that $\|x_i\|_2,\|x_i'\|_2\le R$, the corresponding weighted covariance matrices can differ in spectral norm by at most
\[
\left\|
\mu_i\left(x_i x_i^\top-x_i'x_i'^\top\right)
\right\|_2
\le
\mu_i\left(\|x_i\|_2^2+\|x_i'\|_2^2\right)
\le
\frac{2dR^2}{\kappa n}
\]
where the last inequality follows from $\mu\in\Gamma_\kappa$, and hence
$\mu_i\le d/(\kappa n)$ for every $i\in[n]$.

Algorithm~\ref{alg:private-alg} invokes Algorithm~\ref{alg:private-oracle} a total of $T+1$ times, with each invocation satisfying $\rho/(T+1)$-zCDP. The final invocation uses the averaged measure $\overline{\mu}
=
\frac1T\sum_{t=1}^T\mu^{t}$.
Since each $\mu^{t}\in\Gamma_\kappa$ and $\Gamma_\kappa$ is convex, we also have
$\overline{\mu}\in\Gamma_\kappa$, so the same sensitivity bound applies to the final release. The claimed $\rho$-zCDP guarantee therefore follows from sequential composition of zCDP.
\end{proof}

We now turn to utility. Unlike in the non-private algorithm, the update is
computed using the inverse of a perturbed covariance matrix. The goal of the
next subsection is to show that, when the perturbation is small relative to the
spectrum of the covariance matrix, the private update remains close enough to
the non-private update for the KL argument to go through.
\subsection{Utility of the private algorithm}
\label{subsec:private-utility}

In this subsection we show that, under a suitable well-conditioned event, the
private oracle behaves similarly to the non-private oracle. Consequently, the
projected multiplicative-weights analysis from
Section~\ref{sec:non-private-alg} degrades only by a small additive term.

For the analysis it is convenient to separate the \emph{covariance matrix} $\Sigma(\mu)\coloneqq \sum_{i=1}^n \mu_i x_i x_i^\top$ from its inverse. In the non-private algorithm, the oracle returns $M(\mu)=\Sigma(\mu)^{-1}$.
In the private algorithm, the oracle perturbs the covariance matrix and returns the inverse of the perturbed matrix. Thus, for round $t$, if $N^t$ denotes the GUE noise matrix added by the oracle, we define
\[
\hat{\Sigma}^t \coloneqq \Sigma(\mu^t)+N^t,
\qquad
\hat{M}^t \coloneqq \bigl(\hat{\Sigma}^t\bigr)^{-1}
\]
For notational simplicity, throughout this subsection we analyze the oracle output $M^t = (\Sigma^t)^{-1}$ and the private oracle output $\hat M^t=(\hat\Sigma^t)^{-1}$.

We define the non-private and private log-scores by
\[
\phi_i(\mu^t)\coloneqq \log\!\bigl(x_i^\top M^tx_i\bigr),
\qquad
\hat{\phi}_i(\mu^t)\coloneqq \log\!\bigl(x_i^\top \hat M^t x_i\bigr)
\]

We first record the basic invertibility condition needed for the inverse
covariance matrices to be well-defined.
\begin{remark}
   Given that $\mu_i>0$ for all $i$, the matrix $D:=\operatorname{diag}(\mu_1,\dots,\mu_n)$ is invertible. Writing $\Sigma^t = X D X^\top$
    and assuming $\operatorname{rank}(X)=d$, we obtain
    \[
    \operatorname{rank}(\Sigma^t)=\operatorname{rank}(XDX^\top)=d
    \]
    Since $\Sigma^t\in\mathbb{R}^{d\times d}$ is symmetric positive semidefinite and has full rank, its null space is trivial:~$\dim(\operatorname{Null}(\Sigma^t))=0$.
    Therefore, $0$ is not an eigenvalue of $\Sigma^t$. As all eigenvalues of a positive semidefinite matrix are nonnegative, they must in fact be strictly positive. Hence $\Sigma^t \succ 0$.
\end{remark}
In the non-private analysis, the multiplicative-weights update is governed by the deterministic score
\[
\phi_i(\mu^t)=\log\!\bigl(x_i^\top M^t x_i\bigr),
\qquad\text{with}\qquad
M^t=(\Sigma^t)^{-1}
\]
By contrast, in the private algorithm the oracle returns the inverse of a perturbed covariance matrix, $\hat M^t=(\Sigma^t+N^t)^{-1}$, so the update is driven by a round-dependent noisy score rather than by the evaluation of the same deterministic map \(M(\cdot)\) at \(\mu^t\). Consequently, the convex-analytic part of the non-private argument does not apply verbatim.

The following lemma shows that if the covariance matrix is perturbed by a sufficiently small amount in relative operator norm, then its inverse changes by only a multiplicative factor. Consequently, the corresponding logarithmic leverage scores change by only an additive amount proportional to the perturbation size. This allows us to transfer the non-private potential argument to the private setting while incurring only a controlled additive error in each iteration.
\begin{lemma}
\label{lem:private-inverse-perturbation}
Fix a round $t$, and let
\[
\Sigma^t \coloneqq \Sigma(\mu^t)=\sum_{i=1}^n \mu_i^t x_i x_i^\top,
\qquad
\hat{\Sigma}^t\coloneqq \Sigma^t+N^t
\]
Suppose there exists a parameter \(\eta\in[0,\tfrac12]\) such that $\forall t:~ \left\|(\Sigma^t)^{-1/2}N^t(\Sigma^t)^{-1/2}\right\|_2 \leq \eta$, then for every $i\in[n]$,
\[
\left|
\hat{\phi}_i(\mu^t)-\phi_i(\mu^t)
\right|
\le 2\eta
\]
\end{lemma}

\begin{proof}
Write
\[
\hat{\Sigma}^t
=
(\Sigma^t)^{1/2}
\Bigl(I+(\Sigma^t)^{-1/2}N^t(\Sigma^t)^{-1/2}\Bigr)
(\Sigma^t)^{1/2}
\]
Since
\[
\left\|(\Sigma^t)^{-1/2}N^t(\Sigma^t)^{-1/2}\right\|_2 \leq \eta
\]
we have
\[
(1-\eta)I
\;\preceq\;
I+(\Sigma^t)^{-1/2}N^t(\Sigma^t)^{-1/2}
\;\preceq\;
(1+\eta)I
\]
Multiplying on both sides by $(\Sigma^t)^{1/2}$ gives
\[
(1-\eta)\Sigma^t
\;\preceq\;
\hat{\Sigma}^t
\;\preceq\;
(1+\eta)\Sigma^t
\implies
\frac{1}{1+\eta}(\Sigma^t)^{-1}
\;\preceq\;
(\hat{\Sigma}^t)^{-1}
\;\preceq\;
\frac{1}{1-\eta}(\Sigma^t)^{-1}
\]
which is exactly
\[
\frac{1}{1+\eta}M^t
\;\preceq\;
\hat M^t
\;\preceq\;
\frac{1}{1-\eta}M^t
\]
Note that $x_i^\top M^t x_i>0$, so dividing through by this quantity: 
\[
\frac{1}{1+\eta}\leq\frac{x_i^{\top}\hat{M}^tx_i}{x_i^{\top}M^tx_i}\leq\frac{1}{1-\eta}
\]
Take logarithms
\[
-\log(1+\eta)
\le
\log\!\bigl(x_i^\top \hat M^t x_i\bigr)
-
\log\!\bigl(x_i^\top M^t x_i\bigr)
\le
-\log(1-\eta)
\]
Therefore
\[
\left|
\hat{\phi}_i(\mu^t)-\phi_i(\mu^t)
\right|
\le
\max\{\log(1+\eta),-\log(1-\eta)\}
\]
If $\eta\le \frac12$, then $\log(1+\eta)\le \eta$ and $-\log(1-\eta)\le 2\eta$, so
\[
\left|
\hat{\phi}_i(\mu^t)-\phi_i(\mu^t)
\right|
\le 2\eta
\]
\end{proof}

Next we show that the projected multiplicative-weights proof from the non-private analysis is stable under such perturbations.

\begin{lemma}
\label{lem:private-sum-bound}
Assume that for every round $t$ and every $i\in[n]$,
\[
\left|\hat{\phi}_i(\mu^t)-\phi_i(\mu^t)\right| \le 2\eta
\]
Let $\overline\mu \coloneqq \frac1T\sum_{t=0}^{T-1}\mu^t$. Then for every $q\in\Gamma_\kappa$,
\[
\sum_{i=1}^n q_i \phi_i(\overline{\mu})
\le
\frac{D_{\mathrm{KL}}(q\Vert \mu^0)}{T}
+
4d\eta
\]
\end{lemma}

\begin{proof}
The update rule of Algorithm~\ref{alg:private-alg} is $\hat\mu_i^{t+1}
=
\mu_i^t\,e^{\hat{\phi}_i(\mu^t)}
=
\mu_i^t\,x_i^\top \hat M^t x_i$.
The next iterate is obtained by KL projection: $\mu^{t+1}=\Pi_{\Gamma_\kappa}\hat\mu^{t+1}$.
As in the proof of Lemma~\ref{lem:sum-bound}, the Pythagorean property of KL projection yields that for every $q\in\Gamma_\kappa$, $D_{\mathrm{KL}}(q\Vert \mu^{t+1})
\le
D_{\mathrm{KL}}(q\Vert \hat\mu^{t+1})$.
Expanding the right-hand side gives
\begin{align*}
D_{\mathrm{KL}}(q\Vert \hat\mu^{t+1})
&=
\sum_{i=1}^n q_i\log\!\left(\frac{q_i}{\hat\mu_i^{t+1}}\right)
+
\sum_{i=1}^n \hat\mu_i^{t+1}
-
\sum_{i=1}^n q_i \\
&=
\sum_{i=1}^n q_i\log\!\left(\frac{q_i}{\mu_i^t e^{\hat{\phi}_i(\mu^t)}}\right)
+
\sum_{i=1}^n \hat\mu_i^{t+1}
-
\sum_{i=1}^n q_i \\
&=
D_{\mathrm{KL}}(q\Vert \mu^t)
-
\sum_{i=1}^n q_i \hat{\phi}_i(\mu^t)
+
\Bigl(\sum_{i=1}^n \hat\mu_i^{t+1}-\sum_{i=1}^n \mu_i^t\Bigr)
\end{align*}
On the event that $\hat M^t\succ 0$, the oracle is well-defined, and we can write
\[
\sum_{i=1}^n \hat\mu_i^{t+1}
=
\sum_{i=1}^n \mu_i^t x_i^\top \hat M^t x_i
=
\operatorname{tr}\!\left(
\hat M^t \sum_{i=1}^n \mu_i^t x_i x_i^\top
\right)
=
\operatorname{tr}(\hat M^t \Sigma^t)
\]
In the non-private case, the trace of the matrix is exactly equal to \( d \). However, in the private case, this trace does not equal \( d \) precisely, so we explicitly account for this discrepancy in the trace of the matrix \( \hat{M}^t \Sigma^t \). It is important to note that the matrix \( \hat{M}^t \Sigma^t \) is not necessarily symmetric. Therefore, we utilize the cyclic property of the trace and define
\[
\zeta_t \coloneqq \operatorname{tr}(\hat M^t \Sigma^t)-d=\operatorname{tr}(\hat M^t (\Sigma^t)^{1/2}(\Sigma^t)^{1/2})-d=\operatorname{tr}((\Sigma^t)^{1/2} \hat{M}^t (\Sigma^t)^{1/2})-d
\]
Now we denote $A_t=(\Sigma^t)^{1/2} \hat{M}^t (\Sigma^t)^{1/2}$.
Since $\sum_i \mu_i^t=d$ and $\sum_i q_i=d$, we obtain
\[
D_{\mathrm{KL}}(q\Vert \hat\mu^{t+1})
=
D_{\mathrm{KL}}(q\Vert \mu^t)
-
\sum_{i=1}^n q_i \hat{\phi}_i(\mu^t)
+
\zeta_t
\]
Hence
\[
\sum_{i=1}^n q_i \hat{\phi}_i(\mu^t)
\le
D_{\mathrm{KL}}(q\Vert \mu^t)
-
D_{\mathrm{KL}}(q\Vert \mu^{t+1})
+\zeta_t
\]
Using the perturbation assumption,
\[
\sum_{i=1}^n q_i \phi_i(\mu^t)
\le
\sum_{i=1}^n q_i \hat{\phi}_i(\mu^t)
+
\sum_{i=1}^n 2q_i\eta
=
\sum_{i=1}^n q_i \hat{\phi}_i(\mu^t)+2d\eta
\]
so
\[
\sum_{i=1}^n q_i \phi_i(\mu^t)
\le
D_{\mathrm{KL}}(q\Vert \mu^t)
-
D_{\mathrm{KL}}(q\Vert \mu^{t+1})
+\zeta_t
+
2d\eta
\]
Summing over $t=0,\dots,T-1$ yields
\[
\sum_{t=0}^{T-1}\sum_{i=1}^n q_i \phi_i(\mu^t)
\le
D_{\mathrm{KL}}(q\Vert \mu^0)
-
D_{\mathrm{KL}}(q\Vert \mu^T)
+
\sum_{t=0}^{T-1}\zeta_t
+
2dT\eta
\]
We now note that under the same hypothesis as in Lemma~\ref{lem:private-inverse-perturbation},
\[
\frac{1}{1+\eta}M^t
\;\preceq\;
\hat M^t
\;\preceq\;
\frac{1}{1-\eta}M^t
\]
We get
\[
\frac{1}{1+\eta}I
\preceq
A_t
\preceq
\frac{1}{1-\eta}I
\]
hence all eigenvalues hold 
\[
\lambda_i\left( A_t\right)\in\left[\frac{1}{1+\eta},\frac{1}{1-\eta}\right]
\]
It follows that
\[
\operatorname{tr}\left( A_t\right)=\sum_{i=1}^{d}\lambda_i\left( A_t\right)\Longrightarrow \frac{d}{1+\eta}\leq \operatorname{tr}\left( A_t\right)\leq \frac{d}{1-\eta}
\]
\[
\left|\zeta_t\right|
=
\left|\operatorname{tr}(\hat M^t\Sigma^t)-d\right|=\left|\operatorname{tr}(A_t)-d\right|
\le
\frac{d\eta}{1-\eta}
\]
In particular, when $\eta\le \frac12$, we have $|\zeta_t|\le 2d\eta$. Therefore
\[
\sum_{t=0}^{T-1}\sum_{i=1}^n q_i \phi_i(\mu^t)
\le
D_{\mathrm{KL}}(q\Vert \mu^0)
+
4dT\eta
\]
Dividing by $T$ and applying Jensen's inequality exactly as in Lemma~\ref{lem:sum-bound} to the convex map $\phi_i(\mu)$ yields
\[
\sum_{i=1}^n q_i \phi_i(\overline{\mu})
\le
\frac{D_{\mathrm{KL}}(q\Vert \mu^0)}{T}
+
4d\eta
\]
\end{proof}

The previous lemma is most useful under a successful-run event in which the private noise is spectrally small compared to the covariance matrices encountered during the execution.

\begin{definition}[Good event]
\label{def:good-event}
For parameters $\tau>0$ and $\eta\in(0,\frac12]$, define the event $\mathcal E(\tau,\eta)$ that for every round $t=0,\dots,T-1$,
\[
\lambda_{\min}\!\left(\Sigma(\mu^t)\right)\ge \tau
\qquad\text{and}\qquad
\|N^t\|_2 \le \eta \tau
\]
\end{definition}

The following lemma shows that the good event implies precisely the relative
operator-norm control required by Lemma~\ref{lem:private-inverse-perturbation}.
\begin{lemma}
\label{lem:good-event-loss}
On the event $\mathcal E(\tau,\eta)$, for every round $t$,
\[
\left\|(\Sigma^t)^{-1/2}N^t(\Sigma^t)^{-1/2}\right\|_2 \le \eta
\]
Hence, if $\eta\le \frac12$, then for every $t$ and every $i\in[n]$,
\[
\left|\hat\phi_i(\mu^t)-\phi_i(\mu^t)\right|\le 2\eta
\quad \text{and moreover} \quad
\left|\operatorname{tr}(\hat M^t\Sigma^t)-d\right|\le 2d\eta
\]
\end{lemma}

\begin{proof}
Since $\lambda_{\min}(\Sigma^t)\ge \tau$, we have
\[
\|(\Sigma^t)^{-1/2}\|_2^2=\|(\Sigma^t)^{-1}\|_2\le \frac1\tau
\]
and therefore
\[
\left\|(\Sigma^t)^{-1/2}N^t(\Sigma^t)^{-1/2}\right\|_2
\le
\|(\Sigma^t)^{-1/2}\|_2^2\,\|N^t\|_2
\le
\frac{\eta\tau}{\tau}
=
\eta
\]
The claimed bounds now follow from Lemma~\ref{lem:private-inverse-perturbation} and the discussion in the proof of Lemma~\ref{lem:private-sum-bound}.
\end{proof}

We can now state the main utility guarantee for the private algorithm.

\begin{theorem}[Utility]
\label{thm:private-utility}
Suppose that Algorithm~\ref{alg:private-alg} is run for $T \ge \frac{2\log(1/\kappa)}{\gamma}$, with initialization $\mu_i^0=\frac dn$ for all $i\in[n]$, and let $\overline\mu \coloneqq \frac1T\sum_{t=0}^{T-1}\mu^t$. Assume that the event $\mathcal E(\tau,\eta)$ from Definition~\ref{def:good-event} holds for some $\eta\le \frac\gamma8$.

Then for every $q\in\Gamma_\kappa$,
\[
\sum_{i=1}^n q_i \log\!\bigl(x_i^\top M(\overline\mu)x_i\bigr)
\le
d\gamma
\]
Consequently,
\[
\left|
\left\{
i\in[n]:
x_i^\top M(\overline\mu)x_i \le e^{\gamma}
\right\}
\right|
\ge
(1-\kappa)n
\]
Equivalently, defining
\[
E_\gamma \coloneqq \{x\in\R^d: x^\top e^{\gamma}M(\overline{\mu})^{-1} x \le 1\}
\]
the ellipsoid $E_\gamma$ is feasible for at least $(1-\kappa)n$ constraints of $P$.
\end{theorem}

\begin{proof}
By Lemma~\ref{lem:good-event-loss}, on the event $\mathcal E(\tau,\eta)$, Lemma~\ref{lem:private-sum-bound} is valid, and for every $q\in\Gamma_\kappa$,
\[
\sum_{i=1}^n q_i \log\!\bigl(x_i^\top M(\overline\mu)x_i\bigr)
\le
\frac{D_{\mathrm{KL}}(q\Vert \mu^0)}{T}
+
4d\eta
\]
Using Lemma~\ref{lem:max-kl},
\[
D_{\mathrm{KL}}(q\Vert \mu^0)\le d\log\!\left(\frac1\kappa\right)
\]
Therefore
\[
\sum_{i=1}^n q_i \log\!\bigl(x_i^\top M(\overline\mu)x_i\bigr)
\le
\frac{d\log(1/\kappa)}{T}
+
4d\eta
\le
d\gamma
\]
since $T\ge \frac{2\log(1/\kappa)}{\gamma}$ and $\eta \leq \frac{\gamma}{8}$.

Now apply Theorem~\ref{lem:trimmed-containment} gives
\[
\left|
\left\{
i\in[n]:
x_i^\top M(\overline\mu)x_i \le e^{\gamma}
\right\}
\right|
\ge
(1-\kappa)n
\]
Finally,
\[
x_i^\top M(\overline\mu)x_i \le e^{\gamma}
\iff
x_i^\top e^{-\gamma} M(\overline\mu) x_i \le 1
\]
so $E_\gamma$ is feasible for at least $(1-\kappa)n$ constraints from $P$ (See~\ref{subsubsec:polar-interpretation}).
\end{proof}

\begin{remark}
\label{rem:guespectral}
The previous theorem is deterministic, conditional on the event $\mathcal E(\tau,\eta)$. To convert it into a high-probability statement, it suffices to show that with high probability, for every round $t$,
\[
\|N^t\|_2 \lesssim \sigma \sqrt{d},
\]
and that simultaneously
\[
\lambda_{\min}\!\left(\Sigma(\mu^t)\right)\ge \tau
\]
for a parameter $\tau$ satisfying $\sigma\sqrt d \le \eta \tau$.

Under such a condition, the good event holds and Theorem~\ref{thm:private-utility} yields utility with additive loss $4d\eta$ in the exponent. In particular, if $\eta\le \gamma/8$, then the private output satisfies the same trimmed-containment guarantee as in the non-private case.
\end{remark}

We now state the resulting high-probability private utility guarantee. The
sample size condition ensures that the Gaussian perturbation is small compared
to the covariance spectrum in every round, so that the good event holds with
probability at least \(1-\beta\).
\begin{theorem}[Main result]
\label{thm:private-utility-high-prob}
Let $\kappa, \gamma \in (0,1)$. Suppose that Algorithm~\ref{alg:private-alg} is run for $T \ge \frac{2\log(1/\kappa)}{\gamma}$, with initialization $\mu_i^0=\frac dn$ for all $i\in[n]$, and let $\overline\mu \coloneqq \frac1T\sum_{t=0}^{T-1}\mu^t$. Assume that for every round $t=0,\dots,T-1$ of Algorithm~\ref{alg:private-alg},
\[
\lambda_{\min}\!\left(\Sigma(\mu^t)\right)\ge \tau,
\qquad\text{where}\qquad
\Sigma(\mu^t)\coloneqq \sum_{i=1}^n \mu_i^t x_i x_i^\top
\]
Fix a failure parameter $\beta\in(0,1)$, and suppose that
\begin{align*}
n \ge \frac{16dR^2(\sqrt{d}+\sqrt{2\log(\frac{T+1}{\beta})})}{\kappa\gamma\tau\sqrt{2\rho/(T+1)}} = O\left(\frac{dR^2(\sqrt{d}+\sqrt{\log(\frac{\log (1/\kappa)}{\gamma\beta})})\sqrt{\log (1/\kappa)}}{\kappa\gamma^{1.5}\tau\sqrt{\rho}}\right)
\end{align*}

Then, with probability at least $1-\beta$, the measure $\overline\mu$ of Algorithm~\ref{alg:private-alg} satisfies
\[
\forall q\in \Gamma_\kappa,
\qquad
\sum_{i=1}^n q_i \log\!\bigl(x_i^\top M(\overline\mu)x_i\bigr)
\le d\gamma
\]
Consequently:
\begin{enumerate}
    \item \textbf{Trimmed containment}. the ellipsoid
    \[
    E_\gamma
    \coloneqq
    \left\{
    x\in\mathbb R^d:
    x^\top e^{\gamma}M(\overline\mu)^{-1}x \le 1
    \right\}
    \]
    is feasible for at least $(1-\kappa)n$ constraints from $P$;

    \item \textbf{Volume approximation}. $\operatorname{vol}(E_\gamma) \ge e^{-d\gamma/2}\operatorname{vol}(E^*)$, where $E^*$ is the John ellipsoid of $P$;
\end{enumerate}
\end{theorem}

\begin{proof}
For each round $t$, let $N^t\sim \mathrm{GUE}(\sigma^2)$ denote the symmetric Gaussian noise added by the private oracle. By a standard spectral norm bound for GUE matrices, with probability at least $1-\beta/2$, simultaneously for all $t=0,\dots,T-1$,
\[
\|N^t\|_2
\le
\sigma\bigl(\sqrt{d}+\sqrt{2\log(2T/\beta)}\bigr).
\]
Hence, on this event,
\[
\left\|(\Sigma(\mu^t))^{-1/2}N^t(\Sigma(\mu^t))^{-1/2}\right\|_2
\le
\frac{\sigma\bigl(\sqrt{d}+\sqrt{2\log(2T/\beta)}\bigr)}{\tau}
:=
\eta \leq \frac{\gamma}{8}
\]
for every $t$, since $\lambda_{\min}(\Sigma(\mu^t))\ge \tau$ by assumption and $n \ge \frac{16dR^2(\sqrt{d}+\sqrt{2\log(\frac{2T}{\beta})})}{\kappa\gamma\tau\sqrt{\rho/T}}$.

Therefore, provided $\eta\le \frac12$, Lemma~\ref{lem:private-inverse-perturbation} implies that in every round the private log-loss differs from the non-private one by at most $2\eta$. Applying Lemma~\ref{lem:private-sum-bound} and then Lemma~\ref{lem:max-kl}, we obtain that for every $q\in\Gamma_\kappa$,
\[
\sum_{i=1}^n q_i \log\!\bigl(x_i^\top M(\overline\mu)x_i\bigr)
\le
\frac{d\log(1/\kappa)}{T}+4d\eta
\]
Since $T\ge \frac{2\log(1/\kappa)}{\gamma}$, this yields
\[
\sum_{i=1}^n q_i \log\!\bigl(x_i^\top M(\overline\mu)x_i\bigr)
\le
\frac{d\gamma}{2}+4d\eta
\]
If in addition $\eta\le \frac{\gamma}{8}$, then the right-hand side is at most $d\gamma$. The trimmed containment and the volume approximation conclusions follow from Corollary~\ref{cor:approx-trimmed-john}.
\end{proof}

Theorem~\ref{thm:private-utility-high-prob} provides utility guarantees for the
(unreleased) matrix \(M(\overline{\mu})\). Since this matrix is computed from
the sensitive dataset, the algorithm instead releases a privately perturbed
version \(\widehat M\). The following corollary transfers the preceding utility
guarantees from \(M(\overline{\mu})\) to the released output.

\begin{corollary}[Utility of the released private ellipsoid]
\label{cor:private-final-output}
Under the assumptions of
Theorem~\ref{thm:private-utility-high-prob},
with probability at least \(1-\beta\),
the matrix \(\widehat M\) satisfies
\[
\frac{1}{1+\gamma/8}
M(\overline{\mu})
\preceq
\widehat M
\preceq
\frac{1}{1-\gamma/8}
M(\overline{\mu})
\]

Consequently, defining
\[
\widehat E
=
\left\{
x\in\mathbb R^d:
x^\top
\left((1-\gamma/8)e^{-\gamma}\widehat M\right)^{-1}
x
\le1
\right\}
\]
there exists a subset
\(S\subseteq[n]\) with
\(|S|\ge(1-\kappa)n\)
such that
\[
\widehat E
\subseteq
K_S
\]
Moreover,
\[
\operatorname{vol}(\widehat E)
\ge
e^{-9d\gamma/16}
\operatorname{vol}(E^\star)
\]
where \(E^\star\) denotes the John ellipsoid of the input dataset.
\end{corollary}

\begin{proof}
By Theorem~\ref{thm:private-utility-high-prob},
the matrix \(M(\overline{\mu})\) satisfies the claimed trimmed containment
and volume guarantees.

By Lemma~\ref{lem:private-inverse-perturbation},
\[
\frac{1}{1+\gamma/8}
M(\overline{\mu})
\preceq
\widehat M
\preceq
\frac{1}{1-\gamma/8}
M(\overline{\mu})
\]
which implies that every quadratic form changes by at most a multiplicative
factor \(1\pm\gamma/8\).
Rescaling the released ellipsoid by the factor
\((1-\gamma/8)\) therefore preserves the trimmed containment guarantee.
Furthermore,
\[
\det(\widehat M)
\ge
(1-\gamma/8)^d
\det(M(\overline{\mu}))
\]
and hence
\[
\operatorname{vol}(\widehat E)
\ge
(1-\gamma/8)^{d/2}
\operatorname{vol}(E_\gamma)
\ge
e^{-9d\gamma/16}
\operatorname{vol}(E^\star)
\]
where \(E_\gamma\) is the ellipsoid returned by
Theorem~\ref{thm:private-utility-high-prob}.
The stated volume bound follows immediately.
\end{proof}

\begin{remark}
The main theorem gives a sufficient sample-complexity guarantee for
$(\kappa,\tau)$-good inputs. The algorithm itself may be run on arbitrary
bounded datasets; outside this input class, however, the theorem provides no
utility guarantee.
\end{remark}

We conclude this section by stating the following theorem.
\begin{theorem}\label{thm:main-private}
Let $P=\{x_1,\dots,x_n\}\subseteq\mathbb{R}^d$ be $(\kappa, \tau)$-good input with range $R$.
For any $\rho$ and $\gamma, \beta\in(0,1)$, Algorithm~\ref{alg:private-alg} is $\rho$-zCDP, and, after $T = O\!\left(\frac{\log (1/\kappa)}{\gamma}\right)$ iterations, supposing that
\begin{align*}
n \ge \frac{16dR^2(\sqrt{d}+\sqrt{2\log(\frac{T+1}{\beta})})}{\kappa\gamma\tau\sqrt{2\rho/(T+1)}} = O\left(\frac{dR^2(\sqrt{d}+\sqrt{\log(\frac{\log (1/\kappa)}{\gamma\beta})})\sqrt{\log (1/\kappa)}}{\kappa\gamma^{1.5}\tau\sqrt{\rho}}\right)
\end{align*}
it outputs with probability $\ge 1-\beta$ a positive definite matrix \(M\), defining the ellipsoid
\[
E
:=
\{x\in\mathbb R^d:x^\top M^{-1}x\le1\},
\]
such that:
\begin{itemize}   
    \item there exists a subset $S \subseteq [n]$ with $|S|\ge (1-\kappa)n$ such that
    \[
    (1-\gamma/8)e^{-\gamma}E \subseteq K_S,
    \]
    where $K_S := \{x\in \R^d : |x_i^\top x|\le 1 \ \forall i\in S\}$; and, for the original polytope we have
    \[
    K \subseteq \sqrt d\, E,
    \]
    where $K:=\{x\in\mathbb R^d: |x_i^\top x|\le 1 \text{ for all } i\in[n]\}$.
    
    \item moreover, we have
    \[
    \operatorname{vol}(e^{-\gamma}E)
    \;\ge\;
    e^{-9d\gamma/16}\operatorname{vol}(E^*),
    \]
    where $E^*$ is the John ellipsoid of $K$.
\end{itemize}
\end{theorem}
}

\section{Extensions to Minimum-Volume Enclosing Ellipsoids}
\label{sec:extensions}

Our results extend directly to the centered minimum-volume enclosing
ellipsoid (MVEE) problem through the standard polar-duality correspondence
between maximum-volume inscribed and minimum-volume enclosing
ellipsoids~\citep{Schneider_2013,tod16}. In particular, for
$P=\{x_1,\ldots,x_n\}\subseteq\mathbb R^d$, the polytope
$K=\{x:|x_i^\top x|\le1,\ \forall i\in[n]\}$ is the polar of
$\operatorname{conv}\{\pm x_1,\ldots,\pm x_n\}$, and the polar of the
John ellipsoid of $K$ is the centered MVEE of $P$. 
Namely, if $E^* = \{x: ~~ x^\top M^{-1} x \leq 1\}$ is the JE of $K$, then its polar, $(E^*)^\circ = \{x: ~~ x^\top M x \leq 1\}$, is the MVEE of $P$.
This correspondence
also preserves our trimmed formulation: if the John-ellipsoid guarantee
holds for a subset $S\subseteq[n]$ with $|S|\ge(1-\kappa)n$, then the
polar ellipsoid contains the corresponding $(1-\kappa)$-fraction of the
MVEE input points, with the corresponding $\alpha^{-1}$ volume-approximation guarantee for the MVEE, where $\alpha$ denotes the volume-approximation factor for the John ellipsoid.
We give the details of this reduction in
Appendix~\ref{apx:polar-interpretation}. We comment that since taking the polar is
deterministic post-processing of the output of Algorithm~\ref{alg:private-alg}, the reduction incurs no additional
privacy loss.

Even though the above-mentioned result deals with \emph{centered} convex bodies (since clearly the origin is the center of $\operatorname{conv}\{\pm x_1,\ldots,\pm x_n\}$), it easily extends to the general uncentered MVEE problem using
the standard homogeneous-coordinate lifting of~\cite{tod16}. In order to approximate the MVEE of a convex body $\operatorname{conv}\{y_1, y_2, \ldots,y_n\}$, We simply map
each $y_i\in\mathbb R^d$ to $(y_i,1)\in\mathbb R^{d+1}$ and reduce the
uncentered problem to the centered problem in one additional dimension. While this
reduction is standard, we verify in
Lemma~\ref{lem:uncentered-centered-reduction} that it continues to hold
under our trimmed objective: the same subset of at least
$(1-\kappa)n$ points is preserved, together with the approximation
factor. Thus, combining the lifting with the polar reduction above
extends both our non-private and private guarantees to uncentered MVEEs.
We comment that in order to obtain a private guarantee, the \emph{lifted} dataset must satisfy the
$(\kappa,\tau)$-goodness assumption in dimension $d+1$.
The full trimmed lifting argument is given in
Appendix~\ref{apx:uncentered-centered-reduction}.

\remove{
\section{Extensions to Minimum-Volume Enclosing Ellipsoids}
\label{sec:extensions}

In this section, we discuss two consequences of our John-ellipsoid
guarantees. First, using polar duality, we translate our results into
robust guarantees for the centered minimum-volume enclosing ellipsoid
(MVEE) problem. Second, we combine this interpretation with the standard
homogeneous-coordinate lifting to obtain corresponding guarantees for
uncentered MVEEs.

\subsection{Centered MVEE via Polar Duality}
\label{sec:extension-centered-mvee}

We now show how our John-ellipsoid algorithm can be used to
solve the centered MVEE problem. Recall from
Section~\ref{subsubsec:polar-interpretation} that polarity maps the John
ellipsoid of a centrally symmetric polytope to the MVEE of its polar.

Suppose that we are given $n$ points $P=\{x_1,\ldots,x_n\}\subseteq\mathbb R^d$
and wish to compute their origin-centered MVEE. Since every
origin-centered ellipsoid is centrally symmetric, it contains $P$ if
and only if it contains $P_{\pm}
=
\operatorname{conv}\{\pm x_1,\ldots,\pm x_n\}$.

We associate with these points the polytope
\[
K_P
:=
P_{\pm}^{\circ}
=
\left\{
x\in\mathbb R^d:
|x_i^\top x|\le1
\quad\forall i\in[n]
\right\}
\]
Thus, the points $x_1,\ldots,x_n$ of the MVEE instance can be used
directly as the constraint vectors of the John-ellipsoid instance
considered by our algorithm.

Run our algorithm on these constraint vectors. Suppose that it outputs
a positive definite matrix $M$, corresponding to the ellipsoid
\[
E
=
\left\{
y\in\mathbb R^d:
y^\top M^{-1}y\le1
\right\}
\]
The corresponding MVEE output is obtained simply by taking the polar:
\[
C
:=
E^\circ
=
\left\{
x\in\mathbb R^d:
x^\top Mx\le1
\right\}
\]

Therefore, converting the output of our John-ellipsoid algorithm into
an MVEE requires only a matrix inversion.

To verify feasibility, suppose that the John-ellipsoid output satisfies $E\subseteq K_P$.
Since polarity reverses containment,
\[
P_{\pm}
=
K_P^\circ
\subseteq
E^\circ
=
C
\]
In particular, $x_i\in C
\quad\forall i\in[n]$,
so $C$ is a feasible centered enclosing ellipsoid for the original
MVEE instance.

The same argument applies to our trimmed guarantee. If the algorithm
produces a subset $S\subseteq[n]$, with $|S|\ge(1-\kappa)n$,
such that
\[
\widehat E\subseteq K_S,
\qquad
K_S
=
\left\{
x:
|x_i^\top x|\le1
\quad\forall i\in S
\right\}
\]
then
\[
\operatorname{conv}\{\pm x_i:i\in S\}
=
K_S^\circ
\subseteq
\widehat E^\circ
\]
Hence the polar ellipsoid contains every $x_i$ with $i\in S$, and
therefore contains at least a $(1-\kappa)$-fraction of the original
MVEE input points.

Finally, let $E^\star$ denote the John ellipsoid of $K_P$, so that $C^\star:=(E^\star)^\circ$
is the centered MVEE of $P$. Moreover, for every origin-centered ellipsoid $E$,
\[
\operatorname{vol}(E)\operatorname{vol}(E^\circ)
=
\operatorname{vol}(B_2^d)^2
\]
Therefore,
\[
\operatorname{vol}(\widehat C)\operatorname{vol}(\widehat E) = \operatorname{vol}(E^\star)\operatorname{vol}(C^\star) \implies
\frac{\operatorname{vol}(\widehat C)}
{\operatorname{vol}(C^\star)}
=
\frac{\operatorname{vol}(E^\star)}
{\operatorname{vol}(\widehat E)}
\]
Thus, any guarantee of the form
\[
\operatorname{vol}(\widehat E)
\ge
\alpha\,\operatorname{vol}(E^\star)
\]
implies
\[
\operatorname{vol}(\widehat E^\circ) = \operatorname{vol}(\widehat C)
\le
\alpha^{-1}\operatorname{vol}(C^\star)
\]

In particular, the non-private and private volume guarantees established
in this paper translate directly into corresponding centered-MVEE
guarantees. Moreover, since taking the polar, or equivalently inverting
the released shape matrix, is deterministic post-processing, this
conversion incurs no additional privacy loss.

\subsection{Uncentered MVEE via Homogeneous Lifting}
\label{sec:extension-uncentered-mvee}

The centered MVEE interpretation can further be extended to the standard
uncentered MVEE problem using homogeneous coordinates. Let $Y=\{y_1,\ldots,y_n\}\subseteq\mathbb R^d$
and associate with each point the lifted vector $x_i
=
\begin{pmatrix}
y_i\\
1
\end{pmatrix}
\in\mathbb R^{d+1}$.
Denote the lifted dataset by $X=\{x_1,\ldots,x_n\}$.

A centered ellipsoid in $\mathbb R^{d+1}$ containing a collection of
lifted points induces, by intersection with the affine hyperplane whose
last coordinate equals $1$, an uncentered ellipsoid in $\mathbb R^d$
containing exactly the corresponding original points. As shown in
Lemma~\ref{lem:uncentered-centered-reduction}, the standard lifting
reduction of~\cite{tod16} continues to preserve the approximation factor
when the containment requirement is trimmed. In particular, if a centered
ellipsoid for the lifted instance contains the points indexed by
$S\subseteq[n]$, then the resulting uncentered ellipsoid contains
$\{y_i:i\in S\}$.

Consequently, combining the lifting with the centered MVEE interpretation
above gives a robust algorithm for uncentered MVEE. If the centered
algorithm in dimension $d+1$ produces an $\alpha$-approximate MVEE
containing at least $(1-\kappa)n$ lifted points, then the induced
uncentered ellipsoid contains the corresponding $(1-\kappa)n$ original
points and inherits the same approximation factor.

For the private algorithm, no additional privacy loss is incurred by the
lifting, since the transformation $y_i\mapsto(y_i,1)$
is deterministic and preserves the neighboring-dataset relation. If $\|y_i\|_2\le R$,
then the lifted points satisfy $\|x_i\|_2
=
\sqrt{\|y_i\|_2^2+1}
\le
\sqrt{R^2+1}$.
The utility guarantee, however, requires the lifted dataset itself to
satisfy the goodness condition in $\mathbb R^{d+1}$. That is, the private
extension applies provided that the lifted points are
$(\kappa,\tau)$-good according to Definition~\ref{def:good-input}, with
$d$ replaced by $d+1$.

Under this assumption, the sample-complexity guarantee of
Theorem~\ref{thm:main-private} applies with the substitutions $d\leftarrow d+1,~
R\leftarrow\sqrt{R^2+1}$.
Hence, with probability at least $1-\beta$, the resulting uncentered
ellipsoid contains at least a $(1-\kappa)$-fraction of the original
points and inherits the corresponding volume-approximation guarantee
from the centered MVEE problem in the lifted space.

\paragraph{Putting the reductions together.}
To solve the uncentered MVEE problem, we compose the two reductions above.
First, we lift each $y_i\in\mathbb R^d$ to $(y_i,1)\in\mathbb R^{d+1}$, reducing
the problem to a centered MVEE instance. We then apply the polar reduction and
solve the corresponding centered MVIE (John-ellipsoid) problem using our
algorithm. Finally, taking the polar and intersecting with the affine slice
$t=1$ recovers an uncentered enclosing ellipsoid for the original points.
}

\section{Conclusion}
\label{sec:conclusion}

We developed a projected multiplicative-weights framework for approximating
the John ellipsoid under a trimmed feasibility criterion. By restricting
the dual weights to $\kappa$-dense measures, our non-private algorithm
produces an ellipsoid satisfying at least a $(1-\kappa)$-fraction of the
constraints after an $e^{-\gamma}$ rescaling, while approximately preserving the
optimal volume in $O(\log(1/\kappa)/\gamma)$ iterations. We also analyzed
a Gaussian-perturbed variant and showed that, under the
$(\kappa,\tau)$-goodness assumption and sufficiently many points, the perturbation required for privacy preservation gives
analogous utility guarantees. Through polar
duality and homogeneous-coordinate lifting, our results extend to
trimmed centered and uncentered minimum-volume enclosing ellipsoids.

Our analysis highlights the role of dense measures and robust
non-degeneracy in controlling the effect of perturbations on geometric
optimization. An important direction for future work is to determine
whether comparable guarantees can be obtained under
weaker assumptions than $(\kappa,\tau)$-goodness, which
is sufficient but not necessary for a successful execution
of the algorithm. Moreover, while our analysis treats
$\kappa$ and $\tau$ as fixed parameters, our population
characterization in Appendix~\ref{apx:population}
shows that, for any fixed $\kappa$, the goodness parameter
$\tau$ can be related to the lower quantiles of the
underlying distribution. Developing a privacy-preserving,
data-dependent procedure for selecting $\kappa$ and
estimating an appropriate $\tau$, while maintaining
the utility guarantees of our algorithm, remains an
interesting direction for future work. More broadly, it would be
interesting to investigate the applicability of projected
multiplicative-weights methods to other robust geometric optimization
problems.

\remove{
\section{Conclusion}
\label{sec:conclusion}

We introduced the first algorithm for approximating the John ellipsoid under differential privacy. Our approach builds on a multiplicative-weights framework combined with projections onto $\kappa$-dense measures, enabling us to control the influence of individual data points while preserving the structure of the classical John decomposition.

In the non-private setting, our method yields a simple and efficient iterative procedure with near-optimal convergence rate, achieving a $(1+\gamma)$-approximation in $O(\log (1/\kappa) / \gamma)$ iterations. We then showed how to extend this framework to the private setting by perturbing the covariance matrix with Gaussian noise, obtaining a $\rho$-zCDP algorithm with matching guarantees up to a small loss.

Due to the inherent sensitivity of the problem, our guarantees are stated in a trimmed form: the resulting ellipsoid captures a $(1-\kappa)$-fraction of the data points while approximately preserving optimal volume. We showed that under a mild $(\kappa,\tau)$-goodness assumption---ensuring that no direction is dominated by nearly-degenerate projections---the private algorithm achieves essentially the same guarantees as its non-private counterpart, with high probability. In addition, our analysis provides explicit sample complexity bounds, which were not available in prior work.

Several directions for future work remain open. A primary question is whether one can obtain comparable guarantees without relying on structural assumptions such as $(\kappa,\tau)$-goodness, or alternatively, to weaken this condition, since we showed it is sufficient but not necessary for successful execution of the
algorithm. A further direction is to extend the method to uncentered John ellipsoids and
general non-symmetric polytopes. Another natural direction is to extend our framework to other geometric optimization problems, such as general convex body rounding or higher-order moment estimation.
}

\subsection*{AI Disclosure Statement}

In this work, we used generative AI tools (ChatGPT by OpenAI)
to assist in the formulation and refinement of mathematical
claims, writing of proofs, and the
discussion and refinement of the theoretical framework and
research methodology. Additionally, we used generative AI
tools to assist with drafting and editing portions of the
manuscript, improving readability and clarity, organizing
the presentation of theoretical results, and suggesting titles
and keywords.

All AI-assisted mathematical statements, arguments, and
proofs were reviewed and verified by the authors. The authors
also reviewed and revised AI-assisted text to ensure its
accuracy, consistency with the research contributions, and
appropriate attribution of existing work. The remaining
tasks requiring disclosure were either not applicable to
this work or were performed without the assistance of
generative AI tools.

The authors take full responsibility for the final content
of this paper, including all mathematical claims, proofs,
and text produced with the assistance of generative AI.

\remove{
\subsubsection*{Acknowledgments}
Use unnumbered third level headings for the acknowledgments. All
acknowledgments, including those to funding agencies, go at the end of the paper.
}
\bibliography{bibliography}
\bibliographystyle{iclr2027_conference}

\appendix

\section{Additional Background}
\label{apx:additional-background}

In this section, we recall the geometric, optimization,
and privacy tools used in the analysis of our algorithms.

\subsection{John Ellipsoid: Primal and Dual Formulations}
\label{apx:john-primal-dual}

\paragraph{Primal formulation.}
Let $P=\{x_1,\ldots,x_n\}\subseteq\mathbb R^d$
satisfy $\operatorname{span}(P)=\mathbb R^d$.
Recall that we consider the centrally symmetric polytope
\[
K_P:=
\left\{
x\in\mathbb R^d:
|x_i^\top x|\le1,\ \forall i\in[n]
\right\}
\]
By symmetry, the John ellipsoid of $K_P$ is
centered at the origin and can be represented as
\[
E(G):=
\left\{
x\in\mathbb R^d:
x^\top G^{-2}x\le1
\right\}
\]
where $G\succ0$.
Its volume is
\[
\operatorname{vol}(E(G))
=
\operatorname{vol}(B_2^d)\det(G)
\]
where $B_2^d$ denotes the Euclidean unit ball.

The containment condition $E(G)\subseteq K_P$
is equivalent to
\[
\max_{x\in E(G)}|x_i^\top x|\le1
\qquad\forall i\in[n]
\]
Writing $x=Gy$ with $\|y\|_2\le1$, we obtain
\[
\max_{x\in E(G)}|x_i^\top x|
=
\max_{\|y\|_2\le1}|x_i^\top Gy|
=
\|Gx_i\|_2
\]
Consequently, the John ellipsoid is obtained
by solving
\[
\begin{aligned}
\text{maximize}\qquad
&\log\det(G^2)\\
\text{subject to}\qquad
&G\succ0,\\
&\|Gx_i\|_2^2\le1,
\qquad\forall i\in[n]
\end{aligned}
\tag{P}
\]

Equivalently, defining $B:=G^2$, we obtain
the convex optimization problem
\[
\begin{aligned}
\text{maximize}\qquad
&\log\det B\\
\text{subject to}\qquad
&B\succ0,\\
&x_i^\top Bx_i\le1,
\qquad\forall i\in[n]
\end{aligned}
\tag{$P'$}
\]
The corresponding ellipsoid is
\[
E(B):=
\{x\in\mathbb R^d:x^\top B^{-1}x\le1\}
\]

\paragraph{Dual formulation.}
Introduce nonnegative Lagrange multipliers
$\mu_1,\ldots,\mu_n$ for the constraints
$x_i^\top Bx_i\le1$.
The Lagrangian of $(P')$ is
\[
\begin{aligned}
L(B,\mu)
&=
\log\det B+
\sum_{i=1}^n
\mu_i(1-x_i^\top Bx_i)\\
&=
\log\det B+\sum_{i=1}^n\mu_i
-\operatorname{tr}(B\Sigma(\mu))
\end{aligned}
\]
where
\[
\Sigma(\mu):=
\sum_{i=1}^n\mu_i x_ix_i^\top.
\]

For $\Sigma(\mu)\succ0$, maximizing the
Lagrangian over $B\succ0$ gives the
stationarity condition $B^{-1}-\Sigma(\mu)=0$
and therefore $B=\Sigma(\mu)^{-1}$.
Substituting into the Lagrangian yields
\[
\begin{aligned}
\sup_{B\succ0}L(B,\mu)
&=
-\log\det\Sigma(\mu)
+\sum_{i=1}^n\mu_i-d
\end{aligned}
\]

Thus, the dual problem is
\[
\begin{aligned}
\text{minimize}\qquad
&D(\mu):=
\sum_{i=1}^n\mu_i
-\log\det\Sigma(\mu)-d\\
\text{subject to}\qquad
&\mu_i\ge0,\qquad\forall i\in[n]\\
&\Sigma(\mu)\succ0
\end{aligned}
\tag{D}
\]
We adopt the convention that
$D(\mu)=+\infty$ when $\Sigma(\mu)$
is singular.

Since the primal problem satisfies
Slater's condition, strong duality holds:
\[
\log\det B^*=D(\mu^*)
\]
where $B^*$ and $\mu^*$ denote optimal
primal and dual solutions, respectively.

\paragraph{John decomposition.}
The optimality conditions imply $B^*=\Sigma(\mu^*)^{-1}$,
and consequently $(B^*)^{-1}
=
\sum_{i=1}^n\mu_i^*x_ix_i^\top$.
This representation is commonly referred
to as the John decomposition.

Furthermore, complementary slackness gives
\[
\mu_i^*
\left(x_i^\top B^*x_i-1\right)=0
\qquad\forall i\in[n]
\]
Taking the trace of
$B^*\Sigma(\mu^*)=I_d$, we obtain
\[
\begin{aligned}
d
&=
\operatorname{tr}(B^*\Sigma(\mu^*))\\
&=
\sum_{i=1}^n
\mu_i^*x_i^\top B^*x_i\\
&=
\sum_{i=1}^n\mu_i^*
\end{aligned}
\]
Hence the optimal dual measure has total
mass $d$, and its objective simplifies to
\[
D(\mu^*)
=
-\log\det\Sigma(\mu^*)
\]

\paragraph{Geometric interpretation of trimmed solutions.}
Recall that a nonnegative measure $\mu$
of total mass $d$ is a
$(1+\gamma)$-approximate $\kappa$-trimmed
John solution if
\[
\left|
\left\{
i\in[n]:
x_i^\top M(\mu)x_i\le e^\gamma
\right\}
\right|
\ge(1-\kappa)n
\]
where $M(\mu):=\Sigma(\mu)^{-1}$.

The following lemma relates this condition
to the geometric guarantees of the
corresponding ellipsoid.

\begin{lemma}
[$(1+\gamma)$-approximate trimmed solution
is a good trimmed rounding]
\label{lem:trimmed-rounding}
Let $\mu$ be a $(1+\gamma)$-approximate
$\kappa$-trimmed John solution, and define
\[
E:=
\left\{
x\in\mathbb R^d:
x^\top M(\mu)^{-1}x\le1
\right\}
\]
Let
\[
S:=
\left\{
i\in[n]:
x_i^\top M(\mu)x_i\le e^\gamma
\right\}
\]
and
\[
K_S:=
\left\{
x\in\mathbb R^d:
|x_i^\top x|\le1,\ \forall i\in S
\right\}
\]
Then $|S|\ge(1-\kappa)n$ and
\[
e^{-\gamma/2}E\subseteq K_S
\]
Moreover, for the original polytope,
\[
K_P\subseteq\sqrt d\,E
\]
\end{lemma}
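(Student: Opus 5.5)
The plan is to prove the three claims separately. All of them come from two facts. The first is the identity $M(\mu)^{-1}=\Sigma(\mu)=\sum_i\mu_i x_ix_i^\top$. The second is the variational description of an origin-centered ellipsoid: $E=\{M(\mu)^{1/2}y:\|y\|_2\le1\}$.

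The bound $|S|\ge(1-\kappa)n$ needs no argument. It is exactly the definition of a $(1+\gamma)$-approximate $\kappa$-trimmed John solution, since $S$ is precisely the index set appearing there.

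For the trimmed inner containment, fix $x\in e^{-\gamma/2}E$, so that $x^\top M(\mu)^{-1}x\le e^{-\gamma}$, i.e. $\|M(\mu)^{-1/2}x\|_2\le e^{-\gamma/2}$. For each $i\in S$, write
\[
x_i^\top x=\langle M(\mu)^{1/2}x_i,\;M(\mu)^{-1/2}x\rangle .
\]
Then apply Cauchy--Schwarz. The first factor has norm $\sqrt{x_i^\top M(\mu)x_i}\le e^{\gamma/2}$ by membership in $S$. The second factor has norm at most $e^{-\gamma/2}$. Hence $|x_i^\top x|\le1$ for every $i\in S$, which gives $e^{-\gamma/2}E\subseteq K_S$. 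The only care needed is to pair the square roots correctly: $M^{1/2}$ goes with $x_i$ and $M^{-1/2}$ goes with $x$.

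For the outer containment, take $x\in K_P$, so $|x_i^\top x|\le1$ for all $i\in[n]$. Expanding,
\[
x^\top M(\mu)^{-1}x=\sum_i\mu_i(x_i^\top x)^2\le\sum_i\mu_i=d,
\]
using $\mu\ge0$ and total mass $d$. Therefore $x\in\sqrt d\,E$. This step uses all $n$ constraints but no trimming, which is why it holds for the full polytope $K_P$.

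I do not expect a real obstacle. The one point to check is that $M(\mu)$ is well defined, i.e. $\Sigma(\mu)\succ0$. This is implicit in the definition, since $M(\mu)$ appears in it, and also in the standing assumption $\operatorname{span}(P)=\mathbb R^d$ together with the measures the algorithm actually produces.
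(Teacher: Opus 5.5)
Your proposal is correct and matches the paper's proof essentially step for step. The bound $|S|\ge(1-\kappa)n$ comes directly from the definition, the inner containment from Cauchy--Schwarz on $\langle M(\mu)^{1/2}x_i,\,M(\mu)^{-1/2}x\rangle$, and the outer containment from the expansion $x^\top M(\mu)^{-1}x=\sum_i\mu_i(x_i^\top x)^2\le d$.
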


\begin{proof}
The bound $|S|\ge(1-\kappa)n$ follows
directly from the definition of an
approximate trimmed John solution.

To establish trimmed containment, let
$x\in e^{-\gamma/2}E$.
By definition, $x^\top M(\mu)^{-1}x\le e^{-\gamma}$.
For every $i\in S$, we also have $x_i^\top M(\mu)x_i\le e^\gamma$.
Applying Cauchy-Schwarz gives
\[
\begin{aligned}
|x_i^\top x|
&=
\left|
\left\langle
M(\mu)^{1/2}x_i,
M(\mu)^{-1/2}x
\right\rangle
\right|\\
&\le
\sqrt{x_i^\top M(\mu)x_i}
\sqrt{x^\top M(\mu)^{-1}x}\\
&\le
e^{\gamma/2}e^{-\gamma/2}
=1
\end{aligned}
\]
Therefore, $e^{-\gamma/2}E\subseteq K_S$.

Next, let $x\in K_P$. Then
$|x_i^\top x|\le1$ for every $i\in[n]$.
Since $M(\mu)^{-1}=\Sigma(\mu)$,
\[
\begin{aligned}
x^\top M(\mu)^{-1}x
&=
x^\top
\left(
\sum_{i=1}^n\mu_i x_ix_i^\top
\right)x
=
\sum_{i=1}^n
\mu_i|x_i^\top x|^2
\le
\sum_{i=1}^n\mu_i=d
\end{aligned}
\]
Thus $x\in\sqrt d\,E$, proving $K_P\subseteq\sqrt d\,E$.
\end{proof}

\paragraph{Volume interpretation via dual feasibility.}
The dual formulation provides a lower
bound on the volume of the ellipsoid
associated with any normalized dual
measure.

\begin{lemma}[Volume interpretation via dual feasibility]
\label{lem:volume-interpretation}
Let $\mu\in\mathbb R_{\ge0}^n$ satisfy
\[
\sum_{i=1}^n\mu_i=d,
\qquad
\Sigma(\mu)\succ0
\]
and define
\[
E:=
\left\{
x\in\mathbb R^d:
x^\top M(\mu)^{-1}x\le1
\right\}
\]
Let $E^*$ be the John ellipsoid of $K_P$.
For every $\gamma\ge0$, the scaled
ellipsoid $E_\gamma:=e^{-\gamma/2}E$
satisfies
\[
\operatorname{vol}(E_\gamma)
\ge
e^{-d\gamma/2}\operatorname{vol}(E^*)
\]
\end{lemma}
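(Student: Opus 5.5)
The plan is to compare $E_\gamma$ directly with $E^*$ through the primal--dual pair (P$'$)/(D) recalled above, using weak duality. First, note that $\operatorname{vol}(E(B))=\operatorname{vol}(B_2^d)\det(B)^{1/2}$. The John ellipsoid is $E^*=E(B^*)$, where $B^*$ is optimal for (P$'$). Strong duality gives $\log\det B^*=D(\mu^*)$. Since $\mu$ is dual feasible ($\mu\ge0$ and $\Sigma(\mu)\succ0$), optimality of $\mu^*$ gives $D(\mu^*)\le D(\mu)$. Combining these, $\log\det B^*\le D(\mu)$. One could instead apply weak duality directly: for any primal-feasible $B$ and any dual-feasible $\mu$, $\log\det B\le \sup_{B'}L(B',\mu)=D(\mu)$. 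This holds because $L(B,\mu)\ge\log\det B$ whenever $B$ is feasible and $\mu\ge0$. This direct route even avoids invoking Slater's condition.

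Second, evaluate $D(\mu)$ using the normalization $\sum_i\mu_i=d$. The linear term cancels against $-d$, so $D(\mu)=-\log\det\Sigma(\mu)=\log\det M(\mu)$. Hence $\det B^*\le\det M(\mu)$, that is,
\[
\operatorname{vol}(E^*)=\operatorname{vol}(B_2^d)\det(B^*)^{1/2}\le \operatorname{vol}(B_2^d)\det(M(\mu))^{1/2}=\operatorname{vol}(E),
\]
since $E=E(M(\mu))$ by definition.

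Third, scaling: $E_\gamma=e^{-\gamma/2}E$ in $\mathbb R^d$ has volume $e^{-d\gamma/2}\operatorname{vol}(E)$. Therefore
\[
\operatorname{vol}(E_\gamma)=e^{-d\gamma/2}\operatorname{vol}(E)\ge e^{-d\gamma/2}\operatorname{vol}(E^*),
\]
which is the claim.

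The only delicate point is the first step, namely justifying that $\log\det B\le D(\mu)$ for every feasible $B$. I would verify it directly. For feasible $B$,
\[
L(B,\mu)=\log\det B+\sum_i\mu_i(1-x_i^\top Bx_i)\ge\log\det B.
\]
Also, $\sup_{B'\succ0}L(B',\mu)$ is attained at $B'=\Sigma(\mu)^{-1}$ by concavity of $\log\det$, and its value is $D(\mu)$. Everything else is routine. Note that the lemma does not even need the trimmed-containment property of $\mu$: only total mass $d$ and invertibility of $\Sigma(\mu)$ are used.
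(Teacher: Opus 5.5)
Your proof is correct and follows the paper's argument. The structure is the same: bound the primal optimum $\log\det B^*$ by the dual value $D(\mu)$, use $\sum_i\mu_i=d$ to rewrite $D(\mu)=\log\det M(\mu)$, and then account for the factor $e^{-d\gamma/2}$ from scaling. The only difference is that your weak-duality route compares with $\det B^*$ directly, whereas the paper goes through an optimal dual $\mu^*$, strong duality, and the John decomposition $B^*=\Sigma(\mu^*)^{-1}$ (with $\sum_i\mu_i^*=d$) to express $\operatorname{vol}(E^*)$ via $\det\Sigma(\mu^*)$, so your version is slightly leaner.
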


\begin{proof}
Let $\mu^*$ denote an optimal dual solution.
By weak duality, $D(\mu)\ge D(\mu^*)$.
Since both measures have total mass $d$,
the dual objectives simplify to
\[
D(\mu)=-\log\det\Sigma(\mu)
\]
and
\[
D(\mu^*)=-\log\det\Sigma(\mu^*)
\]
Consequently,
\[
\det\Sigma(\mu)
\le
\det\Sigma(\mu^*)
\]

The volume of $E$ is
\[
\operatorname{vol}(E)
=
\operatorname{vol}(B_2^d)
\det(M(\mu))^{1/2}
=
\frac{\operatorname{vol}(B_2^d)}
{\sqrt{\det\Sigma(\mu)}}
\]
Similarly, by strong duality and the
John decomposition,
\[
\operatorname{vol}(E^*)
=
\frac{\operatorname{vol}(B_2^d)}
{\sqrt{\det\Sigma(\mu^*)}}.
\]
Hence $\operatorname{vol}(E)
\ge\operatorname{vol}(E^*)$.

Finally, scaling a $d$-dimensional
ellipsoid by $e^{-\gamma/2}$ multiplies
its volume by $e^{-d\gamma/2}$.
Therefore,
\[
\begin{aligned}
\operatorname{vol}(E_\gamma)
&=
e^{-d\gamma/2}
\operatorname{vol}(E)\\
&\ge
e^{-d\gamma/2}
\operatorname{vol}(E^*)
\end{aligned}
\]
\end{proof}

The preceding volume comparison does not
by itself imply that $E_\gamma$ is
contained in the original polytope.
The corresponding trimmed-containment
guarantee follows separately from
Lemma~\ref{lem:trimmed-rounding}.

\subsection{Polar Interpretation: MVEE and MVIE}
\label{apx:polar-interpretation}
\label{subsubsec:polar-interpretation}

We recall the relationship between polar bodies,
minimum-volume enclosing ellipsoids (MVEE),
and maximum-volume inscribed ellipsoids (MVIE).
For the centrally symmetric polytopes considered
in this work, the MVIE coincides with the
John ellipsoid.

\begin{definition}[Polar body~\citep{Schneider_2013}]
Let $K\subseteq\mathbb R^d$ be a convex body
containing the origin.
Its polar body is
\[
K^\circ:=
\left\{
y\in\mathbb R^d:
\langle x,y\rangle\le1
\text{ for every }x\in K
\right\}
\]
When $K$ is centrally symmetric, this is
equivalently
\[
K^\circ=
\left\{
y\in\mathbb R^d:
|\langle x,y\rangle|\le1
\text{ for every }x\in K
\right\}
\]
\end{definition}

\begin{fact}[Basic properties of polarity~\citep{Schneider_2013}]
\label{fact:polar-properties}
Let $K_1,K_2$ be convex bodies containing
the origin. Then:
\begin{enumerate}
\item \emph{Containment reversal:}
\[
K_1\subseteq K_2
\quad\Longrightarrow\quad
K_2^\circ\subseteq K_1^\circ
\]

\item \emph{Bipolar identity:}
If $K$ is closed and convex and contains
the origin in its interior, then
\[
(K^\circ)^\circ=K
\]

\item \emph{Scaling:}
For every $\alpha>0$,
\[
(\alpha K)^\circ=\alpha^{-1}K^\circ
\]
\end{enumerate}
\end{fact}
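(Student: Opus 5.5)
The plan is to prove the three properties directly from the definition of $K^\circ$. Only the bipolar identity needs a genuine tool, namely the separating hyperplane theorem for closed convex sets.

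\emph{Containment reversal.} Let $K_1\subseteq K_2$ and take $y\in K_2^\circ$, so $\langle x,y\rangle\le 1$ for every $x\in K_2$. Every $x\in K_1$ lies in $K_2$, so the same inequality holds on $K_1$, which gives $y\in K_1^\circ$. No further ingredient is needed.

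\emph{Scaling.} Fix $\alpha>0$. Then $y\in(\alpha K)^\circ$ means $\langle \alpha x,y\rangle\le1$ for all $x\in K$. Rewriting this as $\langle x,\alpha y\rangle\le 1$ for all $x\in K$ shows it is equivalent to $\alpha y\in K^\circ$, that is, $y\in\alpha^{-1}K^\circ$. This is a one-line substitution.

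\emph{Bipolar identity.} This is the main step. The inclusion $K\subseteq (K^\circ)^\circ$ is immediate: for $x\in K$ and any $y\in K^\circ$ we have $\langle x,y\rangle\le 1$, which is exactly the defining condition for $x\in(K^\circ)^\circ$.

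For the reverse inclusion I will argue by contraposition. Take $z\notin K$. Since $K$ is closed and convex, the separating hyperplane theorem gives a vector $a$ and a scalar $b$ with $\langle x,a\rangle\le b<\langle z,a\rangle$ for all $x\in K$.

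The key point, and the only place the hypothesis $0\in\operatorname{int}K$ is used, is that $b>0$. Indeed $0\in K$ gives $b\ge 0$. If $b=0$, then the ball $\{x:\|x\|\le r\}$ around the origin, which lies in $K$ for some $r>0$, contains $ra/\|a\|$, and this point has $\langle ra/\|a\|,a\rangle=r\|a\|>0$, a contradiction (note $a\neq 0$ because the separation is strict).

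Setting $y:=a/b$, we get $\langle x,y\rangle\le1$ for all $x\in K$, so $y\in K^\circ$, while $\langle z,y\rangle>1$. Hence $z\notin(K^\circ)^\circ$.

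Finally, the equivalent symmetric form of the definition stated for centrally symmetric $K$ follows by applying the defining inequality to both $x$ and $-x\in K$.
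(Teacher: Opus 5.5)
Your proof is correct. The paper does not prove this fact; it simply cites Schneider. Your argument (unwinding the definition for containment reversal and scaling, and using strict separation of a point from a closed convex set for $(K^\circ)^\circ\subseteq K$) is the standard textbook proof behind that citation.

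As a minor aside, the hypothesis $0\in\operatorname{int}K$ is not needed. Since $0\in K$ forces $0\le b<\langle z,a\rangle$, you may replace $b$ by $(b+\langle z,a\rangle)/2>0$. Hence the bipolar identity holds for every closed convex $K$ containing the origin.
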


\begin{fact}[Polarity of centered ellipsoids~\citep{tod16}]
\label{fact:polar-ellipsoid}
Let $M\succ0$ and define
\[
E(M):=
\left\{
x\in\mathbb R^d:
x^\top M^{-1}x\le1
\right\}
\]
Then
\[
E(M)^\circ
=
\left\{
y\in\mathbb R^d:
y^\top My\le1
\right\}
\]
In particular, the polar of an
origin-centered ellipsoid is again
an origin-centered ellipsoid.
\end{fact}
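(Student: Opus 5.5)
We prove the set identity directly from the definition of the polar body, reducing everything to the unit ball by a linear change of variables. Since $M\succ0$, it has a unique positive definite square root $M^{1/2}$, which is invertible. The first step is to parametrize the ellipsoid: $E(M)=\{M^{1/2}u:\|u\|_2\le1\}$. Indeed, for $x=M^{1/2}u$ we have $x^\top M^{-1}x=u^\top M^{1/2}M^{-1}M^{1/2}u=\|u\|_2^2$. Because $M^{1/2}$ is a bijection of $\mathbb R^d$, every $x$ arises this way, so the map $u\mapsto M^{1/2}u$ carries the unit ball $B_2^d$ exactly onto $E(M)$. This is the same substitution $x=Gy$ used in the primal formulation in Appendix~\ref{apx:john-primal-dual}.

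The second step is to compute the support function of $E(M)$ in a direction $y$. For any $y$,
\[
\sup_{x\in E(M)}\langle x,y\rangle=\sup_{\|u\|_2\le1}\langle M^{1/2}u,y\rangle=\sup_{\|u\|_2\le1}\langle u,M^{1/2}y\rangle=\|M^{1/2}y\|_2=\sqrt{y^\top My}.
\]
The upper bound comes from Cauchy--Schwarz, using that $M^{1/2}$ is symmetric. If $y\neq0$, the bound is attained at $u=M^{1/2}y/\|M^{1/2}y\|_2$, which is well defined since $M^{1/2}y\neq0$. The case $y=0$ is trivial, since both sides are $0$. By the definition of the polar body, $y\in E(M)^\circ$ iff this supremum is at most $1$, i.e.\ iff $y^\top My\le1$, which is the claimed identity.

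Finally, $\{y:y^\top My\le1\}$ is an ellipsoid centered at the origin because $M\succ0$, which gives the ``in particular'' clause. Applying the same computation with $M^{-1}$ in place of $M$ gives $C(M)^\circ=E(M)$, consistent with the bipolar identity of Fact~\ref{fact:polar-properties}. There is no real obstacle in this argument. The only points needing care are the exact bijection between $B_2^d$ and $E(M)$, which rests on the invertibility of $M^{1/2}$, and the attainment of the supremum, handled separately for $y=0$.
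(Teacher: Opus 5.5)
Your proof is correct and complete. The paper does not prove this fact itself (it is cited from the literature), but your support-function computation $\sup_{x\in E(M)}\langle x,y\rangle=\|M^{1/2}y\|_2$ via the substitution $x=M^{1/2}u$ is the standard argument, and it is the same calculation the paper carries out in Appendix~\ref{apx:john-primal-dual} when it shows $\max_{x\in E(G)}|x_i^\top x|=\|Gx_i\|_2$, with $G=M^{1/2}$.
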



\paragraph{John ellipsoid and centered MVEE.}
Given $P=\{x_1,\ldots,x_n\}$, define
the centrally symmetric convex hull
\[
C_P:=
\operatorname{conv}
\{\pm x_1,\ldots,\pm x_n\}
\]
By the definition of polarity,
\[
K_P=C_P^\circ,
\qquad
K_P^\circ=C_P
\]

Let $E^*$ be the John ellipsoid of $K_P$.
Since $E^*\subseteq K_P$, containment
reversal gives
\[
C_P=K_P^\circ\subseteq(E^*)^\circ
\]
Thus $(E^*)^\circ$ is an origin-centered
ellipsoid enclosing all the points of $P$.

Conversely, let $C$ be any
origin-centered ellipsoid satisfying
$C_P\subseteq C$.
Taking polars gives
\[
C^\circ\subseteq C_P^\circ=K_P
\]
Therefore, $C^\circ$ is an admissible
centered inscribed ellipsoid of $K_P$.

For an origin-centered ellipsoid
$E(M)$, we have
\[
\operatorname{vol}(E(M))
=
\operatorname{vol}(B_2^d)
\det(M)^{1/2}
\]
and
\[
\operatorname{vol}(E(M)^\circ)
=
\operatorname{vol}(B_2^d)
\det(M)^{-1/2}
\]
Consequently,
\[
\operatorname{vol}(E)
\operatorname{vol}(E^\circ)
=
\operatorname{vol}(B_2^d)^2
\]

Maximizing the volume of an
origin-centered ellipsoid inscribed
in $K_P$ is therefore equivalent to
minimizing the volume of an
origin-centered ellipsoid enclosing
$C_P$.

It follows that $C^*:=(E^*)^\circ$
is the centered minimum-volume enclosing
ellipsoid of $C_P$, and equivalently
the centered MVEE of $P$.

\paragraph{Trimmed containment under polarity.}
Let $S\subseteq[n]$ and define
\[
K_S:=
\left\{
x:
|x_i^\top x|\le1
\quad\forall i\in S
\right\}
\]
Suppose that an origin-centered ellipsoid
$\widehat E$ satisfies $\widehat E\subseteq K_S$.
Taking polars yields
\[
\operatorname{conv}
\{\pm x_i:i\in S\}
=
K_S^\circ
\subseteq\widehat E^\circ
\]
In particular,
\[
x_i\in\widehat E^\circ
\qquad\forall i\in S
\]
Thus, if $|S|\ge(1-\kappa)n$,
the polar ellipsoid encloses at least
a $(1-\kappa)$-fraction of the input points.

Moreover, if $\operatorname{vol}(\widehat E)
\ge
\alpha\operatorname{vol}(E^*)$,
then
\[
\begin{aligned}
\frac{
\operatorname{vol}(\widehat E^\circ)
}{
\operatorname{vol}(C^*)
}
&=
\frac{
\operatorname{vol}(E^*)
}{
\operatorname{vol}(\widehat E)
}\le\alpha^{-1}
\end{aligned}
\]
Hence, the trimmed-containment and
volume guarantees for the John ellipsoid
translate directly into corresponding
guarantees for the centered MVEE problem.

The extension to general uncentered MVEEs
through homogeneous-coordinate lifting
is discussed separately in
Section~\ref{apx:uncentered-centered-reduction}.

\subsection{Additional Differential Privacy
and Concentration Facts}
\label{apx:privacy-background}

We recall the standard privacy definitions,
composition properties, and Gaussian
concentration inequalities used in the
analysis of our private algorithm.

\paragraph{Differential privacy.}

\begin{definition}
[Differential Privacy~\citep{dwork2006calibrating}]
A randomized mechanism
$\mathcal M:\mathcal X^n\to\mathcal Y$
is $(\varepsilon,\delta)$-differentially
private if, for every pair of neighboring
datasets $P,P'\in\mathcal X^n$ and every
measurable set $S\subseteq\mathcal Y$,
\[
\Pr[\mathcal M(P)\in S]
\le
e^\varepsilon
\Pr[\mathcal M(P')\in S]+\delta
\]
When $\delta=0$, the mechanism is called
$\varepsilon$-differentially private.
\end{definition}

\begin{definition}
[Zero-Concentrated Differential Privacy
(zCDP)~\citep{bun2016concentrated}]
A randomized mechanism $\mathcal M$
satisfies $\rho$-zCDP if, for every pair
of neighboring datasets $P,P'$ and every
$\alpha>1$,
\[
D_\alpha
\bigl(
\mathcal M(P)\Vert\mathcal M(P')
\bigr)
\le\alpha\rho
\]
where $D_\alpha$ is the Renyi divergence
of order $\alpha$.
\end{definition}

The following standard properties of zCDP
are used throughout the paper.

\begin{fact}[Composition and post-processing]
\label{fact:zcdp-composition}
Suppose that $\mathcal M_t$ satisfies
$\rho_t$-zCDP for
$t=1,\ldots,T$.
If each mechanism satisfies its claimed
privacy guarantee conditionally on the
previously released outputs, then their
adaptive composition satisfies $\left(\sum_{t=1}^T\rho_t\right)
\text{-zCDP}$.

Moreover, arbitrary randomized
post-processing of the output of a
$\rho$-zCDP mechanism preserves
the same privacy guarantee.
\end{fact}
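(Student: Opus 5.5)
We prove the two parts of the Fact separately, both directly from the definition of $\rho$-zCDP through R\'enyi divergences, following~\cite{bun2016concentrated}. Throughout, fix neighboring datasets $P,P'$ and an order $\alpha>1$.

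\emph{Composition.} We argue by induction on $T$, so it suffices to handle two mechanisms, where the second may depend on the output of the first. Let $\mathcal M_1$ be $\rho_1$-zCDP. Let $\mathcal M_2(\cdot\,;y_1)$ be $\rho_2$-zCDP for every fixed $y_1$. Write the joint output distributions as $(Y_1,Y_2)$ under $P$ and $(Y_1',Y_2')$ under $P'$.
\begin{itemize}
\item First express $\exp\bigl((\alpha-1)D_\alpha\bigr)$ of the joint laws as an integral of the joint density ratio to the power $\alpha$ against the law under $P'$.
\item Factor this ratio into the marginal ratio for $y_1$ times the conditional ratio for $y_2$ given $y_1$.
\item Integrate out $y_2$ first, for each fixed $y_1$. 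The inner integral equals $\exp\bigl((\alpha-1)D_\alpha(\mathcal M_2(P;y_1)\Vert\mathcal M_2(P';y_1))\bigr)$, which by hypothesis is at most $e^{(\alpha-1)\alpha\rho_2}$.
\item Pull this uniform bound out of the outer integral. What remains is $\exp\bigl((\alpha-1)D_\alpha(\mathcal M_1(P)\Vert\mathcal M_1(P'))\bigr)\le e^{(\alpha-1)\alpha\rho_1}$.
\end{itemize}
Taking logarithms and dividing by $\alpha-1$ gives $D_\alpha\le\alpha(\rho_1+\rho_2)$. Iterating over $t=1,\ldots,T$ gives $\bigl(\sum_t\rho_t\bigr)$-zCDP.

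\emph{Post-processing.} Let $f$ be any (possibly randomized) map, i.e.\ a Markov kernel applied to the output. The data-processing inequality for R\'enyi divergence gives
\[
D_\alpha\bigl(f(\mathcal M(P))\Vert f(\mathcal M(P'))\bigr)\le D_\alpha\bigl(\mathcal M(P)\Vert\mathcal M(P')\bigr)\le\alpha\rho .
\]
Data processing itself follows by applying Jensen's inequality to the convex function $u\mapsto u^\alpha$ inside the kernel integral: the ratio of the output densities is a conditional expectation of the input density ratio.

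\emph{Main obstacle.} The composition step needs care in the adaptive setting. The bound on the inner integral must hold uniformly for every conditioning value $y_1$. This is exactly the hypothesis ``conditionally on the previously released outputs'', so the measure-theoretic factorization into regular conditional distributions must be set up cleanly. Once that is done, the remaining steps are direct calculations.
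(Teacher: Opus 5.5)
Your proof is correct: bounding the inner conditional R\'enyi integral uniformly in $y_1$, and then the outer marginal one, gives $D_\alpha\le\alpha(\rho_1+\rho_2)$, and the Jensen/data-processing argument handles post-processing. The paper does not prove this Fact but only recalls it from \citet{bun2016concentrated}, and your argument is the standard proof given there, so it supplies exactly the justification the paper relies on.
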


\begin{fact}[Conversion to approximate DP]
\label{fact:zcdp-conversion}
A $\rho$-zCDP mechanism also satisfies
$(\varepsilon,\delta)$-DP for every
$\delta\in(0,1)$, where $\varepsilon
=
\rho+
2\sqrt{\rho\log(1/\delta)}$.
\end{fact}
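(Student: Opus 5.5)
We would prove Fact~\ref{fact:zcdp-conversion} through the privacy loss random variable and a Chernoff-type tail bound derived from the R\'enyi divergence guarantee. Fix neighboring datasets $P,P'$, and let $p,p'$ be the densities (or Radon--Nikodym derivatives with respect to a common dominating measure) of $\mathcal M(P)$ and $\mathcal M(P')$. Define the privacy loss $Z:=\log\bigl(p(Y)/p'(Y)\bigr)$ with $Y\sim\mathcal M(P)$. The first step is the reduction to a tail bound. For any measurable $S$, split $S$ into the part where $Z\le\varepsilon$ and the part where $Z>\varepsilon$. On the first part $p\le e^\varepsilon p'$ holds pointwise. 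This gives
\[
\Pr[\mathcal M(P)\in S]\le e^{\varepsilon}\Pr[\mathcal M(P')\in S]+\Pr[Z>\varepsilon].
\]
It therefore suffices to show $\Pr[Z>\varepsilon]\le\delta$ for $\varepsilon=\rho+2\sqrt{\rho\log(1/\delta)}$.

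The second step bounds the moment generating function of $Z$. By the definition of R\'enyi divergence, for every $\alpha>1$,
\[
\mathbb E\bigl[e^{(\alpha-1)Z}\bigr]=\int p^{\alpha}p'^{1-\alpha}=e^{(\alpha-1)D_\alpha(\mathcal M(P)\Vert\mathcal M(P'))}\le e^{(\alpha-1)\alpha\rho},
\]
where the inequality uses the $\rho$-zCDP hypothesis. Markov's inequality applied to $e^{(\alpha-1)Z}$ then gives
\[
\Pr[Z>\varepsilon]\le \exp\bigl((\alpha-1)(\alpha\rho-\varepsilon)\bigr).
\]

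The third step optimizes over $\alpha$. The exponent is a quadratic in $\alpha$ minimized at $\alpha=(\varepsilon+\rho)/(2\rho)$, where it equals $-(\varepsilon-\rho)^2/(4\rho)$. Substituting $\varepsilon-\rho=2\sqrt{\rho\log(1/\delta)}$ makes the bound exactly $\delta$. The chosen $\alpha=1+\sqrt{\log(1/\delta)/\rho}$ is strictly greater than $1$ for $\delta\in(0,1)$, so it is admissible in the zCDP definition.

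The calculation itself is routine. The only point needing care is the measure-theoretic one in the first step: defining $Z$ when the two output distributions are not mutually absolutely continuous. Finite R\'enyi divergence for all $\alpha>1$ forces $\mathcal M(P)\ll\mathcal M(P')$, so $Z$ is well defined $\mathcal M(P)$-almost surely, and the split of $S$ goes through.
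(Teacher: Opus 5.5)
Your proof is correct, and it is the standard argument of Bun and Steinke: reduce to a tail bound on the privacy-loss variable, bound $\mathbb E[e^{(\alpha-1)Z}]$ by $e^{(\alpha-1)\alpha\rho}$ using the R\'enyi condition, apply Markov, and take $\alpha=(\varepsilon+\rho)/(2\rho)=1+\sqrt{\log(1/\delta)/\rho}$. The paper gives no proof of its own and simply imports this fact from Bun and Steinke, so there is nothing further to compare. The one case you leave implicit, $\rho=0$, is trivial, since then $\mathcal M(P)=\mathcal M(P')$.
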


\paragraph{Gaussian mechanism.}
Let $f:\mathcal X^n\to\mathbb R^m$
have global $\ell_2$-sensitivity
\[
\Delta_2(f):=
\sup_{P\sim P'}
\|f(P)-f(P')\|_2
\]

\begin{fact}[Gaussian mechanism~\citep{bun2016concentrated}]
\label{fact:gaussian-mechanism}
Let $Z\sim\mathcal N(0,\sigma^2I_m)$.
The mechanism
\[
\mathcal M(P):=f(P)+Z
\]
satisfies $\rho$-zCDP whenever
\[
\sigma^2
\ge
\frac{\Delta_2(f)^2}{2\rho}
\]
\end{fact}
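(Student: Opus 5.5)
The plan is to compute the R\'enyi divergence between the two output distributions exactly, using the fact that on neighboring inputs $P\sim P'$ the mechanism outputs two spherical Gaussians with the same covariance. Fix neighbors $P,P'$ and write $a:=f(P)$, $b:=f(P')$. Then $\mathcal M(P)\sim\mathcal N(a,\sigma^2 I_m)$ and $\mathcal M(P')\sim\mathcal N(b,\sigma^2 I_m)$. It therefore suffices to show that, for every $\alpha>1$,
\[
D_\alpha\bigl(\mathcal N(a,\sigma^2I_m)\,\Vert\,\mathcal N(b,\sigma^2I_m)\bigr)=\frac{\alpha\|a-b\|_2^2}{2\sigma^2}.
\]
Once this identity is in hand, the definition of $\Delta_2(f)$ gives $\|a-b\|_2\le\Delta_2(f)$. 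Combined with the hypothesis $\sigma^2\ge\Delta_2(f)^2/(2\rho)$, this yields the bound $\alpha\rho$ required by the definition of $\rho$-zCDP.

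To prove the identity, I would first reduce to one dimension. Both the R\'enyi divergence and the Gaussian family are invariant under an orthogonal change of coordinates and a common translation. So I can rotate so that $a-b$ lies along the first coordinate axis, and translate so that $b=0$. The two product measures then agree on coordinates $2,\ldots,m$. Since R\'enyi divergence is additive over independent product components, those coordinates contribute zero. This leaves $D_\alpha\bigl(\mathcal N(\delta,\sigma^2)\Vert\mathcal N(0,\sigma^2)\bigr)$ with $\delta:=\|a-b\|_2$.

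For the scalar case, I would evaluate
\[
\frac{1}{\alpha-1}\log\int p(x)^\alpha q(x)^{1-\alpha}\,dx
\]
directly. The exponent is
\[
-\frac{\alpha(x-\delta)^2+(1-\alpha)x^2}{2\sigma^2}.
\]
Completing the square rewrites it as
\[
-\frac{(x-\alpha\delta)^2}{2\sigma^2}+\frac{\alpha(\alpha-1)\delta^2}{2\sigma^2}.
\]
The remaining integral is that of a normalized Gaussian density, so it equals $1$, and the integral evaluates to $\exp\bigl(\alpha(\alpha-1)\delta^2/(2\sigma^2)\bigr)$. Dividing its logarithm by $\alpha-1$ gives $\alpha\delta^2/(2\sigma^2)$, which is the claimed identity.

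The only step that needs any care is the completing-the-square computation. In particular, one has to check that the coefficient of $x^2$ collapses to exactly $-1/(2\sigma^2)$, so that the leftover integral really is that of a normalized density. Everything else in the argument is bookkeeping.
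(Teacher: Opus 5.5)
Your argument is correct. The closed form $D_\alpha\bigl(\mathcal N(a,\sigma^2 I_m)\,\Vert\,\mathcal N(b,\sigma^2 I_m)\bigr)=\alpha\|a-b\|_2^2/(2\sigma^2)$, obtained by reducing to one dimension and completing the square, together with $\|a-b\|_2\le\Delta_2(f)$ and $\sigma^2\ge\Delta_2(f)^2/(2\rho)$, gives exactly $D_\alpha\le\alpha\rho$. The paper gives no proof of this fact and only cites Bun and Steinke, whose proof is this same Gaussian R\'enyi-divergence computation, so your route matches the intended one.
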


\paragraph{Gaussian perturbation of symmetric matrices.}
We identify the vector space of real
symmetric $d\times d$ matrices with
$\mathbb R^{d(d+1)/2}$ through an
orthonormal basis under the Frobenius
inner product
\[
\langle A,B\rangle_F
:=
\operatorname{tr}(A^\top B)
\]

Let $F_{ii}:=e_ie_i^\top$,
and, for $i<j$, let $F_{ij}:=
\frac{e_ie_j^\top+e_je_i^\top}{\sqrt2}$.
The collection
$\{F_{ii}\}_{i=1}^d
\cup\{F_{ij}\}_{i<j}$
is an orthonormal basis of the real
symmetric matrices.

A symmetric Gaussian perturbation with
scale $\sigma$ can be written as
\[
N=
\sum_{i=1}^d z_{ii}F_{ii}
+
\sum_{1\le i<j\le d}z_{ij}F_{ij}
\]
where the coefficients $z_{ij}$ are
independent $\mathcal N(0,\sigma^2)$
random variables.

Equivalently, its diagonal entries have
variance $\sigma^2$, and its
off-diagonal entries have variance
$\sigma^2/2$.

Under this normalization, perturbing a
symmetric matrix-valued query with $N$
is equivalent to applying the standard
Gaussian mechanism to its coordinates
in an orthonormal basis.

In particular, a symmetric matrix-valued
query with global Frobenius sensitivity
$\Delta_F$ satisfies $\rho$-zCDP
when perturbed using $\sigma^2
\ge
\frac{\Delta_F^2}{2\rho}$.

\paragraph{Sensitivity of a weighted covariance matrix.}
Consider the weighted covariance query
\[
\Sigma(P,\mu)
:=
\sum_{i=1}^n\mu_i x_ix_i^\top
\]
where $\mu\in\Gamma_\kappa$ is fixed
and $\|x_i\|_2\le R$ for every $i$.

Let $P$ and $P'$ differ only in their
$j$-th constraint, where $x_j$ is
replaced by $x_j'$.
Then
\[
\begin{aligned}
&\|\Sigma(P,\mu)-\Sigma(P',\mu)\|_F\\
&\qquad=
\mu_j
\|x_jx_j^\top-x_j'x_j'^\top\|_F\\
&\qquad\le
\mu_j
\left(
\|x_jx_j^\top\|_F+
\|x_j'x_j'^\top\|_F
\right)\\
&\qquad\le
2\mu_jR^2
\le
\frac{2dR^2}{\kappa n}
\end{aligned}
\]

Thus, for a fixed measure $\mu$, the
weighted covariance query has global
Frobenius sensitivity at most $\Delta_F
\le
\frac{2dR^2}{\kappa n}$.

\paragraph{Gaussian norm concentration.}

\begin{fact}[Gaussian vector concentration]
\label{fact:gaussian-concentration}
Let $Z\sim\mathcal N(0,\sigma^2I_d)$.
Then, for every $u>0$,
\[
\Pr\left[
\|Z\|_2
\ge
\sigma\left(\sqrt d+\sqrt{2u}\right)
\right]
\le e^{-u}
\]
Equivalently, for every $\beta\in(0,1)$,
\[
\Pr\left[
\|Z\|_2
\le
\sigma
\left(
\sqrt d+\sqrt{2\log(1/\beta)}
\right)
\right]
\ge1-\beta
\]
\end{fact}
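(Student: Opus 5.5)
The plan is to reduce to a standard Gaussian and combine Jensen's inequality with Gaussian concentration for Lipschitz functions. Write $Z=\sigma G$ with $G\sim\mathcal N(0,I_d)$, so that $\|Z\|_2=\sigma\|G\|_2$. It then suffices to show
\[
\Pr\left[\|G\|_2\ge\sqrt d+\sqrt{2u}\right]\le e^{-u}.
\]

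\textbf{Step 1 (the mean).} The map $f(g)=\|g\|_2$ is convex and $1$-Lipschitz with respect to the Euclidean norm. By Jensen's inequality,
\[
\mathbb E\|G\|_2\le\sqrt{\mathbb E\|G\|_2^2}=\sqrt d .
\]

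\textbf{Step 2 (concentration around the mean).} Invoke the Gaussian concentration inequality for $1$-Lipschitz functions (Tsirelson--Ibragimov--Sudakov / Borell):
\[
\Pr\left[f(G)-\mathbb E f(G)\ge t\right]\le e^{-t^2/2}.
\]
Taking $t=\sqrt{2u}$ and using Step 1 gives
\[
\Pr\left[\|G\|_2\ge\sqrt d+\sqrt{2u}\right]\le\Pr\left[f(G)\ge\mathbb E f(G)+\sqrt{2u}\right]\le e^{-u}.
\]
Rescaling by $\sigma$ yields the first claim. The ``equivalently'' form follows by setting $u=\log(1/\beta)$ and taking complements.

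The main obstacle is Step 2, since it is the only nontrivial ingredient. I would cite it as a standard fact. If a self-contained argument is preferred, the fallback is the Laurent--Massart $\chi^2$ tail bound
\[
\Pr\left[\|G\|_2^2\ge d+2\sqrt{du}+2u\right]\le e^{-u}.
\]
This suffices because
\[
d+2\sqrt{du}+2u\le d+2\sqrt{2du}+2u=\left(\sqrt d+\sqrt{2u}\right)^2,
\]
so the event $\{\|G\|_2\ge\sqrt d+\sqrt{2u}\}$ is contained in the event controlled by Laurent--Massart. That bound is itself proved by a Chernoff argument on the moment generating function
\[
\mathbb E\, e^{\lambda\|G\|_2^2}=(1-2\lambda)^{-d/2},\qquad \lambda<\tfrac12,
\]
using the elementary inequality $-\log(1-2\lambda)-2\lambda\le 2\lambda^2/(1-2\lambda)$ and optimizing over $\lambda$.
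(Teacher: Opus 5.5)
Your proof is correct, and your primary route differs from the paper's, which gives no real argument and only remarks that the bound follows from concentration of the norm of a standard Gaussian vector, using $\|G\|_2^2\sim\chi^2_d$. That one-line hint is essentially your fallback.

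\textbf{Your primary route.} You bound $\mathbb E\|G\|_2\le\sqrt d$ by Jensen and then apply Gaussian concentration for $1$-Lipschitz functions with the sharp constant $e^{-t^2/2}$. You never use the $\chi^2$ structure or an explicit moment generating function. This is shorter, and it only needs that the norm is $1$-Lipschitz. The cost is that it rests on a deeper theorem (Tsirelson--Ibragimov--Sudakov). Its sharp constant comes from Gaussian isoperimetry or from the log-Sobolev inequality via Herbst's argument.

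\textbf{The $\chi^2$ route.} This is what the paper gestures at. It is fully elementary: a Chernoff bound on $(1-2\lambda)^{-d/2}$, and your inequality $-\log(1-2\lambda)-2\lambda\le 2\lambda^2/(1-2\lambda)$ is correct. Your fallback also supplies the step the paper leaves implicit. The comparison $d+2\sqrt{du}+2u\le(\sqrt d+\sqrt{2u})^2$ converts the Laurent--Massart tail on $\|G\|_2^2$ into the stated tail on $\|G\|_2$. It in fact gives the slightly stronger threshold $\sqrt{d+2\sqrt{du}+2u}$.

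Your derivation of the equivalent form, by taking $u=\log(1/\beta)$ and passing to complements, is also correct.
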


This bound follows from the concentration
of the Euclidean norm of a standard
Gaussian vector, whose squared norm
follows a $\chi_d^2$ distribution.

For the symmetric Gaussian matrix
described above, a corresponding
spectral-norm concentration bound takes
the form
\[
\Pr\left[
\|N\|_2>
C\sigma(\sqrt d+\sqrt u)
\right]
\le e^{-u}
\]
where $C>0$ is an absolute constant.

The precise constant depends on the
normalization of the symmetric Gaussian
matrix distribution.

In particular, for independent symmetric
Gaussian perturbations
$N^0,\ldots,N^T$, a union bound gives
\[
\Pr\left[
\max_{0\le t\le T}\|N^t\|_2
\le
C\sigma
\left(
\sqrt d+
\sqrt{\log\frac{T+1}{\beta}}
\right)
\right]
\ge1-\beta
\]

This concentration inequality is used
in Appendix~\ref{app:private-proofs}
to obtain the sample-size condition
ensuring that the noisy covariance
matrices remain well-conditioned.

\subsection{Additional Facts on KL Projection
and Leverage Scores}
\label{apx:kl-leverage-background}

\paragraph{Measures and generalized KL divergence.}
A nonnegative measure over the finite
domain $[n]$ is a vector
$\mu\in\mathbb R_{\ge0}^n$.
Throughout the paper, the iterates
belong to the set of measures of total
mass $d$.

For nonnegative measures $\mu,\nu$,
with $\nu_i>0$ for all $i$, we define
the generalized KL divergence by
\[
D_{\mathrm{KL}}(\mu\Vert\nu)
:=
\sum_{i=1}^n
\left[
\mu_i\log\left(\frac{\mu_i}{\nu_i}\right)
+\nu_i-\mu_i
\right]
\]
with the convention $0\log0=0$.

When both measures have total mass $d$,
the linear terms cancel, and
\[
D_{\mathrm{KL}}(\mu\Vert\nu)
=
\sum_{i=1}^n
\mu_i\log\left(\frac{\mu_i}{\nu_i}\right)
\]

Recall that the set of
$\kappa$-dense measures is
\[
\Gamma_\kappa:=
\left\{
\mu\in\mathbb R_{\ge0}^n:
\sum_{i=1}^n\mu_i=d,\quad
\mu_i\le\frac{d}{\kappa n}
\ \forall i\in[n]
\right\}
\]

\paragraph{Bregman projection.}
Let $\Gamma$ be a nonempty closed convex
set of nonnegative measures with total
mass $d$.
The KL projection of a strictly
positive measure $\widehat\mu$ onto
$\Gamma$ is
\[
\Pi_\Gamma(\widehat\mu)
:=
\arg\min_{\mu\in\Gamma}
D_{\mathrm{KL}}(\mu\Vert\widehat\mu)
\]

The following standard result gives
the property of KL projection used
in the analysis of the multiplicative
updates.

\begin{theorem}
[Bregman Pythagorean inequality
~\citep{bregman1967relaxation}]
\label{thm:bregman-pythagorean}
Let $\Gamma$ be a nonempty closed convex
set of measures of total mass $d$,
and let $\widehat\mu$ be a strictly
positive measure.
For every $q\in\Gamma$,
\[
\begin{aligned}
&D_{\mathrm{KL}}
(q\Vert\Pi_\Gamma(\widehat\mu))+
D_{\mathrm{KL}}
(\Pi_\Gamma(\widehat\mu)\Vert\widehat\mu)\le
D_{\mathrm{KL}}(q\Vert\widehat\mu)
\end{aligned}
\]
In particular,
\[
D_{\mathrm{KL}}
(q\Vert\Pi_\Gamma(\widehat\mu))
\le
D_{\mathrm{KL}}(q\Vert\widehat\mu)
\]
\end{theorem}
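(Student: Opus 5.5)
The statement is the classical three-point (Pythagorean) inequality for Bregman projections, specialized to the generalized KL divergence. I would prove it by combining an algebraic three-point identity with the first-order optimality condition of the projection.

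First, I settle existence of $p:=\Pi_\Gamma(\widehat\mu)$. $\Gamma$ is closed and bounded, since it consists of nonnegative vectors of total mass $d$, hence compact. The map $f(\mu):=D_{\mathrm{KL}}(\mu\Vert\widehat\mu)$ is continuous on $\mathbb R^n_{\ge0}$ (with $0\log0=0$ and $\widehat\mu_i>0$) and strictly convex. So a unique minimizer $p$ exists.

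Next, I write down the three-point identity. For any $q\in\Gamma$, expanding the definitions gives
\[
D_{\mathrm{KL}}(q\Vert\widehat\mu)-D_{\mathrm{KL}}(q\Vert p)-D_{\mathrm{KL}}(p\Vert\widehat\mu)
=\sum_{i}(q_i-p_i)\log\frac{p_i}{\widehat\mu_i}.
\]
The linear terms cancel, since $(\widehat\mu_i-q_i)-(p_i-q_i)-(\widehat\mu_i-p_i)=0$. The $q_i\log q_i$ terms cancel between the first two divergences, and the $p_i\log p_i$ terms combine into the right-hand side. This identity is valid whenever $q_i>0$ implies $p_i>0$; otherwise $D_{\mathrm{KL}}(q\Vert p)=+\infty$ and the identity is meaningless. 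So the claim reduces to the variational inequality $\sum_i (q_i-p_i)\log(p_i/\widehat\mu_i)\ge 0$, together with the support condition $\operatorname{supp}(q)\subseteq\operatorname{supp}(p)$.

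Both facts come from the one-sided derivative of $g(s):=f(p+s(q-p))$ on $[0,1]$. By convexity of $\Gamma$ the segment stays in $\Gamma$, so $g$ is convex and minimized at $s=0$, hence $g'(0^+)\ge0$.

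\begin{itemize}
\item \emph{Coordinates with $p_i>0$.} Differentiating term by term gives the contribution $(q_i-p_i)\log(p_i/\widehat\mu_i)$.
\item \emph{Coordinates with $p_i=0<q_i$.} The term is $sq_i\log(sq_i/\widehat\mu_i)-sq_i$, whose derivative at $0^+$ is $-\infty$. This would force $g'(0^+)=-\infty$, contradicting minimality, so no such coordinate exists.
\item \emph{Coordinates with $p_i=q_i=0$.} These contribute nothing.
\end{itemize}

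Hence $\operatorname{supp}(q)\subseteq\operatorname{supp}(p)$ and $g'(0^+)=\sum_{i:p_i>0}(q_i-p_i)\log(p_i/\widehat\mu_i)\ge0$. This is exactly the sum in the identity, which yields the inequality. The "in particular" clause follows by dropping the nonnegative term $D_{\mathrm{KL}}(p\Vert\widehat\mu)$.

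The main obstacle is this boundary issue: $p$ may have zero coordinates, where $\log p_i$ is undefined. The one-sided derivative argument handles it, and I would verify carefully that $g'(0^+)$ may be computed term by term. This holds because each summand is convex in $s$ and finitely many summands are involved.
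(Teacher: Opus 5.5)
Your proof is correct and follows the same route as the paper. The paper likewise combines the three-point identity $D_{\mathrm{KL}}(q\Vert\widehat\mu)-D_{\mathrm{KL}}(q\Vert p)-D_{\mathrm{KL}}(p\Vert\widehat\mu)=\langle q-p,\nabla F(p)-\nabla F(\widehat\mu)\rangle$, with $F(\mu)=\sum_i(\mu_i\log\mu_i-\mu_i)$ (this is exactly your sum $\sum_i(q_i-p_i)\log(p_i/\widehat\mu_i)$), with the first-order optimality condition of the projection. Your compactness argument for existence and your one-sided-derivative handling of coordinates with $p_i=0$ (giving $\operatorname{supp}(q)\subseteq\operatorname{supp}(p)$) are extra care that the paper omits, since it implicitly assumes $\nabla F(p)$ is defined.
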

Note that the inequality is (somewhat surprisingly) in the opposite direction to the standard triangle inequality.
\begin{proof}
Let
$p:=\Pi_\Gamma(\widehat\mu)$.
The first-order optimality condition
for the convex minimization defining
$p$ implies that, for every $q\in\Gamma$,
\[
\left\langle
q-p,
\nabla F(p)-\nabla F(\widehat\mu)
\right\rangle
\ge0
\]
where
\[
F(\mu):=
\sum_{i=1}^n(\mu_i\log\mu_i-\mu_i)
\]

The three-point identity for the
Bregman divergence generated by $F$
gives
\[
\begin{aligned}
&D_{\mathrm{KL}}(q\Vert\widehat\mu)
-D_{\mathrm{KL}}(q\Vert p)\\
&\qquad-
D_{\mathrm{KL}}(p\Vert\widehat\mu)\\
&=
\left\langle
q-p,
\nabla F(p)-\nabla F(\widehat\mu)
\right\rangle
\ge0
\end{aligned}
\]
Rearranging proves the first inequality.
The second follows from the
nonnegativity of KL divergence.
\end{proof}

The explicit computation of
$\Pi_{\Gamma_\kappa}$ is established
in the separate KL-projection appendix.
In particular, for strictly positive
$\widehat\mu$, it takes the form
\[
\Pi_{\Gamma_\kappa}(\widehat\mu)_i
=
\min\left\{
\frac{d}{\kappa n},
c\widehat\mu_i
\right\}
\]
where $c>0$ is chosen to ensure that
the projected measure has total mass $d$.

\paragraph{Weighted covariance and leverage scores.}
For a measure $\mu\in\mathbb R_{\ge0}^n$
such that $\Sigma(\mu)\succ0$, recall
\[
\Sigma(\mu):=
\sum_{j=1}^n\mu_jx_jx_j^\top,
\qquad
M(\mu):=\Sigma(\mu)^{-1}
\]
The corresponding quadratic and
logarithmic scores are
\[
\psi_i(\mu):=
x_i^\top M(\mu)x_i,
\qquad
\phi_i(\mu):=\log\psi_i(\mu)
\]

The logarithmic scores are considered
for nonzero constraint vectors, so
that $\psi_i(\mu)>0$.

The following convexity property is
central to our multiplicative-weights
analysis.

\begin{lemma}
[Convexity~\citep{pmlr-v99-cohen19a}]
\label{lem:convexity}
For every nonzero $x_i$, the function
\[
\phi_i(\mu)
=
\log\left(
x_i^\top
\left(
\sum_{j=1}^n\mu_jx_jx_j^\top
\right)^{-1}
x_i
\right)
\]
is convex on the domain where
$\Sigma(\mu)\succ0$.
\end{lemma}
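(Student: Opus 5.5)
The statement to prove is Lemma~\ref{lem:convexity}: $\phi_i(\mu)=\log\bigl(x_i^\top\Sigma(\mu)^{-1}x_i\bigr)$ is convex on the open convex cone $\{\mu:\Sigma(\mu)\succ0\}$. Since $\Sigma(\mu)$ is affine in $\mu$, it suffices to show that $g(A):=\log(x^\top A^{-1}x)$ is convex on the cone of positive definite matrices, for a fixed nonzero $x$. Composing a convex function with an affine map preserves convexity.

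To show that $g$ is convex, I would restrict it to lines. Fix $A\succ0$ and a symmetric direction $H$, and set $h(s)=g(A+sH)$ for $s$ in the open interval where $A+sH\succ0$. It is enough to prove $h''(0)\ge0$. Writing $B=A^{-1}$, the resolvent expansion gives
\[
(A+sH)^{-1}=B-sBHB+s^2BHBHB+O(s^3).
\]
Put $u=B^{1/2}x$ and $K=B^{1/2}HB^{1/2}$. Then $f(s):=x^\top(A+sH)^{-1}x$ has
\[
f(0)=\|u\|^2,\qquad f'(0)=-u^\top Ku,\qquad f''(0)=2\,u^\top K^2u=2\|Ku\|^2.
\]
Because $h=\log f$, we have $h''(0)=\bigl(f''(0)f(0)-f'(0)^2\bigr)/f(0)^2$. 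Its numerator is
\[
2\|Ku\|^2\|u\|^2-(u^\top Ku)^2.
\]
By Cauchy--Schwarz, $(u^\top Ku)^2\le\|u\|^2\|Ku\|^2$, so the numerator is at least $\|u\|^2\|Ku\|^2\ge0$. Since $x\ne0$ gives $f(0)>0$, this shows $h''(0)\ge0$. The point $A$ and direction $H$ were arbitrary, so $h$ is convex along every line, and therefore $g$ is convex.

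The only delicate step is getting the second-order expansion of the inverse right, in particular the factor $2$ in $f''(0)$. That factor is what gives the slack over plain Cauchy--Schwarz: the bound holds with room to spare, since the numerator is at least $\|u\|^2\|Ku\|^2$ rather than merely nonnegative. A cross-check is the special case $H=A$, which is scaling. There $K=I$, and $h(s)=-\log(1+s)+\text{const}$, which has $h''(0)=1>0$, consistent with the numerator formula $2\|u\|^4-\|u\|^4=\|u\|^4$.

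One domain issue needs a remark. The restriction of $g$ to $\mu$ requires $\Sigma(\mu)\succ0$ along the whole segment. This holds because $\{\mu\ge0:\Sigma(\mu)\succ0\}$ is convex: a convex combination of positive definite matrices is positive definite. Hence Jensen's inequality, as used in Lemma~\ref{lem:sum-bound}, applies to the iterates $\mu^t$ and their average $\overline\mu$.
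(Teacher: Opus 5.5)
Your proof is correct and complete. The resolvent expansion, the values $f(0)=\|u\|^2$, $f'(0)=-u^\top Ku$, $f''(0)=2\|Ku\|^2$, and the Cauchy--Schwarz bound on the numerator of $h''(0)$ all check out. Composing $A\mapsto\log(x^\top A^{-1}x)$ with the linear map $\mu\mapsto\Sigma(\mu)$ is also legitimate on the convex set $\{\mu:\Sigma(\mu)\succ0\}$.

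The paper gives no argument of its own here; it simply imports the lemma from Cohen et al. What your route buys is self-containment. Because you work with the matrix function on the whole positive definite cone, the conclusion holds on exactly the stated domain $\{\mu:\Sigma(\mu)\succ0\}$, and for any affine parametrization of the covariance.

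If you want a shorter alternative, observe that $1/(x^\top A^{-1}x)=\min_{y^\top x=1}y^\top Ay$. This is a pointwise minimum of linear functions of $A$, hence positive and concave on the positive definite cone. Its negative logarithm, which is $\log(x^\top A^{-1}x)$, is therefore convex, with no derivative computation needed.
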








\paragraph{Basic properties of leverage scores.}
Define the weighted leverage score
\[
\ell_i(\mu):=
\mu_i\psi_i(\mu)
=
\mu_i x_i^\top\Sigma(\mu)^{-1}x_i
\]

\begin{lemma}[Basic properties of weighted leverage scores]
\label{lem:leverage-score-properties}
For every $\mu\in\mathbb R_{\ge0}^n$
such that $\Sigma(\mu)\succ0$,
\[
0\le\ell_i(\mu)\le1
\qquad\forall i\in[n]
\]
and
\[
\sum_{i=1}^n\ell_i(\mu)=d
\]
\end{lemma}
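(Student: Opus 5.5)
We verify the two claims directly from the definitions, using that $\Sigma(\mu)$ is symmetric positive definite. Write $\Sigma:=\Sigma(\mu)$ and $M:=\Sigma^{-1}$. Introduce the hat matrix
\[
H:=\Sigma^{-1/2}\Bigl(\sum_{j=1}^n \mu_j x_jx_j^\top\Bigr)\Sigma^{-1/2},
\]
which is simply $I_d$. It is convenient to also write $H$ as a sum of rank-one pieces: with $u_j:=\sqrt{\mu_j}\,\Sigma^{-1/2}x_j$, we have $\sum_j u_ju_j^\top = I_d$. In these terms $\ell_i(\mu)=\mu_i x_i^\top \Sigma^{-1}x_i=\|u_i\|_2^2$.

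\textbf{The sum.} Taking the trace and using cyclicity gives
\[
\sum_i \ell_i(\mu)=\sum_i \operatorname{tr}(u_iu_i^\top)=\operatorname{tr}(I_d)=d .
\]
Equivalently, $\sum_i\mu_i x_i^\top M x_i=\operatorname{tr}(M\Sigma)=d$, which is the identity already used in the proof of Lemma~\ref{lem:sum-bound}.

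\textbf{The bounds.} Nonnegativity holds because $\mu_i\ge0$ and $M\succ0$. For the upper bound, evaluate the quadratic form of $I_d=\sum_j u_ju_j^\top$ at $u_i$:
\[
\|u_i\|_2^2=u_i^\top I_d u_i=\sum_j (u_j^\top u_i)^2\ \ge\ (u_i^\top u_i)^2=\|u_i\|_2^4 .
\]
If $u_i\neq0$, dividing by $\|u_i\|_2^2$ gives $\|u_i\|_2^2\le1$. If $u_i=0$, the bound is trivial. Hence $\ell_i(\mu)\le1$.

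An alternative route is to note that $\mu_i x_ix_i^\top\preceq\Sigma$, so $\sqrt{\mu_i}\,\Sigma^{-1/2}x_i$ has norm at most one.

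Neither step is a real obstacle. The only care needed is the degenerate case $\mu_i=0$ or $x_i=0$, which the argument above handles because it never divides by $\mu_i$.
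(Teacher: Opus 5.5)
Your proof is correct and essentially the same as the paper's. The sum is the same trace identity $\operatorname{tr}(\Sigma(\mu)^{-1}\Sigma(\mu))=d$. Your bound $\|u_i\|_2^2\ge\|u_i\|_2^4$ is the paper's Loewner inequality $\mu_i x_ix_i^\top\preceq\Sigma(\mu)$, conjugated by $\Sigma(\mu)^{-1/2}$ and tested at the vector $u_i$; the paper instead reads it off as the single nonzero eigenvalue of $u_iu_i^\top\preceq I_d$, which is the alternative route you mention at the end.
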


\begin{proof}
Nonnegativity follows immediately
from $\mu_i\ge0$ and
$\Sigma(\mu)^{-1}\succ0$.

For the upper bound, observe that
\[
\Sigma(\mu)-\mu_i x_ix_i^\top
=
\sum_{j\ne i}\mu_jx_jx_j^\top
\succeq0
\]
Therefore, $\mu_i x_ix_i^\top
\preceq\Sigma(\mu)$.
Multiplying on both sides by
$\Sigma(\mu)^{-1/2}$ gives
\[
\mu_i
\Sigma(\mu)^{-1/2}
x_ix_i^\top
\Sigma(\mu)^{-1/2}
\preceq I_d
\]
The matrix on the left has at most
one nonzero eigenvalue, equal to
\[
\mu_i x_i^\top\Sigma(\mu)^{-1}x_i
=
\ell_i(\mu)
\]
Hence $\ell_i(\mu)\le1$.

Finally,
\[
\begin{aligned}
\sum_{i=1}^n\ell_i(\mu)
&=
\sum_{i=1}^n
\mu_i x_i^\top\Sigma(\mu)^{-1}x_i
=
\operatorname{tr}\left(
\Sigma(\mu)^{-1}
\sum_{i=1}^n\mu_i x_ix_i^\top
\right)
=
\operatorname{tr}(I_d)=d
\end{aligned}
\]
\end{proof}

In particular, the identity $\sum_{i=1}^n\mu_i\psi_i(\mu)=d$
will be used to control the potential change of the two-level
multiplicative-weights update. Although the unprojected measure
$w^t$ need not have total mass $d$, its projection
$\mu^t=\Pi_{\Gamma_\kappa}(w^t)$ does, and therefore $\sum_{i=1}^n
\mu_i^t\psi_i(\mu^t)=d$
at every iteration.


\section{KL projection onto $\Gamma_\kappa$}
\label{apx:kl-projection}
\begin{lemma}
\label{lem:kl-projection-gamma}
Let $\widehat{\mu}\in\mathbb R_{>0}^n$ be an arbitrary strictly
positive measure, and let
\[
\Gamma_\kappa
=
\left\{
\mu\in\mathbb R_{\ge0}^n:
\sum_{i=1}^n\mu_i=d,
\quad
\mu_i\le \frac{d}{\kappa n}
\ \forall i
\right\}
\]
Define $u:=\frac{d}{\kappa n}$.

Then the Bregman projection of $\widehat{\mu}$ onto $\Gamma_\kappa$
with respect to the generalized KL divergence
\[
\mathrm{KL}(\mu\|\widehat{\mu})
=
\sum_{i=1}^n
\left[
\mu_i\log\left(\frac{\mu_i}{\widehat{\mu}_i}\right)
+
\widehat{\mu}_i-\mu_i
\right]
\]
is uniquely given by
\[
\mu_i^\star
=
\min\left\{
u,\,
c\widehat{\mu}_i
\right\}
\]
where $c>0$ is chosen so that $\sum_{i=1}^n
\mu_i^\star
=
d$.
\end{lemma}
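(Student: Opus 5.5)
The plan is to solve the convex program $\min_{\mu\in\Gamma_\kappa}\mathrm{KL}(\mu\Vert\widehat\mu)$ via its KKT conditions and then show that they force the stated form. The objective is strictly convex in $\mu$ on $\mathbb R_{\ge0}^n$, since each summand $\mu_i\log(\mu_i/\widehat\mu_i)$ is strictly convex. The set $\Gamma_\kappa$ is a nonempty compact convex polytope; it contains the uniform measure $d/n$ because $\kappa<1$. Hence a minimizer exists and is unique, and it remains to identify it.

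First I would rule out boundary zeros. The derivative of $\mu_i\log(\mu_i/\widehat\mu_i)$ tends to $-\infty$ as $\mu_i\to0^+$. So if some coordinate of the minimizer were $0$, shifting a small amount of mass onto it from a coordinate with $\mu_j>0$ would strictly decrease the objective. That shift stays feasible because it keeps the sum fixed and $u>0$. Thus $\mu^\star_i>0$ for all $i$, and only the equality constraint and the upper bounds $\mu_i\le u$ can be active. Slater's condition holds (the uniform point is strictly inside the upper bounds when $\kappa<1$), so the KKT conditions are necessary and sufficient. They read $\log(\mu_i/\widehat\mu_i)+\lambda+\nu_i=0$ with multipliers $\nu_i\ge0$ and $\nu_i(\mu_i-u)=0$. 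Setting $c:=e^{-\lambda}$, this gives $\mu_i=c\,\widehat\mu_i e^{-\nu_i}$. When $\nu_i=0$ we get $\mu_i=c\widehat\mu_i\le u$. When $\nu_i>0$ we get $\mu_i=u$ and $u<c\widehat\mu_i$. In both cases $\mu_i=\min\{u,c\widehat\mu_i\}$.

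Conversely, consider $g(c):=\sum_i\min\{u,c\widehat\mu_i\}$. It is continuous and nondecreasing, with $g(0)=0$ and $g(c)\to nu=d/\kappa>d$ as $c\to\infty$. It is also strictly increasing wherever $g<nu$, so some $c>0$ has $g(c)=d$. For that $c$, the candidate $\mu_i=\min\{u,c\widehat\mu_i\}$ together with the multipliers $\lambda=-\log c$ and $\nu_i=\max\{0,\log(c\widehat\mu_i/u)\}$ satisfies KKT, so it is the minimizer. The value of $c$ can be non-unique only on a flat piece of $g$, which occurs only at the level $g=nu\neq d$; hence $c$, and with it the projection, are determined.

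I expect the main obstacle to be the boundary issue: making rigorous that the KKT stationarity equation (which involves $\log\mu_i$) is legitimate. I would handle it with the mass-shifting argument above rather than with subdifferentials at $0$. The remaining steps are routine.
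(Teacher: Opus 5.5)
Your proposal is correct and follows essentially the same route as the paper: existence and uniqueness from strict convexity on the compact set $\Gamma_\kappa$, then the KKT stationarity condition $\log(\mu_i/\widehat\mu_i)+\lambda+\nu_i=0$ with the capped/uncapped case split to get $\mu_i=\min\{u,c\widehat\mu_i\}$, and finally the continuous nondecreasing function $g(c)=\sum_i\min\{u,c\widehat\mu_i\}$ to show that $c$ exists and is unique. Your mass-shifting argument for strict positivity of the minimizer, your constraint-qualification remark, and your explicit check that the candidate satisfies KKT supply details the paper only asserts, but they do not change the argument.
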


\begin{proof}
We consider the convex optimization problem
\[
\min_{\mu\in\mathbb R_{\ge0}^n}
\mathrm{KL}(\mu\|\widehat{\mu})
\]
subject to
\[
\sum_{i=1}^n\mu_i=d
\qquad \text{and} \qquad
\mu_i\le u
\qquad\forall i\in[n]
\]

Since $\widehat{\mu}_i>0$ for every $i$, the objective is finite and
strictly convex on $\mathbb R_{>0}^n$. Moreover, $\Gamma_\kappa$ is a
nonempty compact convex set. Hence the problem admits a unique minimizer.

Ignoring terms that do not depend on $\mu$, the objective can be written as
\[
f(\mu)
=
\sum_{i=1}^n
\left[
\mu_i\log\left(\frac{\mu_i}{\widehat{\mu}_i}\right)
-\mu_i
\right]
\]

For each coordinate,
\[
\frac{\partial f}{\partial\mu_i}
=
\log\left(\frac{\mu_i}{\widehat{\mu}_i}\right)
\]

Introduce a Lagrange multiplier $\lambda\in\mathbb R$ for the equality
constraint $\sum_i\mu_i=d$, 
and multipliers $\alpha_i\ge0$ for the upper-bound constraints $\mu_i-u\le0$.

Since the optimum has positive coordinates when
$\widehat{\mu}_i>0$, we may omit the nonnegativity multipliers.
The Lagrangian is therefore
\[
L(\mu,\lambda,\alpha)
=
f(\mu)
+
\lambda\left(\sum_{i=1}^n\mu_i-d\right)
+
\sum_{i=1}^n\alpha_i(\mu_i-u)
\]

The KKT stationarity condition gives
\[
\log\left(\frac{\mu_i^\star}{\widehat{\mu}_i}\right)
+
\lambda
+
\alpha_i
=
0
\]
Thus
\[
\mu_i^\star
=
\widehat{\mu}_i
e^{-\lambda-\alpha_i}
\]

We now distinguish two cases.

\begin{itemize}
    \item Case 1: the coordinate is not capped.
    Suppose $\mu_i^\star<u.$
By complementary slackness,
\[
\alpha_i(\mu_i^\star-u)=0
\]
and hence $\alpha_i=0$.
Therefore, $\mu_i^\star
=
e^{-\lambda}\widehat{\mu}_i$.

Now define $c:=e^{-\lambda}>0$. Then every uncapped coordinate satisfies
\[
\mu_i^\star=c\widehat{\mu}_i
\]

    \item Case 2: the coordinate is capped. Suppose $\mu_i^\star=u$.
The stationarity condition gives
\[
\log\left(\frac{u}{\widehat{\mu}_i}\right)
+
\lambda
+
\alpha_i
=
0
\]
Since $\alpha_i\ge0$,
\[
\log\left(\frac{u}{\widehat{\mu}_i}\right)
+\lambda
\le0
\]
Equivalently,
\[
u
\le
e^{-\lambda}\widehat{\mu}_i
=
c\widehat{\mu}_i
\]
Thus a capped coordinate is precisely one for which $c\widehat{\mu}_i\ge u$.

\end{itemize}

Combining the two cases, we obtain
\[
\mu_i^\star
=
\begin{cases}
c\widehat{\mu}_i,
&
c\widehat{\mu}_i<u
\\[3pt]
u,
&
c\widehat{\mu}_i\ge u
\end{cases}
\]
Hence $\mu_i^\star
=
\min\{u,c\widehat{\mu}_i\}$.

The scalar $c$ is determined by the equality constraint
\[
\sum_{i=1}^n\mu_i^\star=d
\]
Therefore it must satisfy
\[
\sum_{i=1}^n
\min\{u,c\widehat{\mu}_i\}
=
d
\]

It remains to show that such a $c$ exists and is unique.

Define
\[
g(c)
:=
\sum_{i=1}^n
\min\{u,c\widehat{\mu}_i\}
\]
Since every $\widehat{\mu}_i>0$, the function $g$ is continuous and
nondecreasing. Moreover, $g(0)=0$, while
\[
\lim_{c\to\infty}g(c)
=
nu
=
\frac{d}{\kappa}
\ge d
\]
Hence, by continuity, there exists $c>0$ such that $g(c)=d$.

If $\kappa<1$, then $nu=\frac{d}{\kappa}>d$, and $g$ is strictly increasing at every point for which $g(c)<nu$.
Therefore the solution of $g(c)=d$ is unique.


Thus the unique KL projection is
\[
\Pi_{\Gamma_\kappa}(\widehat{\mu})_i
=
\min\left\{
\frac{d}{\kappa n},
c\widehat{\mu}_i
\right\}
\]
where $c$ is chosen so that the projected measure has total mass $d$.

\end{proof}

\section{Proofs for the Non-Private Algorithm}
\label{app:non-private-proofs}

We provide the auxiliary results and the complete
proof of Theorem~\ref{thm:main-non-private}.

\subsection{KL-Divergence and Iteration Bounds}

\begin{lemma}
\label{lem:max-kl}
Let $\mu_i^0=d/n$ for all $i\in[n]$.
Then
\[
\max_{q\in\Gamma_\kappa}
D_{\mathrm{KL}}(q\Vert\mu^0)
\le d\log(1/\kappa)
\]
\end{lemma}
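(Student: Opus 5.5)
Since every $q\in\Gamma_\kappa$ and the uniform initialization $\mu^0$ both have total mass $d$, the linear terms of the generalized KL divergence cancel. The plan is to write
\[
D_{\mathrm{KL}}(q\Vert\mu^0)=\sum_{i=1}^n q_i\log\frac{q_i}{d/n}=\sum_{i=1}^n q_i\log\frac{nq_i}{d},
\]
and then bound each logarithm pointwise, rather than identifying the entropy-minimizing $q$.

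The key step uses the density constraint $q_i\le d/(\kappa n)$. For every $i$ with $q_i>0$ it gives $nq_i/d\le 1/\kappa$, hence $\log(nq_i/d)\le\log(1/\kappa)$. Coordinates with $q_i=0$ contribute zero by the convention $0\log 0=0$. Multiplying each bound by $q_i\ge0$ and summing gives
\[
D_{\mathrm{KL}}(q\Vert\mu^0)\le\log(1/\kappa)\sum_{i=1}^n q_i=d\log(1/\kappa).
\]
Taking the maximum over $q\in\Gamma_\kappa$ gives the claim; the maximum exists because $\Gamma_\kappa$ is compact and the divergence is continuous in $q$.

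There is no real obstacle here. The only points needing care are the cancellation of the linear terms, which requires equal total masses, and the treatment of zero coordinates. One could also note that the bound is tight when $\kappa n$ is an integer: the uniform measure on $\kappa n$ coordinates attains it. That remark is not needed for the upper bound.
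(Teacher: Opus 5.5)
Your proof is correct and is essentially the paper's argument. The paper also uses the equal total masses to cancel the linear terms of the divergence, bounds each $\log\bigl(q_i/(d/n)\bigr)$ by $\log(1/\kappa)$ via the cap $q_i\le d/(\kappa n)$, and sums against $q$ to obtain $d\log(1/\kappa)$.
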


\begin{proof}
For every $q\in\Gamma_\kappa$,
the normalization $\sum_i q_i=d$ gives
\[
D_{\mathrm{KL}}(q\Vert\mu^0)
=
\sum_{i=1}^n q_i
\log\left(\frac{q_i}{d/n}\right)
\]
Since $q_i\le d/(\kappa n)$,
\[
D_{\mathrm{KL}}(q\Vert\mu^0)
\le
\sum_{i=1}^n q_i\log(1/\kappa)
=
d\log(1/\kappa)
\]
\end{proof}

\begin{corollary}
\label{cor:iter-bound}
Let $\overline\mu=
\frac1T\sum_{t=0}^{T-1}\mu^t$.
If $T\ge\frac{\log(1/\kappa)}{\gamma}$,
then, for every $q\in\Gamma_\kappa$,
\[
\sum_{i=1}^n \frac{q_i}d\phi_i(\overline\mu)
\le \gamma
\]
\end{corollary}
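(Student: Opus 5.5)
The plan is to chain Lemma~\ref{lem:sum-bound} with Lemma~\ref{lem:max-kl}. First, I will fix an arbitrary $q\in\Gamma_\kappa$ and apply Lemma~\ref{lem:sum-bound}, which holds for the iterates of Algorithm~\ref{alg:non-private} with the uniform initialization $\mu^0_i=d/n$. This gives
\[
\sum_{i=1}^n q_i\phi_i(\overline\mu)\le \frac{D_{\mathrm{KL}}(q\Vert\mu^0)}{T}.
\]
The only point to check here is that $\phi_i(\overline\mu)$ is well defined, i.e.\ that $\Sigma(\overline\mu)\succ0$. I will argue this from the spanning assumption on $P$. Since $\overline\mu$ is an average of the $\mu^t\in\Gamma_\kappa$, and $\Gamma_\kappa$ is convex, $\overline\mu\in\Gamma_\kappa$. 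By the explicit form of the projection in Lemma~\ref{lem:kl-projection-gamma}, every coordinate of each $\mu^t$ is strictly positive when $w^t$ is. Positivity of $w^t$ is preserved by the multiplicative update, because each $x_i^\top M^t x_i>0$ for nonzero $x_i$. Hence $\Sigma(\overline\mu)$ is a positive combination of outer products of a spanning set, so it is positive definite.

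Next, I will invoke Lemma~\ref{lem:max-kl} to bound the numerator: $D_{\mathrm{KL}}(q\Vert\mu^0)\le d\log(1/\kappa)$. Combining the two displays and using the hypothesis $T\ge\log(1/\kappa)/\gamma$ yields
\[
\sum_{i=1}^n q_i\phi_i(\overline\mu)\le \frac{d\log(1/\kappa)}{T}\le d\gamma.
\]
Dividing by $d$ gives the stated bound. Since $q$ was arbitrary in $\Gamma_\kappa$, the bound holds uniformly.

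No step is a real obstacle, since the corollary is a direct combination of the two preceding lemmas. The only subtlety is the well-definedness and positivity just discussed, which is also needed for the convexity and Jensen step hidden inside Lemma~\ref{lem:sum-bound}. I would state it briefly rather than belabor it.
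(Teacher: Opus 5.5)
Your proposal is correct and matches the paper's proof. The paper also chains Lemma~\ref{lem:sum-bound} with Lemma~\ref{lem:max-kl}, uses $T\ge\log(1/\kappa)/\gamma$ to get the bound $d\gamma$, and divides by $d$. Your extra check that $\Sigma(\overline\mu)\succ0$ is a valid detail the paper leaves implicit: positivity of the iterates is preserved by the projection $\min\{u,c\widehat\mu_i\}$ and by the update for nonzero $x_i$, and this combines with the spanning assumption.
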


\begin{proof}
By Lemmas~\ref{lem:sum-bound}
and~\ref{lem:max-kl},
\[
\sum_{i=1}^n \frac{q_i}d\phi_i(\overline\mu)
\le
\frac{D_{\mathrm{KL}}(q\Vert\mu^0)}{dT}
\le
\frac{d\log(1/\kappa)}{dT}
\le \gamma
\]
\end{proof}

\subsection{Trimmed Containment and Volume}

\begin{lemma}[Trimmed containment guarantee]
\label{lem:trimmed-containment}
Let $\mu$ be a measure of total mass $d$
such that
\[
\sum_{i=1}^n
q_i\log\left(x_i^\top M(\mu)x_i\right)
\le d\gamma
\qquad
\forall q\in\Gamma_\kappa
\]
Then the set
\[
S:=
\left\{
i\in[n]:
x_i^\top M(\mu)x_i\le e^\gamma
\right\}
\]
satisfies $|S|\ge(1-\kappa)n$.
\end{lemma}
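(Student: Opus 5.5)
We argue by contradiction, choosing as test measure $q$ the uniform measure on the constraints that violate the ellipsoid bound. Let $B:=[n]\setminus S$ be the set of indices with $x_i^\top M(\mu)x_i>e^\gamma$, i.e.\ $\phi_i(\mu)>\gamma$, and suppose toward a contradiction that $|S|<(1-\kappa)n$, so $|B|>\kappa n$. We will build a $\kappa$-dense measure supported on $B$ and feed it into the hypothesis.

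Concretely, set $q_i:=\frac{d}{|B|}\mathbf 1_B(i)$. Then $q$ is nonnegative, has total mass $d$, and each coordinate satisfies $q_i=\frac{d}{|B|}<\frac{d}{\kappa n}$, so $q\in\Gamma_\kappa$. Plugging this $q$ into the assumption gives
\[
\frac{d}{|B|}\sum_{i\in B}\log\left(x_i^\top M(\mu)x_i\right)\le d\gamma,
\]
so the average of $\phi_i(\mu)$ over $B$ is at most $\gamma$. But every $i\in B$ has $\phi_i(\mu)>\gamma$ strictly, so this average is strictly greater than $\gamma$. This contradiction shows $|B|\le\kappa n$, i.e.\ $|S|\ge(1-\kappa)n$.

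The only point needing care is whether $\kappa n$ is an integer. The argument above never uses integrality. It relies only on $|B|>\kappa n$, which already makes the uniform measure on $B$ $\kappa$-dense, so $q$ is admissible for any real $\kappa n$. Nothing in the proof uses properties of $M(\mu)$ beyond $x_i^\top M(\mu)x_i>0$, which is needed for the logarithms to be defined; this holds since $M(\mu)\succ0$ and the $x_i$ are nonzero. Hence no earlier lemma is required beyond the definition of $\Gamma_\kappa$.
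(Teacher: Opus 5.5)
Your proof is correct and is essentially the paper's argument: take the uniform measure $q_i=\frac{d}{|B|}\mathbf 1_B(i)$ on the violating set, use $|B|>\kappa n$ to get $q\in\Gamma_\kappa$, and note that the hypothesis then forces an average of quantities strictly exceeding $\gamma$ to be at most $\gamma$. One small remark: you do not need the $x_i$ to be nonzero, because $q$ is supported on $B$, and every $i\in B$ already has $x_i^\top M(\mu)x_i>e^\gamma>0$.
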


\begin{proof}
Suppose, towards a contradiction, that more
than $\kappa n$ indices satisfy
$x_i^\top M(\mu)x_i>e^\gamma$.
Let $B$ denote the set of these indices
and define
\[
q_i:=\frac{d}{|B|}\mathbf 1_{\{i\in B\}}
\]
Since $|B|>\kappa n$, we have
$q_i\le d/(\kappa n)$, and hence
$q\in\Gamma_\kappa$.

However, for every $i\in B$,
$\log(x_i^\top M(\mu)x_i)>\gamma$.
Therefore,
\[
\sum_{i=1}^n
q_i\log\left(x_i^\top M(\mu)x_i\right)
=
\frac d{|B|}
\sum_{i\in B}
\log\left(x_i^\top M(\mu)x_i\right)
>d\gamma
\]
contradicting the assumption.
Thus $|S|\ge(1-\kappa)n$.
\end{proof}

\begin{corollary}
[Approximate $\kappa$-trimmed John ellipsoid]
\label{cor:approx-trimmed-john}
Suppose that a measure $\mu$ of total mass $d$
satisfies
\[
\sum_{i=1}^n \frac{q_i}d
\log\left(x_i^\top M(\mu)x_i\right)
\le \gamma
\qquad
\forall q\in\Gamma_\kappa
\]
Let
\[
E(\mu):=
\{x:x^\top M(\mu)^{-1}x\le1\},
\qquad
E_\gamma:=e^{-\gamma/2}E(\mu)
\]
Then there exists $S\subseteq[n]$ with
$|S|\ge(1-\kappa)n$ such that
\[
E_\gamma\subseteq K_S,
\qquad
K_P\subseteq\sqrt d\,E(\mu)
\]
and
\[
\operatorname{vol}(E_\gamma)
\ge
e^{-d\gamma/2}\operatorname{vol}(E^*)
\]
\end{corollary}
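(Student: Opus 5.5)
The statement to prove is Corollary~\ref{cor:approx-trimmed-john}. The proof is just an assembly of three earlier results: Lemma~\ref{lem:trimmed-containment}, Lemma~\ref{lem:trimmed-rounding} and Lemma~\ref{lem:volume-interpretation}. No new estimate is needed.

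\textbf{Step 1: the good index set.} Multiply the hypothesis by $d$. This gives $\sum_i q_i\log(x_i^\top M(\mu)x_i)\le d\gamma$ for every $q\in\Gamma_\kappa$, which is exactly the hypothesis of Lemma~\ref{lem:trimmed-containment}. That lemma then says
\[
S:=\{i\in[n]: x_i^\top M(\mu)x_i\le e^\gamma\}
\]
has $|S|\ge(1-\kappa)n$. Equivalently, $\mu$ is a $(1+\gamma)$-approximate $\kappa$-trimmed John solution in the sense of the definition in Section~\ref{sec:preliminaries}.

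\textbf{Step 2: the two containments.} Apply Lemma~\ref{lem:trimmed-rounding} to this $\mu$. It gives $e^{-\gamma/2}E(\mu)\subseteq K_S$, which is $E_\gamma\subseteq K_S$. It also gives $K_P\subseteq\sqrt d\,E(\mu)$. The second containment uses only $\sum_i\mu_i=d$ together with $|x_i^\top x|\le1$ on $K_P$.

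\textbf{Step 3: the volume bound.} Apply Lemma~\ref{lem:volume-interpretation}. It needs $\sum_i\mu_i=d$ and $\Sigma(\mu)\succ0$, and it yields $\operatorname{vol}(E_\gamma)\ge e^{-d\gamma/2}\operatorname{vol}(E^*)$. The underlying reason is weak duality: $D(\mu)\ge D(\mu^*)$, hence $\det\Sigma(\mu)\le\det\Sigma(\mu^*)$, hence $\operatorname{vol}(E(\mu))\ge\operatorname{vol}(E^*)$; scaling by $e^{-\gamma/2}$ then costs a factor $e^{-d\gamma/2}$.

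\textbf{Main point to check.} The only thing not handed to us is that $\Sigma(\mu)\succ0$, which is implicit because the hypothesis involves $M(\mu)$. I would state this as an assumption, since $M(\mu)$ must be defined for the hypothesis to make sense. Alternatively, for $\mu\in\Gamma_\kappa$ it follows from $\operatorname{span}(P)=\mathbb R^d$ whenever all $\mu_i>0$, and that holds for the algorithm's iterates, which come from projecting strictly positive weights. With this noted, the three cited lemmas combine directly to give the corollary.
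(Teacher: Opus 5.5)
Your proposal is correct and matches the paper's proof: the paper also obtains $S$ from Lemma~\ref{lem:trimmed-containment}, and then writes out the Cauchy--Schwarz containment and weak-duality volume arguments directly, where you instead cite Lemmas~\ref{lem:trimmed-rounding} and~\ref{lem:volume-interpretation}, which contain exactly those arguments. Your remark that $\Sigma(\mu)\succ0$ must be assumed for $M(\mu)$ to be defined is a fair clarification that the paper leaves implicit.
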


\begin{proof}
By Lemma~\ref{lem:trimmed-containment},
there exists $S\subseteq[n]$ with
$|S|\ge(1-\kappa)n$ such that
\[
x_i^\top M(\mu)x_i\le e^\gamma
\qquad\forall i\in S.
\]
For every $x\in E_\gamma$,
\[
x^\top M(\mu)^{-1}x\le e^{-\gamma}
\]
Thus, by Cauchy--Schwarz, for every $i\in S$,
\[
\begin{aligned}
|x_i^\top x|
&=
|\langle M(\mu)^{1/2}x_i,
M(\mu)^{-1/2}x\rangle|
\le
\sqrt{x_i^\top M(\mu)x_i}
\sqrt{x^\top M(\mu)^{-1}x}
\le e^{\gamma/2}e^{-\gamma/2}=1
\end{aligned}
\]
Hence $E_\gamma\subseteq K_S$.

Next, for any $x\in K_P$,
\[
\begin{aligned}
x^\top M(\mu)^{-1}x
&=
\sum_{i=1}^n\mu_i|x_i^\top x|^2
\le\sum_{i=1}^n\mu_i=d
\end{aligned}
\]
Consequently, $K_P\subseteq\sqrt d\,E(\mu)$.

For the volume guarantee, recall that the
John-ellipsoid dual objective is
\[
D(\mu)
=
\sum_{i=1}^n\mu_i
-\log\det\Sigma(\mu)-d
\]
Since $\mu$ has total mass $d$,
\[
D(\mu)=-\log\det\Sigma(\mu)
\]
Let $\mu^*$ denote an optimal dual solution.
By weak duality and strong duality at the
optimum,
\[
-\log\det\Sigma(\mu)
\ge
-\log\det\Sigma(\mu^*)
\]
Since
$\operatorname{vol}(E(\mu))$
is proportional to
$\det(\Sigma(\mu))^{-1/2}$,
we obtain
\[
\operatorname{vol}(E(\mu))
\ge\operatorname{vol}(E^*)
\]
Finally,
\[
\operatorname{vol}(E_\gamma)
=
e^{-d\gamma/2}\operatorname{vol}(E(\mu))
\ge
e^{-d\gamma/2}\operatorname{vol}(E^*)
\]
\end{proof}

\subsection{Proof of the Main Non-Private Guarantee}

\begin{proof}[Proof of
Theorem~\ref{thm:main-non-private}]
Algorithm~\ref{alg:non-private} initializes
$\mu_i^0=d/n$ and performs $T=
\left\lceil
\frac{\log(1/\kappa)}{\gamma}
\right\rceil$
iterations.

By Corollary~\ref{cor:iter-bound},
the averaged measure $\overline\mu$
satisfies
\[
\sum_{i=1}^n
\frac{q_i}d\log\left(
x_i^\top M(\overline\mu)x_i
\right)
\le \gamma
\qquad
\forall q\in\Gamma_\kappa
\]
Moreover, each iterate belongs to
$\Gamma_\kappa$, which is convex.
Hence $\overline\mu\in\Gamma_\kappa$,
and in particular
$\sum_i\overline\mu_i=d$.

Applying Corollary~\ref{cor:approx-trimmed-john}
to $\overline\mu$ gives the claimed
trimmed containment, outer containment,
and volume guarantees for the output
$M=M(\overline\mu)$.
The claimed iteration complexity follows
from the definition of $T$.
\end{proof}

\section{Proofs for the Private Algorithm}
\label{app:private-proofs}

We provide the details underlying the privacy and
utility analysis of Algorithm~\ref{alg:private-alg}.
We first analyze the sensitivity of the covariance
oracle, and then show that sufficiently small
Gaussian perturbations preserve the
multiplicative-weights potential argument.

\subsection{Privacy of the Main Algorithm}

\begin{lemma}[Privacy]
    Algorithm~\ref{alg:private-alg} is $\rho$-zCDP.
\end{lemma}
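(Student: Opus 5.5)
The plan is to prove that each call to \textsc{Private-Oracle} is $\rho_0$-zCDP, where $\rho_0=\rho/(T+1)$. Adaptive composition over the $T+1$ calls (Fact~\ref{fact:zcdp-composition}) then gives $\rho$-zCDP overall, and everything else in Algorithm~\ref{alg:private-alg} is post-processing.

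\textbf{Step 1: post-processing.} The only accesses to the data outside the oracle are the multiplicative updates $w_i^{t+1}=w_i^t x_i^\top\widehat M^t x_i$, the projections, and the averaging. These do touch the data, so I will not treat them as pure post-processing. Instead I will argue that the released transcript consists only of $\widehat M^0,\dots,\widehat M^{T-1},\widehat M$ (or $\perp$). The internal weights $\mu^t$ are data-dependent, but they only choose which query the next oracle call answers. The composition theorem handles this adaptivity, provided each oracle call is $\rho_0$-zCDP for \emph{every} fixed choice of its measure argument that could arise. Concretely, conditioned on the previous outputs, the $t$-th mechanism on input $P$ is $P\mapsto \Sigma(P,\mu^t(P))+N$.

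This is the main obstacle: $\mu^t$ depends on $P$ itself, not only on past outputs. To handle it I will compare the two oracle distributions directly for neighbors $P\sim P'$ differing at index $j$, with the same past outputs. The weights $w^t(P)$ and $w^t(P')$ agree on every coordinate $i\ne j$, since $w_i^t$ depends only on $x_i$ and on the released $\widehat M^s$. The projection is $\min\{u,c\,w_i\}$ by Lemma~\ref{lem:kl-projection-gamma}, so the two projected measures may differ in every coordinate through the normalizer $c$. I will therefore bound $\|\Sigma(P,\mu^t(P))-\Sigma(P',\mu^t(P'))\|_F$ directly by splitting it as
\[
\mu_j x_jx_j^\top-\mu'_j x_j'x_j'^\top+\sum_{i\ne j}(\mu_i-\mu'_i)x_ix_i^\top .
\]
The first part is at most $2dR^2/(\kappa n)$. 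For the second part I will use monotonicity of the projection in $c$: changing the single coordinate $w_j$ moves all the other coordinates in the same direction, by a total mass equal to $|\mu_j-\mu'_j|\le d/(\kappa n)$. The second part is therefore also at most $dR^2/(\kappa n)$, so altogether $\Delta_F\le 3dR^2/(\kappa n)\le 4dR^2/(\kappa n)$. This matches the constant in $\sigma$ and explains the factor $4$ rather than $2$.

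\textbf{Step 2: Gaussian mechanism.} The symmetric noise $N$ is an isotropic Gaussian in the orthonormal basis $\{F_{ij}\}$ with scale $\sigma=4dR^2/(\kappa n\sqrt{2\rho_0})$. By Fact~\ref{fact:gaussian-mechanism} and the sensitivity bound of Step 1, $\sigma^2\ge\Delta_F^2/(2\rho_0)$, so each release of $\widehat\Sigma$ is $\rho_0$-zCDP. The eigenvalue test, the possible output $\perp$, and the inversion are post-processing of $\widehat\Sigma$.

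\textbf{Step 3: the final call.} The last call uses $\overline\mu$, an average of the $\mu^t$. The same coordinate-splitting argument applies to each $\mu^t$, and the bound carries through the average by convexity and the triangle inequality, giving sensitivity again at most $4dR^2/(\kappa n)$. Summing over the $T+1$ calls gives $(T+1)\rho_0=\rho$. An early return of $\perp$ only shortens the transcript and is covered by the same adaptive-composition argument.
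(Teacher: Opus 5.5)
Your proof is correct. It has the same skeleton as the paper's: split $\Sigma(P,\mu^t(P))-\Sigma(P',\mu^t(P'))$ into a point-replacement term and a weight-redistribution term, calibrate the Gaussian mechanism to Frobenius sensitivity $4dR^2/(\kappa n)$, and compose over the $T+1$ oracle calls.

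Where you differ is in the weight term. The paper bounds it by $R^2\|\mu(P)-\mu(P')\|_1$ and imports an $\ell_1$ (TV) stability bound for projections onto dense measures from Lemma~25 of Bun et al. It does not make explicit why the unprojected weights of neighboring datasets differ in only one coordinate. You supply both pieces yourself:
\begin{itemize}
\item Conditioned on the released $\widehat M^0,\dots,\widehat M^{t-1}$, the weights $w_i^t=(d/n)\prod_{s<t}x_i^\top\widehat M^s x_i$ coincide for $i\neq j$.
\item The closed form $\min\{u,cw_i\}$ then forces all coordinates $i\neq j$ to move in the same direction, according to whether the normalizer $c$ increases or decreases. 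By mass conservation their total change is $|\mu_j-\mu_j'|\le u=d/(\kappa n)$.
\end{itemize}
This makes the argument self-contained and avoids checking that the cited lemma's hypotheses match the present unnormalized-weight setting.

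Your handling of the final call is also tighter than the paper's. The paper invokes only the membership $\overline\mu\in\Gamma_\kappa$, which by itself does not control $\overline\mu(P)-\overline\mu(P')$. Linearity of $\Sigma$ in $\mu$ plus the triangle inequality over the per-round differences, as you use, is exactly what is needed.

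Finally, your bound is better than you state. Adding the two parts gives $R^2(\mu_j+\mu_j'+|\mu_j-\mu_j'|)=2R^2\max\{\mu_j,\mu_j'\}\le 2dR^2/(\kappa n)$. So the factor $4$ in $\sigma$ is a safe overestimate, not something your argument ``explains.''
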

\begin{proof}
We first show that Algorithm~\ref{alg:private-oracle} is $\rho$-zCDP. Let $P$ and $P'$ be two neighboring datasets of size $n$ that differ in a single point, with $x_i\in P$ replaced by $x_i'\in P'$. Assuming that $\|x_i\|_2,\|x_i'\|_2\le R$, the corresponding weighted covariance matrices can differ in spectral norm by at most
\[
\begin{aligned}
\| \sum\limits_{j} \mu_j(P)x_jx_j^\top - \sum\limits_{j} \mu_j(P')x_jx_j^\top \|_F
&=
\| \sum\limits_{j} (\mu_j(P) - \mu_j(P'))x_jx_j^\top + \mu_i(P')x_ix_i^\top \|_F\\
&\le
R^2\| \mu(P) - \mu(P') \|_1 + \frac{2dR^2}{\kappa n} \leq \frac{4dR^2}{\kappa n}
\end{aligned} 
\]
where the last inequality follows from the bound on the TV-distance between the neighboring $\kappa$-dense measures $\mu$ and $\mu'$, as proven in~\citet{bun2020efficientnoisetolerantprivatelearning} (Lemma 25).
where the last inequality follows from $\mu\in\Gamma_\kappa$, and hence
$\mu_i\le d/(\kappa n)$ for every $i\in[n]$.

Algorithm~\ref{alg:private-alg} invokes Algorithm~\ref{alg:private-oracle} a total of $T+1$ times, with each invocation satisfying $\rho/(T+1)$-zCDP. The final invocation uses the averaged measure $\overline{\mu}
=
\frac1T\sum_{t=1}^T\mu^{t}$.
Since each $\mu^{t}\in\Gamma_\kappa$ and $\Gamma_\kappa$ is convex, we also have
$\overline{\mu}\in\Gamma_\kappa$, so the same sensitivity bound applies to the final release. The claimed $\rho$-zCDP guarantee therefore follows from sequential composition of zCDP.
\end{proof}

\subsection{Goodness and Matrix Perturbation}

\begin{proof}[Proof of Lemma~\ref{lem:good-assumption}]
Fix $\mu\in\Gamma_\kappa$ and a unit vector
$v\in\mathbb S^{d-1}$.
Let
\[
p_i:=\frac{\mu_i}{d},
\qquad
B_v:=
\left\{
i\in[n]:
\langle x_i,v\rangle^2
<
\frac{2\tau}{d}
\right\}
\]
By $(\kappa,\tau)$-goodness, $|B_v|<\frac{\kappa n}{2}$.
Since
$p_i\le1/(\kappa n)$, $\sum_{i\in B_v}p_i
<
\frac12$.
Therefore,
\[
\begin{aligned}
v^\top\Sigma(\mu)v
&=
d\sum_{i=1}^n
p_i\langle x_i,v\rangle^2\\
&\ge
2\tau
\sum_{i\notin B_v}p_i >
\tau
\end{aligned}
\]
Taking the infimum over all unit vectors $v$
gives $\lambda_{\min}(\Sigma(\mu))\ge\tau$.
\end{proof}

\begin{lemma}[Inverse perturbation]
\label{lem:private-inverse-perturbation}
Let $\Sigma\succ0$ and
$\widehat\Sigma=\Sigma+N$, where $N$ is symmetric.
Suppose
\[
\|\Sigma^{-1/2}
N\Sigma^{-1/2}\|_2
\le\eta\le\frac12
\]
Writing $M:=\Sigma^{-1},~
\widehat M:=\widehat\Sigma^{-1}$,
we have
\[
\frac{1}{1+\eta}M
\preceq
\widehat M
\preceq
\frac{1}{1-\eta}M
\]
Consequently, for every nonzero $x$,
\[
\left|
\log(x^\top\widehat Mx)
-
\log(x^\top Mx)
\right|
\le2\eta
\qquad \text{and} \qquad
\left|
\operatorname{tr}(\widehat M\Sigma)-d
\right|
\le2d\eta
\]
\end{lemma}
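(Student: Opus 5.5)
The plan is to conjugate by $\Sigma^{1/2}$, which reduces everything to a statement about the single symmetric matrix $A:=\Sigma^{-1/2}N\Sigma^{-1/2}$. The hypothesis $\|A\|_2\le\eta$ says exactly that
\[
(1-\eta)I\preceq I+A\preceq(1+\eta)I .
\]
Since $\widehat\Sigma=\Sigma^{1/2}(I+A)\Sigma^{1/2}$ and congruence preserves the Loewner order, this gives
\[
(1-\eta)\Sigma\preceq\widehat\Sigma\preceq(1+\eta)\Sigma .
\]
In particular $\widehat\Sigma\succ0$, because $\eta\le1/2<1$, so $\widehat M$ is well defined.

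Next we invert. Matrix inversion is order-reversing on positive definite matrices; equivalently, $\widehat M=\Sigma^{-1/2}(I+A)^{-1}\Sigma^{-1/2}$, and the eigenvalues of $(I+A)^{-1}$ lie in $[1/(1+\eta),1/(1-\eta)]$. Either route yields
\[
\tfrac{1}{1+\eta}M\preceq\widehat M\preceq\tfrac{1}{1-\eta}M,
\]
which is the first claim.

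For the log-score bound, fix $x\neq0$ and evaluate both quadratic forms at $x$. Since $x^\top Mx>0$, we may divide through, which gives
\[
\tfrac{1}{1+\eta}\le\frac{x^\top\widehat Mx}{x^\top Mx}\le\tfrac{1}{1-\eta}.
\]
Taking logarithms, the difference of the logs lies in $[-\log(1+\eta),\,-\log(1-\eta)]$. Both endpoints are bounded in absolute value by $2\eta$ for $\eta\le1/2$: we have $\log(1+\eta)\le\eta$, and $-\log(1-\eta)=\sum_k\eta^k/k\le\eta/(1-\eta)\le2\eta$.

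For the trace bound, the cyclic property gives $\operatorname{tr}(\widehat M\Sigma)=\operatorname{tr}(\Sigma^{1/2}\widehat M\Sigma^{1/2})=\operatorname{tr}\big((I+A)^{-1}\big)$. This is a symmetric matrix whose $d$ eigenvalues all lie in $[1/(1+\eta),1/(1-\eta)]$. Hence each eigenvalue differs from $1$ by at most $\max\{\eta/(1+\eta),\,\eta/(1-\eta)\}\le2\eta$, and summing over the $d$ eigenvalues gives $|\operatorname{tr}(\widehat M\Sigma)-d|\le 2d\eta$.

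There is no real obstacle in this argument. The only points that need care are checking that $\widehat\Sigma$ is invertible, so that $\widehat M$ makes sense, and that the elementary inequality $-\log(1-\eta)\le2\eta$ uses the assumption $\eta\le1/2$.
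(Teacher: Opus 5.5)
Your proof is correct and follows essentially the same route as the paper: sandwich $\widehat\Sigma$ between $(1\pm\eta)\Sigma$ using the bound on $\Sigma^{-1/2}N\Sigma^{-1/2}$, invert, compare quadratic forms and take logarithms with $\log(1+\eta)\le\eta$ and $-\log(1-\eta)\le 2\eta$, and bound the trace through the eigenvalues of $\Sigma^{1/2}\widehat M\Sigma^{1/2}$. Your explicit check that $\widehat\Sigma\succ 0$, so that $\widehat M$ is well defined, is a small detail the paper leaves implicit.
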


\begin{proof}
The assumption gives
\[
-\eta I
\preceq
\Sigma^{-1/2}N\Sigma^{-1/2}
\preceq
\eta I
\]
Hence
\[
(1-\eta)\Sigma
\preceq
\widehat\Sigma
\preceq
(1+\eta)\Sigma
\]
Inverting yields
\[
\frac{1}{1+\eta}M
\preceq
\widehat M
\preceq
\frac{1}{1-\eta}M
\]
Therefore,
\[
\frac1{1+\eta}
\le
\frac{x^\top\widehat Mx}
{x^\top Mx}
\le
\frac1{1-\eta}
\]
Since $\eta\le1/2$,
\[
\log(1+\eta)\le\eta,
\qquad
-\log(1-\eta)\le2\eta
\]
which proves the logarithmic-score bound.

Moreover, the eigenvalues of
$\Sigma^{1/2}\widehat M\Sigma^{1/2}$
belong to $\left[
\frac1{1+\eta},
\frac1{1-\eta}
\right]$.
Thus
\[
\left|
\operatorname{tr}(\widehat M\Sigma)-d
\right|
\le
\frac{d\eta}{1-\eta}
\le2d\eta
\]
\end{proof}

\subsection{Stability of the KL-Potential Argument}

\begin{proof}[Proof of Lemma~\ref{lem:private-sum-bound}]
For every round $t$, define
\[
\widehat\phi_i^t
:=
\log\!\left(x_i^\top\widehat M^t x_i\right),
\qquad
\phi_i^t
:=
\phi_i(\mu^t)
=
\log\!\left(x_i^\top M(\mu^t)x_i\right)
\]
and
\[
\theta_i^t
:=
\log\left(\frac{w_i^t}{\mu_i^0}\right),
\qquad
F(\theta)
:=
\max_{\mu\in\Gamma_\kappa}
\left\{
\langle\mu,\theta\rangle
-
D_{\mathrm{KL}}(\mu\Vert\mu^0)
\right\}
\]

Since $w_i^t=\mu_i^0e^{\theta_i^t}$, for every
$\mu\in\Gamma_\kappa$ we have
\[
\langle\mu,\theta^t\rangle
-
D_{\mathrm{KL}}(\mu\Vert\mu^0)
=
-D_{\mathrm{KL}}(\mu\Vert w^t)
+\sum_i w_i^t-d
\]
The last term is independent of $\mu$, and since
$\mu^t=\Pi_{\Gamma_\kappa}(w^t)$, the measure $\mu^t$
attains the maximum in $F(\theta^t)$.

Moreover, by the Pythagorean property of the KL projection
(Theorem~\ref{thm:bregman-pythagorean}), for every
$\nu\in\Gamma_\kappa$,
\[
\begin{aligned}
F(\theta^t)
-
\left(
\langle\nu,\theta^t\rangle
-
D_{\mathrm{KL}}(\nu\Vert\mu^0)
\right)
&=
-D_{\mathrm{KL}}(\mu^t\Vert w^t)
+
D_{\mathrm{KL}}(\nu\Vert w^t)
\ge
D_{\mathrm{KL}}(\nu\Vert\mu^t)
\end{aligned}
\]
Equivalently,
\[
\langle\nu,\theta^t\rangle
-
D_{\mathrm{KL}}(\nu\Vert\mu^0)
\le
F(\theta^t)
-
D_{\mathrm{KL}}(\nu\Vert\mu^t)
\]

The private multiplicative update gives $w_i^{t+1}
=
w_i^t e^{\widehat\phi_i^t}$,
and therefore $\theta_i^{t+1}
=
\theta_i^t+\widehat\phi_i^t$.
Hence,
\begin{align*}
F(\theta^{t+1})
&=
\max_{\mu\in\Gamma_\kappa}
\left\{
\langle\mu,\theta^t\rangle
+
\sum_i\mu_i\widehat\phi_i^t
-
D_{\mathrm{KL}}(\mu\Vert\mu^0)
\right\}
\\
&\le
F(\theta^t)
+
\max_{\mu\in\Gamma_\kappa}
\left\{
\sum_i\mu_i\widehat\phi_i^t
-
D_{\mathrm{KL}}(\mu\Vert\mu^t)
\right\}
\end{align*}
Since both $\mu$ and $\mu^t$ have total mass $d$,
\[
D_{\mathrm{KL}}(\mu\Vert\mu^t)
=
\sum_i\mu_i
\log\left(\frac{\mu_i}{\mu_i^t}\right)
\]
and thus
\begin{align*}
\sum_i\mu_i\widehat\phi_i^t
-
D_{\mathrm{KL}}(\mu\Vert\mu^t)
&=
-\sum_i\mu_i
\log\left(
\frac{\mu_i}
{\mu_i^t e^{\widehat\phi_i^t}}
\right)\stackrel{(\ast)}{\le}
d\log\left(
\frac{
\sum_i\mu_i^t e^{\widehat\phi_i^t}
}{d}
\right)
\end{align*}
where $(\ast)$ follows from the log-sum inequality and
$\sum_i\mu_i=d$. Consequently,
\[
F(\theta^{t+1})
\le
F(\theta^t)
+
d\log\left(
\frac1d
\sum_i\mu_i^t e^{\widehat\phi_i^t}
\right)
\]

By Lemma~\ref{lem:private-inverse-perturbation}, $|\widehat\phi_i^t-\phi_i^t|
\le2\eta$,
and hence $e^{\widehat\phi_i^t}
\le
e^{2\eta}e^{\phi_i^t}$.
Therefore,
\[
\sum_i\mu_i^t e^{\widehat\phi_i^t}
\le
e^{2\eta}
\sum_i\mu_i^t e^{\phi_i^t}
\]
Recall the leverage-score identity, $\sum_i\mu_i^t e^{\phi_i^t}
=
\sum_i
\mu_i^t x_i^\top M(\mu^t)x_i
=
d$.
It follows that
\[
F(\theta^{t+1})
\le
F(\theta^t)
+
d\log(e^{2\eta})
=
F(\theta^t)+2d\eta
\]

Since $\theta^0=0$ and $\mu^0\in\Gamma_\kappa$,
we have $F(\theta^0)=0$. Summing over the $T$ rounds gives $F(\theta^T)
\le
2dT\eta$.

Fix any $q\in\Gamma_\kappa$. By the definition of $F$,
\[
\langle q,\theta^T\rangle
-
D_{\mathrm{KL}}(q\Vert\mu^0)
\le
F(\theta^T)
\]
and since $\theta_i^T
=
\sum_{t=0}^{T-1}\widehat\phi_i^t$,
we obtain
\[
\sum_{t=0}^{T-1}\sum_i
q_i\widehat\phi_i^t
\le
D_{\mathrm{KL}}(q\Vert\mu^0)
+
2dT\eta
\]

Again using
$|\widehat\phi_i^t-\phi_i^t|\le2\eta$
and $\sum_iq_i=d$, we have
\[
\sum_{t=0}^{T-1}\sum_iq_i\phi_i^t
\le
\sum_{t=0}^{T-1}\sum_iq_i\widehat\phi_i^t
+
2dT\eta
\le
D_{\mathrm{KL}}(q\Vert\mu^0)
+
4dT\eta
\]
Dividing by $T$ and applying Jensen's inequality to the convex
functions $\phi_i$ gives
\[
\sum_iq_i\phi_i(\overline\mu)
\le
\frac1T
\sum_{t=0}^{T-1}
\sum_iq_i\phi_i(\mu^t)
\le
\frac{D_{\mathrm{KL}}(q\Vert\mu^0)}{T}
+
4d\eta
\]
as required.
\end{proof}

\subsection{High-Probability Utility Guarantee}

\begin{lemma}[Simultaneous control of the Gaussian perturbations]
\label{lem:private-noise-event}
Suppose that each oracle call uses symmetric
Gaussian noise with scale $\sigma
=
\frac{2dR^2}
{\kappa n\sqrt{2\rho/(T+1)}}$.
If
\[
n
\ge
\frac{
32dR^2
\left(
\sqrt d+
\sqrt{2\log((T+1)/\beta)}
\right)
}{
\kappa\gamma\tau
\sqrt{2\rho/(T+1)}
}
\]
then, with probability at least $1-\beta$,
\[
\max_{0\le t\le T}
\|N^t\|_2
\le
\frac{\gamma\tau}{8}
\]
\end{lemma}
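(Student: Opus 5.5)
The plan is to combine a single-matrix spectral-norm tail bound for the symmetric Gaussian perturbation with a union bound over the $T+1$ oracle calls, and then check that the sample-size hypothesis makes the resulting high-probability bound at most $\gamma\tau/8$.

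\textbf{Step 1: one oracle call.} For a single call, I would bound $\|N\|_2 \le \|N\|_F$. Under the orthonormal-basis normalization of Appendix~\ref{apx:privacy-background}, $\|N\|_F$ is exactly the Euclidean norm of the coefficient vector $(z_{ij})_{i\le j}\sim\mathcal N(0,\sigma^2 I_m)$, where $m=d(d+1)/2$. Fact~\ref{fact:gaussian-concentration} then gives
\[
\|N\|_2\le\sigma\bigl(\sqrt m+\sqrt{2\log(1/\beta')}\bigr)
\]
with probability at least $1-\beta'$. The drawback is that $\sqrt m\approx d/\sqrt2$ rather than $\sqrt d$. To recover the $\sqrt d$ dependence in the stated condition, I would instead invoke the spectral-norm bound recalled in the same appendix:
\[
\|N\|_2\le C\sigma\bigl(\sqrt d+\sqrt{u}\bigr)\quad\text{except with probability } e^{-u}.
\]
For the symmetric normalization with diagonal variance $\sigma^2$ and off-diagonal variance $\sigma^2/2$, the standard Gaussian Orthogonal Ensemble estimate, obtained via Gaussian concentration of the $1$-Lipschitz map $N\mapsto\|N\|_2$ together with $\mathbb E\|N\|_2\le 2\sqrt d\,\sigma$, gives a constant of about $2$. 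That is, $\|N\|_2\le 2\sigma(\sqrt d+\sqrt{2u})$ with probability at least $1-e^{-u}$.

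\textbf{Step 2: union bound.} I would set $u=\log((T+1)/\beta)$ and take a union bound over the $T+1$ independent draws $N^0,\ldots,N^T$. With probability at least $1-\beta$,
\[
\max_t\|N^t\|_2\le 2\sigma\Bigl(\sqrt d+\sqrt{2\log((T+1)/\beta)}\Bigr).
\]

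\textbf{Step 3: plug in $\sigma$.} Substituting $\sigma=\frac{2dR^2}{\kappa n\sqrt{2\rho/(T+1)}}$, the right-hand side becomes
\[
\frac{4dR^2\bigl(\sqrt d+\sqrt{2\log((T+1)/\beta)}\bigr)}{\kappa n\sqrt{2\rho/(T+1)}}.
\]
The hypothesis on $n$ is exactly the statement that this quantity is at most $\gamma\tau/8$, because $32/8=4$.

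\textbf{Main obstacle.} The delicate step is pinning down the absolute constant in the spectral-norm tail for this particular symmetric normalization, since the hypothesis leaves room for a factor of exactly $2$. I would first try the Lipschitz-concentration argument with the expectation bound $\mathbb E\|N\|_2\le 2\sigma\sqrt d$ (Gordon/Slepian-type comparison for the GOE). If that constant does not fit, I would fall back on the sensitivity-calibrated $\sigma$: the oracle actually uses $4dR^2$ in its scale, so I would track whatever slack remains there.
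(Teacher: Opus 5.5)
Your proof is correct. It follows the same skeleton as the paper's proof: a per-call tail bound on $\|N^t\|_2$, a union bound over the $T+1$ oracle calls at failure probability $\beta/(T+1)$ each, and substitution of $\sigma$ into the hypothesis on $n$.

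The one substantive difference is the constant in the per-call bound.
\begin{itemize}
\item The paper asserts $\|N^t\|_2\le\sigma\bigl(\sqrt d+\sqrt{2\log((T+1)/\beta)}\bigr)$, i.e.\ constant $1$. Its hypothesis on $n$ therefore yields $\gamma\tau/16$, and a factor of $2$ goes unused.
\item You use constant $2$ and land exactly on $\gamma\tau/8$.
\end{itemize}

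Your constant is the one that can actually be justified for this normalization. With diagonal variance $\sigma^2$ and off-diagonal variance $\sigma^2/2$, $N$ has the same law as $\tfrac{\sigma}{2}(A+A^\top)$, where $A$ has i.i.d.\ standard Gaussian entries. Hence $\|N\|_2\approx\sqrt2\,\sigma\sqrt d$ for large $d$, and the constant-$1$ bound cannot hold in general.

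Your route does hold. Since $\|N\|_2\le\sigma\|A\|_2$, Gordon's bound gives $\mathbb E\|N\|_2\le 2\sigma\sqrt d$. Also $\|N\|_2\le\|N\|_F$, so $N\mapsto\|N\|_2$ is $1$-Lipschitz in the orthonormal Frobenius coordinates. Together these give $\|N\|_2\le 2\sigma\sqrt d+\sigma\sqrt{2u}$ with probability at least $1-e^{-u}$, slightly stronger than what you need. In effect your argument repairs the paper's constant while still proving the lemma exactly as stated.

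Your fallback in the last paragraph points the wrong way. The scale $4dR^2/(\kappa n\sqrt{2\rho_0})$ used in Algorithm~\ref{alg:private-oracle} is twice the $\sigma$ fixed in this lemma. A larger $\sigma$ consumes slack rather than supplying it: your bound would then only give $\gamma\tau/4$. You never need that fallback, because the lemma fixes $\sigma=2dR^2/(\kappa n\sqrt{2\rho/(T+1)})$.
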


\begin{proof}
For each symmetric Gaussian perturbation,
the standard spectral-norm concentration bound
gives
\[
\Pr\left[
\|N^t\|_2
>
\sigma
\left(
\sqrt d+
\sqrt{2\log((T+1)/\beta)}
\right)
\right]
\le
\frac{\beta}{T+1}
\]
A union bound over the $T+1$ oracle calls gives,
with probability at least $1-\beta$,
\[
\max_{0\le t\le T}
\|N^t\|_2
\le
\sigma
\left(
\sqrt d+
\sqrt{2\log((T+1)/\beta)}
\right)
\]
The assumed lower bound on $n$ makes the
right-hand side at most $\gamma\tau/8$.
\end{proof}

\begin{lemma}[Utility of the averaged measure]
\label{lem:private-averaged-utility}
Under the assumptions of
Theorem~\ref{thm:main-private}, on the event
of Lemma~\ref{lem:private-noise-event},
\[
\sup_{q\in\Gamma_\kappa}
\sum_{i=1}^n
q_i
\log\!\left(
x_i^\top M(\overline\mu)x_i
\right)
\le d\gamma
\]
Consequently, there exists
$S\subseteq[n]$ with
$|S|\ge(1-\kappa)n$ such that
\[
e^{-\gamma/2}
E(\overline\mu)
\subseteq K_S
\]
and
\[
\operatorname{vol}
\left(
e^{-\gamma/2}E(\overline\mu)
\right)
\ge
e^{-d\gamma/2}
\operatorname{vol}(E^*)
\]
\end{lemma}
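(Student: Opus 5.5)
On the event of Lemma~\ref{lem:private-noise-event} we have $\max_{0\le t\le T}\|N^t\|_2\le\gamma\tau/8=\eta\tau$. The plan is to convert this absolute bound into the relative bound required by Lemma~\ref{lem:private-sum-bound}, and then rerun the non-private conclusion. Each iterate $\mu^t=\Pi_{\Gamma_\kappa}(w^t)$ lies in $\Gamma_\kappa$, so Lemma~\ref{lem:good-assumption} gives $\lambda_{\min}(\Sigma(\mu^t))\ge\tau$. Hence
\[
\|\Sigma(\mu^t)^{-1/2}N^t\Sigma(\mu^t)^{-1/2}\|_2\le\|N^t\|_2/\tau\le\eta .
\]
Since $\gamma<1$, we also have $\eta=\gamma/8\le 1/2$. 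It follows that $\widehat\Sigma^t=\Sigma(\mu^t)+N^t\succeq(1-\eta)\Sigma(\mu^t)\succeq(1-\eta)\tau I$, so $\lambda_{\min}(\widehat\Sigma^t)\ge\tau/2$. Therefore no oracle call returns $\perp$, every $\widehat M^t$ is well defined, and the hypotheses of Lemma~\ref{lem:private-inverse-perturbation} and Lemma~\ref{lem:private-sum-bound} hold in every round.

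Next we apply Lemma~\ref{lem:private-sum-bound} with an arbitrary $q\in\Gamma_\kappa$. This gives
\[
\sum_i q_i\log(x_i^\top M(\overline\mu)x_i)\le D_{\mathrm{KL}}(q\Vert\mu^0)/T+4d\eta .
\]
By Lemma~\ref{lem:max-kl}, $D_{\mathrm{KL}}(q\Vert\mu^0)\le d\log(1/\kappa)$. Since $T=\lceil 2\log(1/\kappa)/\gamma\rceil$, the first term is at most $d\gamma/2$. Since $\eta=\gamma/8$, the second term is exactly $d\gamma/2$. Taking the supremum over $q$ gives the claimed bound $d\gamma$.

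For the consequences, note that $\overline\mu$ is an average of elements of the convex set $\Gamma_\kappa$, so $\overline\mu\in\Gamma_\kappa$. In particular it has total mass $d$, and $\Sigma(\overline\mu)\succ0$ by Lemma~\ref{lem:good-assumption}. After dividing by $d$, Corollary~\ref{cor:approx-trimmed-john} applies directly. It yields a set $S$ with $|S|\ge(1-\kappa)n$ and $e^{-\gamma/2}E(\overline\mu)\subseteq K_S$, via Lemma~\ref{lem:trimmed-containment} and Cauchy--Schwarz. It also yields the volume bound $\operatorname{vol}(e^{-\gamma/2}E(\overline\mu))\ge e^{-d\gamma/2}\operatorname{vol}(E^*)$, via weak duality as in Lemma~\ref{lem:volume-interpretation}.

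The only delicate step is the first one: checking that the noise event controls every round, including the requirement $\eta\le1/2$ and the absence of $\perp$ outputs. These matter because Lemma~\ref{lem:private-sum-bound} needs the perturbation hypothesis at all $T$ rounds simultaneously. Everything after that is a direct chaining of earlier results.
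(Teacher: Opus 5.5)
Your proof is correct and follows the same route as the paper. Apply Lemma~\ref{lem:private-sum-bound} with $\eta=\gamma/8$, bound $D_{\mathrm{KL}}(q\Vert\mu^0)\le d\log(1/\kappa)$, use $T\ge 2\log(1/\kappa)/\gamma$ to get $d\gamma/2+d\gamma/2$, and then reuse the non-private trimmed-containment and volume argument. Your extra checks fill in hypotheses that the paper leaves implicit or defers to the proof of Theorem~\ref{thm:main-private}: the relative bound via Lemma~\ref{lem:good-assumption}, $\eta\le 1/2$, the absence of $\perp$ outputs, and $\overline\mu\in\Gamma_\kappa$ with $\Sigma(\overline\mu)\succ0$.
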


\begin{proof}
Set $\eta=\gamma/8$.
By Lemma~\ref{lem:private-sum-bound},
for every $q\in\Gamma_\kappa$,
\[
\begin{aligned}
\sum_iq_i
\log\!\left(
x_i^\top M(\overline\mu)x_i
\right)
&\le
\frac{
D_{\mathrm{KL}}(q\Vert\mu^0)
}{T}
+
4d\eta\\
&\le
\frac{
d\log(1/\kappa)
}{T}
+
\frac{d\gamma}{2}
\end{aligned}
\]
Since
$T\ge2\log(1/\kappa)/\gamma$,
the right-hand side is at most $d\gamma$.

The trimmed-containment and volume guarantees
then follow exactly as in the non-private case.
\end{proof}

\subsection{Utility of the Final Released Ellipsoid}

\begin{lemma}
\label{lem:private-final-output}
Let
\[
M:=M(\overline\mu),
\qquad
\widehat M
:=
\left(
\Sigma(\overline\mu)+N^T
\right)^{-1}
\]
On the event of
Lemma~\ref{lem:private-noise-event},
\[
\frac{1}{1+\eta}M
\preceq
\widehat M
\preceq
\frac{1}{1-\eta}M,
\qquad
\eta=\frac{\gamma}{8}
\]
Consequently, the ellipsoid
\[
\widehat E
:=
\left\{
x\in\mathbb R^d:
x^\top
\left(
(1-\eta)e^{-\gamma}\widehat M
\right)^{-1}
x
\le1
\right\}
\]
satisfies $\widehat E\subseteq K_S$
for some $S\subseteq[n]$ with
$|S|\ge(1-\kappa)n$, and
\[
\operatorname{vol}(\widehat E)
\ge
e^{-d\gamma/2}
\left(
\frac{1-\eta}{1+\eta}
\right)^{d/2}
\operatorname{vol}(E^*)
\]
Moreover, $K_P
\subseteq
\sqrt{d(1+\eta)}\,\widehat E_0$,
where $\widehat E_0
=
\{x:x^\top\widehat M^{-1}x\le1\}$.
\end{lemma}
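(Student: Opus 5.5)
The plan is to reduce everything to the matrix sandwich for the final oracle call and then transfer the guarantees of Lemma~\ref{lem:private-averaged-utility} and Lemma~\ref{lem:trimmed-rounding} through it.

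\textbf{Sandwich.} Since $\Gamma_\kappa$ is convex and every $\mu^t\in\Gamma_\kappa$, the average $\overline\mu$ also lies in $\Gamma_\kappa$. Lemma~\ref{lem:good-assumption} then gives $\lambda_{\min}(\Sigma(\overline\mu))\ge\tau$. On the event of Lemma~\ref{lem:private-noise-event} we have $\|N^T\|_2\le\eta\tau$ with $\eta=\gamma/8\le 1/2$. Hence
\[
\|\Sigma(\overline\mu)^{-1/2}N^T\Sigma(\overline\mu)^{-1/2}\|_2\le\|N^T\|_2/\tau\le\eta .
\]
Lemma~\ref{lem:private-inverse-perturbation} now yields $\frac{1}{1+\eta}M\preceq\widehat M\preceq\frac{1}{1-\eta}M$. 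As a byproduct, $\lambda_{\min}(\widehat\Sigma)\ge(1-\eta)\tau\ge\tau/2$, so the oracle does not return $\perp$. The same bound applies to the $T$ in-loop calls.

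\textbf{Trimmed containment.} Take $S=\{i: x_i^\top Mx_i\le e^\gamma\}$. By Lemma~\ref{lem:private-averaged-utility} together with Lemma~\ref{lem:trimmed-containment}, $|S|\ge(1-\kappa)n$. For $i\in S$,
\[
x_i^\top\bigl((1-\eta)e^{-\gamma}\widehat M\bigr)x_i\le e^{-\gamma}x_i^\top M x_i\le 1 .
\]
Write $\widehat E$ as the ellipsoid with shape matrix $A:=(1-\eta)e^{-\gamma}\widehat M$, i.e.\ $\widehat E=\{x:x^\top A^{-1}x\le1\}$. Cauchy--Schwarz, as in Lemma~\ref{lem:trimmed-rounding}, gives $|x_i^\top x|\le\sqrt{x_i^\top A x_i}\sqrt{x^\top A^{-1}x}\le1$ for every $x\in\widehat E$. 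Hence $\widehat E\subseteq K_S$.

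\textbf{Volume.} The volume of $\widehat E$ is proportional to $\det(A)^{1/2}=(1-\eta)^{d/2}e^{-d\gamma/2}\det(\widehat M)^{1/2}$. The lower sandwich gives $\det\widehat M\ge(1+\eta)^{-d}\det M$. Lemma~\ref{lem:volume-interpretation} applied to $\overline\mu$, which has mass $d$, gives $\det(M)^{1/2}\operatorname{vol}(B_2^d)=\operatorname{vol}(E(\overline\mu))\ge\operatorname{vol}(E^*)$. Combining these,
\[
\operatorname{vol}(\widehat E)\ge e^{-d\gamma/2}\left(\frac{1-\eta}{1+\eta}\right)^{d/2}\operatorname{vol}(E^*).
\]

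\textbf{Outer containment.} For $x\in K_P$, the upper bound $\widehat\Sigma\preceq(1+\eta)\Sigma(\overline\mu)$ gives
\[
x^\top\widehat M^{-1}x=x^\top\widehat\Sigma x\le(1+\eta)\,x^\top\Sigma(\overline\mu)x=(1+\eta)\sum_i\overline\mu_i(x_i^\top x)^2\le(1+\eta)d .
\]
Therefore $K_P\subseteq\sqrt{d(1+\eta)}\,\widehat E_0$.

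\textbf{Main obstacle.} The step that needs the most care is the first one: justifying that the final call's noise is controlled, which requires both $\overline\mu\in\Gamma_\kappa$ and the union bound over all $T+1$ calls in Lemma~\ref{lem:private-noise-event}. The remaining steps are direct substitutions.
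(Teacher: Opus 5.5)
Your proof is correct and follows essentially the same route as the paper. You obtain the sandwich from Lemma~\ref{lem:private-inverse-perturbation}, use the lower bound $\widehat M\succeq M/(1+\eta)$ with determinants for the volume, and use $\widehat M^{-1}\preceq(1+\eta)M^{-1}$ for the outer containment. There are only two cosmetic differences. First, the paper gets $\widehat E\subseteq K_S$ by Loewner monotonicity: since $(1-\eta)e^{-\gamma}\widehat M\preceq e^{-\gamma}M$, it follows that $\widehat E\subseteq e^{-\gamma/2}E(\overline\mu)\subseteq K_S$, whereas you re-run Cauchy--Schwarz with the new shape matrix. Second, you spell out $\overline\mu\in\Gamma_\kappa$ together with Lemma~\ref{lem:good-assumption} to justify the relative-norm hypothesis for the final call, which the paper leaves implicit.
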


\begin{proof}
The matrix inequalities follow from
Lemma~\ref{lem:private-inverse-perturbation}.

Let
\[
M_\gamma:=e^{-\gamma}M,
\qquad
\widehat M_\gamma
:=
(1-\eta)e^{-\gamma}\widehat M
\]
Since $\widehat M
\preceq
\frac{1}{1-\eta}M$,
we have $\widehat M_\gamma
\preceq
M_\gamma$.
Therefore, the ellipsoid associated with
$\widehat M_\gamma$ is contained in the
ellipsoid associated with $M_\gamma$.
By Lemma~\ref{lem:private-averaged-utility},
the latter is contained in $K_S$, and hence $\widehat E\subseteq K_S$.

For the volume guarantee, $\widehat M
\succeq
\frac{1}{1+\eta}M$,
so $\widehat M_\gamma
\succeq
\frac{1-\eta}{1+\eta}
M_\gamma$.
Taking determinants gives
\[
\operatorname{vol}(\widehat E)
\ge
\left(
\frac{1-\eta}{1+\eta}
\right)^{d/2}
\operatorname{vol}
\left(
e^{-\gamma/2}E(\overline\mu)
\right)
\]
and the desired bound follows from
Lemma~\ref{lem:private-averaged-utility}.

Finally, for every $x\in K_P$, $x^\top M^{-1}x\le d$.
Since
$\widehat M\succeq M/(1+\eta)$,
we obtain
\[
\widehat M^{-1}
\preceq
(1+\eta)M^{-1}
\]
and therefore
\[
x^\top\widehat M^{-1}x
\le
d(1+\eta)
\]
Thus $K_P
\subseteq
\sqrt{d(1+\eta)}\,\widehat E_0$.
\end{proof}

\begin{proof}[Proof of Theorem~\ref{thm:main-private}]
By Lemma~\ref{lem:private-noise-event}, with
probability at least $1-\beta$ all $T+1$
perturbations satisfy
\[
\|N^t\|_2
\le
\eta\tau,
\qquad
\eta=\gamma/8
\]
Lemma~\ref{lem:good-assumption} implies
$\lambda_{\min}(\Sigma(\mu^t))\ge\tau$,
and hence
\[
\lambda_{\min}
(\Sigma(\mu^t)+N^t)
\ge
(1-\eta)\tau
\ge
\tau/2
\]
Thus no oracle invocation returns $\perp$.

Lemma~\ref{lem:private-averaged-utility}
establishes the guarantee for the averaged
measure, and
Lemma~\ref{lem:private-final-output}
transfers it to the released matrix
$\widehat M$.
\end{proof}

\section{Population Characterization of The Goodness Parameter}\label{apx:population}

The $(\kappa,\tau)$-goodness condition used in our private analysis is a finite-sample property of the input dataset. In this section, we show that this condition admits a natural population-level characterization for every distribution $\mathcal P$ over $\mathbb R^d$. In particular, for any $\kappa\in(0,1)$, we define a canonical distribution-dependent parameter $\tau_{\mathcal P,\kappa}$ in terms of the lower quantiles of the one-dimensional squared projections $\langle X,v\rangle^2$. We then show that, with sufficiently many i.i.d.\ samples from $\mathcal P$, the resulting dataset is $(\kappa,\tau_{\mathcal P,\kappa})$-good with high probability. Thus, rather than viewing $\tau$ as an arbitrary structural parameter of a finite dataset, it can be interpreted as a population quantity measuring how far the distribution is uniformly bounded away from degeneracy across all directions. The proof follows from uniform convergence over the class of centered slabs, whose VC dimension is $O(d)$.

\subsection{Preliminaries for this section: VC dimension}

\begin{definition}[VC dimension]
Let $\mathcal{A}$ be a class of subsets of a domain $\mathcal{X}$.
We say that a finite set $S=\{x_1,\ldots,x_m\}\subseteq\mathcal{X}$ is
\emph{shattered} by $\mathcal{A}$ if, for every subset $T\subseteq S$,
there exists a set $A\in\mathcal{A}$ such that
\[
A\cap S = T
\]
The \emph{Vapnik--Chervonenkis (VC) dimension} of $\mathcal{A}$,
denoted by $\operatorname{VCdim}(\mathcal{A})$, is the largest integer
$m$ for which there exists a set $S\subseteq\mathcal{X}$ of cardinality
$m$ that is shattered by $\mathcal{A}$. If arbitrarily large finite sets
can be shattered by $\mathcal{A}$, then
$\operatorname{VCdim}(\mathcal{A})=\infty$.
\end{definition}

\begin{theorem}[Uniform convergence for VC classes~\citep{Vapnik2015}]
\label{thm:vc-uniform-convergence}
Let $\mathcal{A}$ be a class of measurable subsets of a domain $\mathcal{X}$
with VC dimension
\[
\operatorname{VCdim}(\mathcal{A}) = v
\]
Let $X_1,\ldots,X_n \overset{\mathrm{i.i.d.}}{\sim} \mathcal{P}$.
Then there exists a universal constant $C>0$ such that, for every
$\epsilon,\beta\in(0,1)$, if
\[
n
\ge
C
\frac{
v\log(1/\epsilon)+\log(1/\beta)
}{
\epsilon^2
}
\]
then, with probability at least $1-\beta$,
\[
\sup_{A\in\mathcal{A}}
\left|
\frac{1}{n}\sum_{i=1}^n \mathbf{1}\{X_i\in A\}
-
\Pr_{X\sim\mathcal{P}}[X\in A]
\right|
\le \epsilon
\]
\end{theorem}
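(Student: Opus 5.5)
The plan is the classical Vapnik--Chervonenkis argument: symmetrization, then the Sauer--Shelah lemma, then Hoeffding's inequality with a union bound. Write $\widehat P_n(A):=\frac1n\sum_i\mathbf 1\{X_i\in A\}$ and $P(A):=\Pr_{X\sim\mathcal P}[X\in A]$. We want to bound $\Pr[\sup_{A\in\mathcal A}|\widehat P_n(A)-P(A)|>\epsilon]$ by $\beta$. Measurability of the supremum is assumed, as is standard, for instance by taking $\mathcal A$ permissible or countably determined; for the class of centered slabs used later this holds.

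\emph{Step 1 (symmetrization).} Draw an independent ghost sample $X_1',\ldots,X_n'\sim\mathcal P$ with empirical measure $\widehat P_n'$. Fix a set $A^*$ (chosen measurably as a function of the first sample) with $|\widehat P_n(A^*)-P(A^*)|>\epsilon$. By Chebyshev, for $n\ge 2/\epsilon^2$ we have $|\widehat P_n'(A^*)-P(A^*)|\le\epsilon/2$ with probability at least $1/2$. Hence
\[
\Pr\Big[\sup_A|\widehat P_n(A)-P(A)|>\epsilon\Big]\le 2\Pr\Big[\sup_A|\widehat P_n(A)-\widehat P_n'(A)|>\epsilon/2\Big].
\]
Next introduce i.i.d.\ Rademacher signs $s_i$ that swap $X_i$ and $X_i'$; this leaves the joint law invariant. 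Now condition on the $2n$ points $Z=(X_1,\ldots,X_n,X_1',\ldots,X_n')$.

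\emph{Step 2 (finite projection).} Conditionally on $Z$, the sets in $\mathcal A$ induce at most $\Pi_{\mathcal A}(2n)$ distinct patterns on $Z$. By the Sauer--Shelah lemma, $\Pi_{\mathcal A}(2n)\le(2en/v)^v$ for $2n\ge v$. For a fixed pattern, the quantity $\frac1n\sum_i s_i(\mathbf 1\{X_i\in A\}-\mathbf 1\{X_i'\in A\})$ is a sum of independent bounded terms in $[-1/n,1/n]$. Hoeffding's inequality gives probability at most $2e^{-n\epsilon^2/8}$ that it exceeds $\epsilon/2$. A union bound over the patterns then yields
\[
\Pr\Big[\sup_A|\widehat P_n(A)-P(A)|>\epsilon\Big]\le 4\Big(\frac{2en}{v}\Big)^{v}e^{-n\epsilon^2/8}.
\]

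\emph{Step 3 (solving for $n$).} It remains to show that $n\ge C\,(v\log(1/\epsilon)+\log(1/\beta))/\epsilon^2$ forces $v\log(2en/v)+\log 4-n\epsilon^2/8\le\log\beta$. I expect this to be the only slightly delicate step, since $n$ appears on both sides through $\log n$. I would use the elementary inequality $\log x\le x/a+\log a$, valid for all $x,a>0$, with $x=2en/v$ and $a=32e/\epsilon^2$. This bounds $v\log(2en/v)$ by $n\epsilon^2/16+v\log(32e/\epsilon^2)$. The requirement then reduces to $n\epsilon^2/16\ge v\log(32e/\epsilon^2)+\log(4/\beta)$, which is implied by the hypothesis for a suitable universal $C$, because $\log(32e/\epsilon^2)=O(\log(1/\epsilon)+1)$. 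The additive constant is absorbed by also requiring $C\ge$ some constant, using $\epsilon<1$; when $\epsilon$ is bounded away from $0$ the $\log(1/\beta)/\epsilon^2$ term together with $C$ covers it. The same choice of $C$ also covers the side conditions $n\ge 2/\epsilon^2$ and $2n\ge v$ used in Steps 1 and 2.
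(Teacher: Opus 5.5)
Your Steps 1 and 2 are the standard symmetrization, Sauer--Shelah and Hoeffding argument, which is what the paper's citation to Vapnik stands for (the paper gives no proof of its own). The bound $4(2en/v)^v e^{-n\epsilon^2/8}$ you derive is correct.

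The gap is in Step 3. After the $\log x\le x/a+\log a$ trick you need
$n\epsilon^2/16\ge v\log(32e)+2v\log(1/\epsilon)+\log(4/\beta)$.
You then claim the constant part $v\log(32e)+\log 4$ can be absorbed into $C\bigl(v\log(1/\epsilon)+\log(1/\beta)\bigr)$ ``using $\epsilon<1$''. That does not work. The condition $\epsilon<1$ gives no positive lower bound on $\log(1/\epsilon)$, and $\log(1/\beta)\to 0$ as $\beta\to 1$. Your side conditions $n\ge 2/\epsilon^2$ and $2n\ge v$ fail to follow in the same regime.

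No sharper calculation can fix this, because the statement as written is false there. Let $\mathcal P$ be uniform on $\{1,\dots,N\}$ and let $\mathcal A$ be the singletons, so $v=1$. Take $\epsilon,\beta$ close enough to $1$ that $C\bigl(\log(1/\epsilon)+\log(1/\beta)\bigr)/\epsilon^2\le 1$. Then $n=1$ satisfies the hypothesis. Yet $A=\{X_1\}$ gives $|\widehat P_1(A)-P(A)|=1-1/N>\epsilon$ surely once $N>1/(1-\epsilon)$. So the required event has probability $0<1-\beta$. The same failure occurs already at $\beta=1/2$ if $\mathcal A$ is all sets of size at most $v$, with $v$ large and $\epsilon$ close to $1$.

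What your argument does prove is the theorem under the standard implicit restriction $\epsilon\le 1/2$ and $v\ge 1$. Equivalently, you can replace $\log(1/\epsilon)$ by $\log(2/\epsilon)$ and keep $v\ge1$. Then $\log(1/\epsilon)\ge\log 2$, so $v\log(32e)+\log 4\le c\,v\log(1/\epsilon)$ for an absolute constant $c$. For a suitable absolute $C$ the hypothesis then also gives $n\ge 2/\epsilon^2$ and $2n\ge v$, and your Step 3 goes through.

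You should state this restriction explicitly rather than assert the absorption for all $\epsilon,\beta\in(0,1)$. The restriction costs the paper nothing: it applies the theorem only with $\epsilon=\kappa/4<1/4$ and $v=O(d)\ge 1$.
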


\begin{lemma}[Sauer's Lemma]
\label{lem:sauer}
Let $\mathcal H$ be a class of subsets of a domain $\mathcal X$ with
VC dimension $\nu<\infty$. Let $\Pi_{\mathcal H}(m)$ denote the growth
function of $\mathcal H$, defined by
\[
\Pi_{\mathcal H}(m)
:=
\max_{\substack{S\subseteq\mathcal X\\ |S|=m}}
\left|
\left\{
H\cap S:H\in\mathcal H
\right\}
\right|
\]
Then, for every integer $m\ge 0$,
\[
\Pi_{\mathcal H}(m)
\le
\sum_{j=0}^{\nu}
\binom{m}{j}
\]
In particular, if $m\ge \nu\ge 1$, then
\[
\Pi_{\mathcal H}(m)
\le
\sum_{j=0}^{\nu}
\binom{m}{j}
\le
\left(\frac{em}{\nu}\right)^\nu
\]
\end{lemma}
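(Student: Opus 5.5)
We argue directly on finite trace classes. Fix $S\subseteq\mathcal X$ with $|S|=m$ and let $\mathcal H_S:=\{H\cap S:H\in\mathcal H\}$. This is a family of subsets of $S$, and its VC dimension is at most $\nu$, since any set shattered by $\mathcal H_S$ is shattered by $\mathcal H$. So it suffices to prove the following claim.

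\emph{Claim.} For every finite set $S$ with $|S|=m$ and every family $\mathcal F\subseteq 2^S$ of VC dimension at most $\nu$, we have $|\mathcal F|\le\Phi_\nu(m):=\sum_{j=0}^{\nu}\binom{m}{j}$.

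We prove the claim by induction on $m+\nu$. There are two base cases.
\begin{itemize}
\item If $m=0$, then $|\mathcal F|\le1=\Phi_\nu(0)$.
\item If $\nu=0$, then $|\mathcal F|\le1$. Indeed, two distinct members of $\mathcal F$ differ at some point $x$, and then $\mathcal F$ shatters the singleton $\{x\}$.
\end{itemize}

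For the inductive step, pick a point $x\in S$ and set $S':=S\setminus\{x\}$. Define
\[
\mathcal F':=\{F\setminus\{x\}:F\in\mathcal F\},\qquad
\mathcal F'':=\{A\subseteq S': A\in\mathcal F\text{ and }A\cup\{x\}\in\mathcal F\}.
\]
The map $F\mapsto F\setminus\{x\}$ from $\mathcal F$ onto $\mathcal F'$ is injective except on the pairs $\{A,A\cup\{x\}\}$, which it merges. Hence
\[
|\mathcal F|=|\mathcal F'|+|\mathcal F''|.
\]
We now bound the VC dimensions of the two families.
\begin{itemize}
\item $\mathcal F'$ has VC dimension at most $\nu$, because it consists of traces of $\mathcal F$ on $S'$.
\item $\mathcal F''$ has VC dimension at most $\nu-1$. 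Suppose $\mathcal F''$ shatters $B\subseteq S'$. Then $\mathcal F$ shatters $B\cup\{x\}$: every pattern on $B$ is realized by some $A\in\mathcal F''$, and both $A$ and $A\cup\{x\}$ lie in $\mathcal F$, so both choices at $x$ are realized. This forces $|B|+1\le\nu$.
\end{itemize}
Both families live on the $(m-1)$-element set $S'$, so the inductive hypothesis applies to each. Combining with Pascal's rule $\binom{m-1}{j}+\binom{m-1}{j-1}=\binom{m}{j}$ gives
\[
|\mathcal F|\le\Phi_\nu(m-1)+\Phi_{\nu-1}(m-1)=\Phi_\nu(m).
\]
Taking the maximum over $S$ then yields $\Pi_{\mathcal H}(m)\le\Phi_\nu(m)$.

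For the closed-form bound, assume $m\ge\nu\ge1$, so that $\nu/m\le1$. For every $j\le\nu$ we then have $(\nu/m)^\nu\le(\nu/m)^j$. Therefore
\[
\Big(\tfrac{\nu}{m}\Big)^{\nu}\sum_{j=0}^{\nu}\binom{m}{j}
\le\sum_{j=0}^{m}\binom{m}{j}\Big(\tfrac{\nu}{m}\Big)^{j}
=\Big(1+\tfrac{\nu}{m}\Big)^{m}\le e^{\nu}.
\]
Multiplying through by $(m/\nu)^\nu$ gives $\sum_{j=0}^{\nu}\binom{m}{j}\le(em/\nu)^\nu$.

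The only step that needs care is the VC bound for $\mathcal F''$ together with the exact identity $|\mathcal F|=|\mathcal F'|+|\mathcal F''|$. The identity relies on each fibre of the restriction map having size one or two, with size two occurring exactly for the sets in $\mathcal F''$. Everything else is bookkeeping.
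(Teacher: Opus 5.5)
Your proof is correct and complete. The paper itself gives no proof of this lemma. It is quoted in the appendix as standard background and used only once, with $\nu=d+1$ and $m=16(d+1)$, to bound the VC dimension of centered slabs. So there is no in-paper argument to compare yours against.

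What you wrote is the standard one-point-deletion induction for the Sauer--Shelah bound. It rests on the exact identity $|\mathcal F|=|\mathcal F'|+|\mathcal F''|$, which is valid because $\mathcal F''\subseteq\mathcal F'$ and the fibres of $F\mapsto F\setminus\{x\}$ have size two exactly over $\mathcal F''$. It also rests on the shattering argument giving $\operatorname{VCdim}(\mathcal F'')\le\nu-1$, and the induction closes by Pascal's rule.

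The remaining details also check out. Both base cases are handled, and $\nu\ge 1$ holds whenever you recurse to $\mathcal F''$. The closed-form estimate uses the usual $(\nu/m)^\nu$-weighting trick with $(1+\nu/m)^m\le e^\nu$, and $m\ge\nu$ is exactly what lets the partial sum be dominated by the full binomial expansion.
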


\subsection{Population Characterization of The Goodness
Parameter}
\begin{lemma}
\label{lem:population-goodness}
Let $\mathcal P$ be a distribution over $\mathbb R^d$, let
$\kappa,\beta\in(0,1)$. For every unit vector $v\in\mathbb S^{d-1}$, define the cumulative
distribution function
\[
F_v(t)
:=
\Pr_{x\sim\mathcal P}
\left[
\langle x,v\rangle^2\le t
\right]
\]
and its generalized inverse
\[
F_v^{-1}(p)
:=
\inf\left\{
t\ge0:F_v(t)\ge p
\right\}
\]
Define
\[
\tau_{\mathcal P,\kappa}
:=
\frac d2
\inf_{\|v\|_2=1}
F_v^{-1}\left(\frac{\kappa}{4}\right)
\]

Then there exists a universal constant $C>0$ such that, if
\[
n
\ge
C
\frac{
d\log(1/\kappa)+\log(1/\beta)
}{
\kappa^2
}
\]
then, with probability at least $1-\beta$, the sample $P=\{x_1,\ldots,x_n\} \sim \mathcal P^n$
is $(\kappa,\tau_{\mathcal P,\kappa})$-good. That is, simultaneously
for every unit vector $v\in\mathbb S^{d-1}$,
\[
\left|
\left\{
i\in[n]:
\langle x_i,v\rangle^2
<
\frac{2\tau_{\mathcal P,\kappa}}{d}
\right\}
\right|
<
\frac{\kappa n}{2}
\]
\end{lemma}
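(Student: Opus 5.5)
The plan is to reduce $(\kappa,\tau_{\mathcal P,\kappa})$-goodness to a uniform-convergence statement over the class of open centered slabs
\[
\mathcal A:=\left\{A_{v,t}:=\{x\in\mathbb R^d:\langle x,v\rangle^2<t\}\;:\;v\in\mathbb S^{d-1},\ t\ge0\right\},
\]
and then apply Theorem~\ref{thm:vc-uniform-convergence} with accuracy $\epsilon=\kappa/8$. Write $t_0:=2\tau_{\mathcal P,\kappa}/d=\inf_{\|v\|=1}F_v^{-1}(\kappa/4)$. For each unit $v$, the set $\mathrm{BAD}_v$ of Definition~\ref{def:good-input} consists exactly of the sample points lying in $A_{v,t_0}$. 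It therefore suffices to show two things. First, a population bound: $\Pr_{x\sim\mathcal P}[x\in A_{v,t_0}]\le\kappa/4$ for every $v$. Second, with probability at least $1-\beta$, every set in $\mathcal A$ has empirical frequency within $\kappa/8$ of its probability. Combining them gives $|\mathrm{BAD}_v|/n\le\kappa/4+\kappa/8<\kappa/2$ simultaneously for all $v$, which is the claim.

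\emph{Population step.} Fix $v$. Since $t_0\le F_v^{-1}(\kappa/4)$, every $s<t_0$ satisfies $s<F_v^{-1}(\kappa/4)$. By the definition of the generalized inverse as an infimum, this gives $F_v(s)<\kappa/4$. Now $\Pr[\langle x,v\rangle^2<t_0]=\lim_{s\uparrow t_0}F_v(s)\le\kappa/4$ by continuity of measure from below. If $t_0=0$ the set is empty and the bound is trivial. Note that we use the strict inequality in $\mathrm{BAD}_v$ precisely here, so that atoms at $t_0$ cause no trouble.

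\emph{VC step.} For $t>0$, write $A_{v,t}=\{x:\langle x,v\rangle<\sqrt t\}\cap\{x:\langle -v,x\rangle<\sqrt t\}$. This is an intersection of two affine open halfspaces, and the class of such halfspaces has VC dimension $d+1$. By Lemma~\ref{lem:sauer}, on any $m$ points the number of traces of $\mathcal A$ is at most the product of the halfspace growth functions, namely $(em/(d+1))^{2(d+1)}$; the empty set ($t=0$) adds one more trace. If $\mathcal A$ shatters $m$ points, then $2^m\le (em/(d+1))^{2(d+1)}+1$, which forces $m=O(d)$. Hence $\operatorname{VCdim}(\mathcal A)=O(d)$.

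I expect this VC bound to be the only genuinely delicate point. The naive route, viewing $A_{v,t}$ as the sublevel set of a quadratic form, would give only $O(d^2)$ and would spoil the stated sample size; the two-halfspace decomposition is what avoids this. Plugging $v=O(d)$ and $\epsilon=\kappa/8$ into Theorem~\ref{thm:vc-uniform-convergence} requires $n\ge C'(d\log(8/\kappa)+\log(1/\beta))/\kappa^2$. After adjusting the universal constant, this is the stated condition, and the proof concludes by combining it with the population step as described above.
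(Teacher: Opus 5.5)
Your proposal is correct and follows the paper's proof essentially step for step. It uses the same population bound $\Pr[\langle x,v\rangle^2<t_0]\le\kappa/4$, which you justify more carefully via continuity from below. It uses the same decomposition of each centered slab into two halfspaces, so that Sauer's lemma gives VC dimension $O(d)$. It then makes the same appeal to Theorem~\ref{thm:vc-uniform-convergence}, where your choice $\epsilon=\kappa/8$ actually secures the strict inequality $|\mathrm{BAD}_v|<\kappa n/2$; the paper's $\epsilon=\kappa/4$ only yields $|\mathrm{BAD}_v|\le\kappa n/2$.

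One step is not literally valid, in your write-up and in the paper's alike: absorbing $d\log(8/\kappa)$ into $C\,d\log(1/\kappa)$ breaks down as $\kappa\to1$. In that regime the statement itself fails. When $d\log(1/\kappa)\le1$ the hypothesis allows $n=O(1+\log(1/\beta))<d$ samples. Then, whenever $\tau_{\mathcal P,\kappa}>0$, some unit $v$ is orthogonal to all of them, so every point is bad. The bound should therefore be read with $\kappa$ bounded away from $1$, or with $\log(8/\kappa)$ in place of $\log(1/\kappa)$.
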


\begin{proof}
Set
\[
a
:=
\frac{2\tau_{\mathcal P,\kappa}}{d}
=
\inf_{\|v\|_2=1}
F_v^{-1}\left(\frac{\kappa}{4}\right)
\]
For every unit vector $v$, define
\[
A_v
:=
\left\{
x\in\mathbb R^d:
\langle x,v\rangle^2<a
\right\}
\]

By the definition of $a$,
\[
a
\le
F_v^{-1}\left(\frac{\kappa}{4}\right)
\qquad
\text{for every }v\in\mathbb S^{d-1}
\]
Hence
\[
\Pr_{x\sim\mathcal P}[x\in A_v]
=
\Pr_{x\sim\mathcal P}
\left[
\langle x,v\rangle^2<a
\right]
\le
\frac{\kappa}{4}
\]
Indeed, by the definition of the generalized inverse,
\[
\Pr\left[
\langle x,v\rangle^2
<
F_v^{-1}\left(\frac{\kappa}{4}\right)
\right]
\le
\frac{\kappa}{4}
\]

Now consider the class of sets
\[
\mathcal A
:=
\left\{
A_v:v\in\mathbb S^{d-1}
\right\}
\]
Each $A_v$ is a centered slab of the form
\[
A_v
=
\left\{
x:
-\sqrt a<\langle x,v\rangle<\sqrt a
\right\}
\]
We now use the following lemma.

\begin{lemma}[VC dimension of centered slabs]
\label{lem:vc-centered-slabs}
Fix $a>0$, and consider the class
\[
\mathcal A
=
\left\{
A_v
=
\left\{
x\in\mathbb R^d:
-\sqrt a<\langle x,v\rangle<\sqrt a
\right\}
:
v\in\mathbb S^{d-1}
\right\}
\]
Then
\[
\operatorname{VCdim}(\mathcal A)=O(d)
\]
More precisely,
\[
\operatorname{VCdim}(\mathcal A)<16(d+1)
\]
\end{lemma}

\begin{proof}
Let $\mathcal H_d$ denote the class of affine halfspaces in
$\mathbb R^d$. It is well known that
\[
\operatorname{VCdim}(\mathcal H_d)=d+1
\]

For every $v\in\mathbb S^{d-1}$, we can write
\[
A_v
=
\left\{
x:\langle x,v\rangle<\sqrt a
\right\}
\cap
\left\{
x:-\langle x,v\rangle<\sqrt a
\right\}
\]
Thus every set in $\mathcal A$ is the intersection of two affine
halfspaces. Therefore,
\[
\mathcal A
\subseteq
\mathcal H_d^{\cap 2}
\]
where
\[
\mathcal H_d^{\cap 2}
:=
\left\{
H_1\cap H_2:
H_1,H_2\in\mathcal H_d
\right\}
\]

Let $\Pi_{\mathcal C}(m)$ denote the growth function of a class
$\mathcal C$, namely the maximum number of distinct subsets that
$\mathcal C$ can induce on a set of $m$ points.

For any $m$-point set $S$, each set of the form
\[
(H_1\cap H_2)\cap S
\]
is determined by the pair
\[
(H_1\cap S,H_2\cap S)
\]
Hence
\[
\Pi_{\mathcal A}(m)
\le
\Pi_{\mathcal H_d^{\cap 2}}(m)
\le
\Pi_{\mathcal H_d}(m)^2
\]

Since $\operatorname{VCdim}(\mathcal H_d)=d+1$, Sauer's lemma~(Lemma \ref{lem:sauer}) gives,
for $m\ge d+1$,
\[
\Pi_{\mathcal H_d}(m)
\le
\left(
\frac{em}{d+1}
\right)^{d+1}
\]
Therefore,
\[
\Pi_{\mathcal A}(m)
\le
\left(
\frac{em}{d+1}
\right)^{2(d+1)}
\]

Suppose that $\mathcal A$ shatters a set of size $m$. Then
\[
2^m
\le
\Pi_{\mathcal A}(m)
\]
and consequently
\[
2^m
\le
\left(
\frac{em}{d+1}
\right)^{2(d+1)}
\]

Now take $m=16(d+1)$. Then the above inequality would imply
\[
2^{16(d+1)}
\le
(16e)^{2(d+1)}
\]
However, $2^{16}>(16e)^2$, and therefore $2^{16(d+1)} > (16e)^{2(d+1)}$. This is a contradiction.

Hence no set of size $16(d+1)$ can be shattered by $\mathcal A$, so
\[
\operatorname{VCdim}(\mathcal A)
<
16(d+1)
\]
In particular,
\[
\operatorname{VCdim}(\mathcal A)=O(d)
\]
\end{proof}

By Lemma~\ref{lem:vc-centered-slabs}, $\operatorname{VCdim}(\mathcal A)=O(d)$. Therefore, by setting $\nu = O(d)$ and $\epsilon = \kappa/4$ in Theorem~\ref{thm:vc-uniform-convergence}, there exists a
universal constant $C>0$ such that, if
\[
n
\ge
C
\frac{
d\log(1/\kappa)+\log(1/\beta)
}{
\kappa^2
}
\]
then, with probability at least $1-\beta$,
\[
\sup_{A\in\mathcal A}
\left|
\frac1n\sum_{i=1}^n\mathbf 1\{x_i\in A\}
-
\Pr_{x\sim\mathcal P}[x\in A]
\right|
<
\frac{\kappa}{4}
\]

Condition on this event. Then, simultaneously for every unit vector $v$,
\begin{align*}
\frac1n
\left|
\left\{
i\in[n]:
\langle x_i,v\rangle^2
<
\frac{2\tau_{\mathcal P,\kappa}}{d}
\right\}
\right|
&=
\frac1n
\sum_{i=1}^n
\mathbf 1\{x_i\in A_v\}
\\
&<
\Pr_{x\sim\mathcal P}[x\in A_v]
+
\frac{\kappa}{4}
\le
\frac{\kappa}{4}
+
\frac{\kappa}{4}
=
\frac{\kappa}{2}.
\end{align*}
Thus,
\[
\left|
\left\{
i\in[n]:
\langle x_i,v\rangle^2
<
\frac{2\tau_{\mathcal P,\kappa}}{d}
\right\}
\right|
<
\frac{\kappa n}{2}
\]
for every unit vector $v$, and hence $P$ is
$(\kappa,\tau_{\mathcal P,\kappa})$-good.
\end{proof}

\section{Reduction from uncentered to centered trimmed MVEE}
\label{apx:uncentered-centered-reduction}
\begin{lemma}
\label{lem:uncentered-centered-reduction}
Let $Y=\{y_1,\ldots,y_n\}\subseteq \mathbb{R}^d$
and define the lifted points
\[
x_i :=
\begin{pmatrix}
y_i\\
1
\end{pmatrix}
\in\mathbb{R}^{d+1},
\qquad i\in[n]
\]
Let \(X:=\{x_1,\ldots,x_n\}\).

Suppose that a centered MVEE algorithm applied to \(X\) outputs the ellipsoid
\[
\widehat E_X
=
\left\{
x\in\mathbb{R}^{d+1}:
x^\top \widehat M x\le 1
\right\},
\qquad
\widehat M\succ0
\]
which contains at least a \((1-\kappa)\)-fraction of the lifted points, denoted $S^\star$, and
whose volume is at most \(e^\gamma\) times that of the optimal centered
ellipsoid containing $X$.

Write
\[
\widehat M
=
\begin{pmatrix}
A & b\\
b^\top & c
\end{pmatrix}
\]
and define
\[
\widehat y:=-A^{-1}b,
\qquad
s:=c-b^\top A^{-1}b
\]
Then
\[
\widehat E_Y
:=
\left\{
y\in\mathbb{R}^d:
(y-\widehat y)^\top
\widehat M_Y
(y-\widehat y)
\le 1
\right\},
\qquad
\widehat M_Y:=\frac{A}{1-s}
\]
contains $S^\star$, the same subset of original points as \(\widehat E_X\) contains
among the lifted points. In particular,
\[
\left|
\left\{
i:y_i\in\widehat E_Y
\right\}
\right|
\ge (1-\kappa)n
\]
Moreover,
\[
\text{vol}_d(\widehat E_Y)
\le
e^\gamma
\text{OPT}^{\mathrm{unc}}(Y)
\]
where \(\text{OPT}^{\mathrm{unc}}(Y)\) is the minimum volume of an
uncentered ellipsoid containing \(Y\).
\end{lemma}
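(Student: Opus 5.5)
The plan is to slice the centered lifted ellipsoid with the affine hyperplane $\{x_{d+1}=1\}$, and to compare volumes through the Schur-complement parameter $s$, following the classical argument of~\cite{tod16}.

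\textbf{Step 1 (containment).} Since $\widehat M\succ0$, we have $A\succ0$, and the Schur complement $s=c-b^\top A^{-1}b$ is positive. Completing the square for $x=(y,1)$ gives
\[
x^\top\widehat Mx \;=\; y^\top Ay+2b^\top y+c \;=\; (y-\widehat y)^\top A(y-\widehat y)+s .
\]
Hence $x_i\in\widehat E_X$ if and only if $(y_i-\widehat y)^\top A(y_i-\widehat y)\le 1-s$.
\begin{itemize}
\item Because $\widehat E_X$ contains at least one lifted point, $s\le1$.
\item In the generic case $s<1$, dividing by $1-s$ shows that $y_i\in\widehat E_Y$ exactly for $i\in S^\star$. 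This gives the trimmed containment with the same index set.
\item The degenerate case $s=1$ forces every $y_i$ with $i\in S^\star$ to equal $\widehat y$. I would treat it separately, as a zero-volume limit, by noting that $\widehat M_Y$ is then undefined and the claim holds trivially or vacuously. I expect this bookkeeping to be the only delicate point.
\end{itemize}

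\textbf{Step 2 (volume of the slice versus the lifted ellipsoid).} Since $\det\widehat M=s\det A$, we get $\operatorname{vol}_{d+1}(\widehat E_X)=\omega_{d+1}(s\det A)^{-1/2}$. Likewise $\operatorname{vol}_d(\widehat E_Y)=\omega_d(1-s)^{d/2}(\det A)^{-1/2}$. Eliminating $\det A$,
\[
\operatorname{vol}_{d+1}(\widehat E_X)=\frac{\omega_{d+1}}{\omega_d}\,g(s)\,\operatorname{vol}_d(\widehat E_Y),
\qquad g(s):=s^{-1/2}(1-s)^{-d/2}.
\]
An elementary calculus step shows that $g$ on $(0,1)$ is minimized at $s^*=1/(d+1)$. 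Therefore
\[
\operatorname{vol}_d(\widehat E_Y)\le \frac{\omega_d}{\omega_{d+1}g(s^*)}\operatorname{vol}_{d+1}(\widehat E_X).
\]

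\textbf{Step 3 (centered optimum is at most the lifted uncentered optimum).} Let $E^*_Y=\{y:(y-z)^\top B(y-z)\le1\}$ be an optimal uncentered ellipsoid for $Y$. Define
\[
M'=\begin{pmatrix}(1-s^*)B & -(1-s^*)Bz\\ -(1-s^*)z^\top B & (1-s^*)z^\top Bz+s^*\end{pmatrix}.
\]
Then $x^\top M'x=(1-s^*)(y-z)^\top B(y-z)+s^*\le 1$ for every lifted point, so the centered ellipsoid of $M'$ contains all of $X$. By the identity of Step 2 applied to $M'$, its volume equals $\frac{\omega_{d+1}}{\omega_d}g(s^*)\,\mathrm{OPT}^{\mathrm{unc}}(Y)$. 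Consequently $\mathrm{OPT}^{\mathrm{cent}}(X)\le\frac{\omega_{d+1}}{\omega_d}g(s^*)\,\mathrm{OPT}^{\mathrm{unc}}(Y)$.

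\textbf{Step 4 (combine).} Chaining Step 2, the hypothesis $\operatorname{vol}(\widehat E_X)\le e^\gamma\,\mathrm{OPT}^{\mathrm{cent}}(X)$, and Step 3:
\[
\operatorname{vol}_d(\widehat E_Y)\le\frac{\omega_d}{\omega_{d+1}g(s^*)}\,e^\gamma\,\mathrm{OPT}^{\mathrm{cent}}(X)\le e^\gamma\,\mathrm{OPT}^{\mathrm{unc}}(Y).
\]
The constants $\omega_d,\omega_{d+1},g(s^*)$ cancel. This cancellation relies on comparing against the untrimmed optimum on both sides; the benchmark $\mathrm{OPT}^{\mathrm{cent}}(X)$ in the hypothesis is indeed the untrimmed one, matching the statement.
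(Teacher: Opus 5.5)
Your proof is correct and follows the paper's argument: completing the square gives the containment, and the Schur identity $\det\widehat M=s\det A$ together with the extremum of $s(1-s)^d$ at $s=1/(d+1)$ gives the volume comparison. The paper cites the equality $\det(M^\star)=C_d\det(M_Y^\star)$ from the standard lifting reduction, whereas your Step 3 builds the lifted ellipsoid $M'$ explicitly to obtain the one inequality $\mathrm{OPT}^{\mathrm{cen}}(X)\le\frac{\omega_{d+1}}{\omega_d}g(s^*)\,\mathrm{OPT}^{\mathrm{unc}}(Y)$ that is actually needed; you also rightly note that containment of a lifted point only gives $s\le1$, while the paper asserts $s<1$.
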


\begin{proof}
For every \(y\in\mathbb{R}^d\),
\begin{align*}
\begin{pmatrix}
y\\
1
\end{pmatrix}^{\!\top}
\widehat M
\begin{pmatrix}
y\\
1
\end{pmatrix}
&=
y^\top Ay+2b^\top y+c\\
&=
(y-\widehat y)^\top A(y-\widehat y)
+
c-b^\top A^{-1}b\\
&=
(y-\widehat y)^\top A(y-\widehat y)+s
\end{align*}
Hence
\[
\begin{pmatrix}
y\\
1
\end{pmatrix}
\in \widehat E_X
\quad\Longleftrightarrow\quad
(y-\widehat y)^\top A(y-\widehat y)\le 1-s
\]
Since \(\widehat M\succ0\), its Schur complement satisfies \(s>0\);
moreover, since \(\widehat E_X\) contains at least one lifted point,
we have \(s<1\). Therefore
\[
\widehat M_Y=\frac{A}{1-s}
\]
is well defined and
\[
\begin{pmatrix}
y\\
1
\end{pmatrix}
\in \widehat E_X
\quad\Longleftrightarrow\quad
(y-\widehat y)^\top
\widehat M_Y
(y-\widehat y)
\le1
\]
Consequently, for every \(i\in[n]\),
\[
x_i\in\widehat E_X
\quad\Longleftrightarrow\quad
y_i\in\widehat E_Y
\]
and thus
\[
\left|
\left\{
i:y_i\in\widehat E_Y
\right\}
\right|
=
\left|
\left\{
i:x_i\in\widehat E_X
\right\}
\right|
\ge(1-\kappa)n
\]

It remains to establish the approximation guarantee. By the Schur
complement identity,
\[
\det(\widehat M)=\det(A)\cdot s
\]
whereas
\[
\det(\widehat M_Y)
=
\frac{\det(A)}{(1-s)^d}
\]
Therefore
\[
\frac{\det(\widehat M)}
{\det(\widehat M_Y)}
=
s(1-s)^d
\]
The function
\[
f(s)=s(1-s)^d,
\qquad s\in(0,1)
\]
is maximized at $s=\frac{1}{d+1}$,
and hence
\[
s(1-s)^d
\le
\frac{d^d}{(d+1)^{d+1}}
\]
Denoting $C_d:=\frac{d^d}{(d+1)^{d+1}}$
we obtain
\[
\det(\widehat M_Y)
\ge
\frac{\det(\widehat M)}{C_d}
\tag{1}
\]

Now fix any subset \(S\subseteq[n]\). Let \(M_S^\star\) be the matrix
of the minimum-volume centered ellipsoid of the lifted points
\(\{x_i:i\in S\}\), written as
\[
\{x:x^\top M_S^\star x\le1\}
\]
and let \(M_{Y,S}^\star\) be the matrix of the minimum-volume
uncentered ellipsoid of the original points
\(\{y_i:i\in S\}\), written as
\[
\{y:(y-y_S^\star)^\top M_{Y,S}^\star
(y-y_S^\star)\le1\}
\]
By the standard lifting reduction from uncentered to centered MVEE,
the Schur complement parameter of the optimal lifted solution is
\[
s^\star=\frac{1}{d+1}
\]
and its leading block satisfies
\[
A^\star=\frac{d}{d+1}M_{Y,S}^\star
\]
Consequently,
\[
\det(M_S^\star)
=
\frac{d^d}{(d+1)^{d+1}}
\det(M_{Y,S}^\star)
=
C_d\det(M_{Y,S}^\star)
\tag{2}
\]

Since the relation \((2)\) holds for every subset \(S\), it also holds
for $S^\star$. Thus, if \(M^\star\) and
\(M_{Y}^\star\) denote optimal matrices for the lifted centered
and original uncentered trimmed problems, respectively, then
\[
\det(M^\star)
=
C_d\det(M_{Y}^\star)
\tag{3}
\]

Using \((1)\) and \((3)\), we obtain
\begin{align*}
\frac{\text{vol}_d(\widehat E_Y)}
{\text{OPT}^{\mathrm{unc}}(Y)}
&\stackrel{(\ast)}=
\sqrt{
\frac{\det(M_{Y}^\star)}
{\det(\widehat M_Y)}
}\\
&\le
\sqrt{
\frac{\det(M^\star)/C_d}
{\det(\widehat M)/C_d}
}\\
&=
\sqrt{
\frac{\det(M^\star)}
{\det(\widehat M)}
} \stackrel{(\ast)} = \frac{\text{vol}_{d+1}(\widehat E_X)}
{\text{OPT}^{\mathrm{cen}}(X)}
\le e^\gamma
\end{align*}
where the inequalities marked with $(\ast)$ follow from the fact that an ellipsoid $\{x: ~ x^T M x \leq 1\}$ has volume proportional to
\(\det(M)^{-1/2}\).
Therefore
\[
\text{vol}_d(\widehat E_Y)
\le
e^\gamma
\text{OPT}^{\mathrm{unc}}(Y)
\]
as claimed.
\end{proof}

\end{document}